\def\sspan{\mathrm{span}}
\author{Shuai Guo}
\address{School of Mathematical Sciences, 
     Peking University, 
     Beijing, 100871, China}
\email{guoshuai@math.pku.edu.cn}
\author{Qingsheng Zhang}
\address{School of Sciences,
	Great Bay University,
     Great Bay Institute for Advanced Study,
     Dongguan, 523000, China}
\email{zhangqingsheng@gbu.edu.cn}
\newcommand{\corr}[1]{\langle\!\langle {#1} \rangle\!\rangle}
\newcommand{\<}{\langle}
\renewcommand{\>}{\rangle}
\def\str{\mathop{\rm str}}
\def\Gdd{\mathrm{Gdd}}
\def\eGdd{\mathrm{eGdd}}
\newcolumntype{C}[1]{>{\centering\arraybackslash$}m{#1}<{$}}
\newlength{\mycolwd}                                         
\newlength{\mycolwdm}                                         
\newlength{\mycolwda}                                         
\newlength{\mycolwdb}                                         
\newlength{\mycolwdc}                                         
\newlength{\mycolwdd}                                         
\newlength{\mycolwddd}                                         
\newlength{\mycolwddc}                                         
\newcommand{\Frob}{H}
\newcommand{\anc}{\tau}
\newcommand{\cc}{c_{1}}
\newcommand{\cdes}{c_{\dvac}}
\newcommand{\virm}{m_{\nu}}
\newcommand{\vac}{\mathbf{v}}
\newcommand{\dvac}{\nu}
\newcommand{\dil}{\mathbf{u}}
\newcommand{\id}{ \hspace{-2pt}\text{I}}
\def\End{\mathrm{End}}
\def\beq{\begin{equation}}
\def\eeq{\end{equation}}
\newcommand{\be}{\begin{equation*}}
\newcommand{\ee}{\end{equation*}}
\DeclareMathOperator{\Cont}{Cont}
\DeclareMathOperator{\ad}{ad}
\DeclareMathOperator{\Res}{Res}
\DeclareMathOperator{\Aut}{Aut}
\DeclareMathOperator{\ev}{ev}
\newtheorem{dummy}{}[section]
\newtheorem{lemma}[dummy]{Lemma}
\newtheorem{proposition}[dummy]{Proposition}
\newtheorem{theorem}[dummy]{Theorem}
\newtheorem{corollary}[dummy]{Corollary}
\theoremstyle{definition}
\newtheorem{definition}[dummy]{Definition}
\newtheorem{remark}[dummy]{Remark}
\newcommand{\M}{\overline{\mathcal M}}
\newcommand{\tr}{\mathrm{tr}}
\newcommand{\pd}{\partial}
\newcommand{\cF}{\mathcal{F}}
\newcommand{\cA}{\mathcal{A}}
\newcommand{\cD}{\mathcal{D}}
\newcommand{\cL}{\mathcal{L}}
\newcommand{\E}{\mathcal{E}}
\newcommand{\Mbar}{\overline{\mathcal M}}
\newenvironment{manualtheorem}[1]{%
	\manualtheoreminner
}{\endmanualtheoreminner}
\newenvironment{manualproposition}[1]{%
    \manualpropositioninner
}{\endmanualpropositioninner}
\newenvironment{manualconjecture}[1]{%
    \manualconjectureinner
}{\endmanualconjectureinner}
\def\bs{\mathbf s}
\def\bs{\mathbf s}
\def\cH{\mathcal{H}}
\def\cG{\mathcal{G}}
\begin{document}
\title[Virasoro constraints for CohFT]
{Cohomological  Field Theory with vacuum and its Virasoro constraints}

\maketitle

\raggedbottom

\begin{abstract}
  This is the first part of a series of papers on {\it Virasoro constraints for Cohomological  Field Theory (CohFT)}. 
For a CohFT with vacuum, we introduce the concepts of \(S\)-calibration and \(\dvac\)-calibration. Then, we define the (formal) total descendent potential corresponding to a given calibration. Finally, we  introduce an additional structure, namely homogeneity, for both the CohFT and the calibrations.

After these preliminary introductions, we propose two crucial conjectures: (1) the ancestor version of the Virasoro conjecture for the homogeneous CohFT with vacuum; and (2) the generalized Virasoro conjecture for the (formal) total descendent potential of a calibrated homogeneous CohFT.
We verify the genus-0 part of these conjectures and deduce a simplified form of the genus-1 part of these conjectures for arbitrary CohFTs. Additionally, we prove the full conjectures for semisimple CohFTs.

As applications, our results yield the Virasoro constraints for the deformed negative \(r\)-spin theory. Moreover, by applying the Virasoro constraints, we discover an extension of Grothendieck's dessins d'enfants theory which is widely  studied in the literature.
\end{abstract}

\setcounter{section}{-1}
\setcounter{tocdepth}{1}
\tableofcontents

\section{Introduction}
The Virasoro conjecture originally proposed  by Eguchi--Hori--Xiong and Katz~\cite{EHX97,EJX98} and generalized by Dubrovin--Zhang~\cite{DZ99}
is one of the most important (conjectural) structures in enumerative geometries such like the Gromov--Witten (GW) theory and the Fan--Jarvis--Ruan--Witten (FJRW) theory. 
It predicts that the generating series $\cD({\bf t};\hbar)$, called the total descendent potential, of the enumerative geometry is annihilated by a sequence of differential operators $\{L_m\}_{m\geq -1}$
satisfying the commutation relation $[L_m,L_n]=(m-n)L_{m+n}$ of (half of) Virasoro algebra.

For the simplest case, the GW theory of a point, the Virasoro conjecture is equivalent~\cite{DVV91} to the celebrated Witten conjecture~\cite{Wit91} first proved by Kontsevich~\cite{Kont92}.
Generally, if the theory satisfies the so called semisimplicity, the Virasoro conjecture is proved in lower genera~\cite{DZ98,DZ99,Get99,LT98,Liu01,Liu02} by Liu--Tian, Liu, Dubrovin--Zhang and Getzler,
and completely solved by Givental and Teleman's works~\cite{Giv01a,Tel12}.
Without the assumption of semisimplicity, the genus-0 part of the Virasoro constraints is proved by Liu--Tian~\cite{LT98}, see also~\cite{Liu01,Liu02,Lin19} for some lower genus explorations.
In~\cite{OP06c}, Okounkov--Pandharipande proved the Virasoro conjecture for the GW theory of curves, see also~\cite{HS21} for some analogous results for quantum singularity theories.

In~\cite{KM94}, Kontsevich and Manin introduced the concept of CohFT to capture the axiomatic properties of the GW theory (see \S \ref{sec:CohFT} for definitions). 
Later, it was discovered that the CohFT properties hold generally in various kinds of enumerative geometries,
 and the Virasoro conjecture can be naturally generalized to the homogeneous CohFTs with flat unit. 
Despite the progress on proving the Virasoro conjectures, we find that the Virasoro constraints are not fully established in various theories such as CohFTs without flat unit and the Virasoro constraints for the ancestor generating series. 
In this paper, we will study the Virasoro constraints in the following directions:
\begin{enumerate}
	\item {\bf Ancestor Virasoro conjecture}: We propose an ancestor version of the Virasoro conjecture for any homogeneous CohFT with vacuum. Compared to the original (descendent) Virasoro conjecture, we believe that the ancestor version is more general and should   hold true for the entire set of Virasoro operators.
	 We  prove  its genus-0 part and deduce simplified equations for the genus-1 part. 
	 We also prove the ancestor  Virasoro constraints for semisimple cases,  essentially following the work of Givental and Teleman. 
	\item {\bf Generalized Virasoro conjecture}: We generalize the (descendent) Virasoro conjecture to the cases of homogeneous CohFTs calibrated by a homogeneous $S$-matrix and $\dvac$-vector. Such generalized Virasoro constraints has been studied for specific models in the literature, but the precise form of the Virasoro conjecture is not established in general. 
	We prove genus-0 part of the conjecture, generalizing works of Liu--Tian~\cite{LT98} and Givental~\cite{Giv04}. We also deduce simplified equations for genus-1 and prove the conjecture for semisimple cases by establishing its relation with the ancestor Virasoro conjecture.
\end{enumerate}

\subsection{Ancestor Virasoro conjecture}
To describe the ancestor Virasoro constraints for CohFTs, we briefly introduce several notations. The details can be found in \S \ref{sec:CohFT}.

\begin{itemize}
	\item The CohFT $\Omega$ is defined over a finite dimensional vector space, called the state space.
	\item The state space  $\Frob$ is equipped  with a non-degenerate (super-) symmetric bilinear form $\eta$. Let \(\{\phi_{a}\}\) be a basis of \(\Frob\), \(\{\phi^a\}\) its dual w.r.t. \(\eta\), and define \(\eta^{ab}=\eta(\phi^b,\phi^a)\). 
	\item The genus zero part of the CohFT defines a family of  associative (super-) commutative products  $*_{\tau}$  over $\Frob$, called  the {\it quantum product},  for $\tau=\sum_a \tau^a \phi_a \in \Frob$.
	\item The vector field \({\bf 1}\) of unities of the family of quantum products is called the {\it flat unit} when it is covariantly flat with respect to \(\eta\).
\end{itemize}

In arbitrary CohFTs, the flat unit is extended to the vacuum vector field \(\vac(z)\), satisfying \(\vac(z) = {\bf 1}\) when it is flat. The unit axiom introduced by Kontsevich--Manin can be generalized to the vacuum axiom (Definition~\ref{vacuumaxiom}).
Throughout this paper, we focus on CohFTs that satisfy the vacuum axiom and refer to them as CohFTs with vacuum\footnote{For semisimple CohFTs, Teleman~\cite{Tel12} has proved that the vacuum axiom always holds for classes with insertions.}.

Furthermore, we introduce an additional mathematical structure for CohFTs, namely the homogeneity property  (Definition~\ref{homogeneouscohft}). This property is defined in terms of two key concepts: the conformal dimension \(\delta\) and the Euler vector field \(E=\sum_{a}(1 - d_a)\tau^a\partial_{\tau^a}+\sum_{a}r^a\partial_{\tau^a}\), where \(r^a\) are constants. Based on these concepts, we can define the grading operator \(\mu\in\End(\Frob)\) such that \(\mu(\phi_a)=(d_a-\frac{\delta}{2})\phi_a\). For further details, we direct the reader to \S \ref{sec:hom-condition}.

We consider the generating series \(\cA^{\tau}({\bf s};\hbar)\) of a given CohFT, with formal variables \({\bf s}(z)=\sum s^a_k\phi_a z^k\in\Frob[[z]]\), and this generating series is known as the total ancestor potential.
Subsequently, we introduce a set of quadratic differential operators \(L^{\anc}_{m}\), which can act on \(\cA^{\tau}({\bf s};\hbar)\).
\begin{definition}
The ancestor Virasoro operators $L^{\anc}_{m}$, $m\geq -1$, are defined as follows:
\begin{align}
		L^{\anc}_m=&\, \frac{1}{2\hbar^2}\eta(\E^{m+1}s_0,  s_0)
		-\frac{1}{4} \sum_{i+j=m} \tr(\E^i(\mu+\tfrac{1}{2})\E^j(\mu-\tfrac{1}{2}))
		+\sum_{k \geq 0}\sum_{l=0}^{m+k}\sum_{a,b=0}^{N-1}(C^{\tau}_{m})_{k-1,a}^{l,b} \tilde s_k^a  \frac{\pd }{\pd s_{l}^{b}}    \nonumber \\
		&\, +\frac{\hbar^2}{2}\sum_{k=0}^{m-1}\sum_{l=0}^{m-k-1}\sum_{a,b,c=0}^{N-1}(-1)^{k+1}(C^{\tau}_{m})_{-k-2,b}^{l,c}\eta^{ab}\frac{\pd^2 }{\pd s_{k}^{a}\pd s_{l}^{c}},\label{def:Lmtau}
\end{align}
where $\E=E*_{\tau}$, $\tilde {\bf s}(z)={\bf s}(z)-z\vac^{\tau}(z)$, 
and for $m\geq -1$, $k,l\in \mathbb Z$, $a,b\in\{0,\cdots,N-1\}$, terms $(C^{\tau}_{m})_{k,a}^{l,b}$ are polynomials of entries of $\E$ defined by
$$\textstyle
\big(\E+  ( \mu+\tfrac{3}{2})z\!+z^2\partial_z\big)^{m+1}\phi_a z^k
=\sum_{l=k}^{k+m+1}\sum_{b=0}^{N-1}(C^{\tau}_{m})_{k,a}^{l,b}\phi_bz^{l}.
$$
\end{definition}
It's easy to verify these operators satisfy the Virasoro commutation relation:
$$
[L^{\tau}_m,L^{\anc}_n]=(m-n) L^{\anc}_{m+n},\qquad  m,n \geq -1.
$$
\begin{manualconjecture}{1}[Ancestor Virasoro conjecture]
For any homogeneous CohFT with vacuum, 
its total ancestor potential $\cA^{\tau}({\bf s};\hbar)$   satisfies the following ancestor Virasoro constraints:
$$
	L^{\anc}_m\cA^{\tau}({\bf s};\hbar)=0,\qquad m\geq -1.
$$
\end{manualconjecture}

\begin{remark}
 {More precisely, we can just require that the vacuum axiom holds for classes with insertions. It is straightforward to observe that the ancestor Virasoro constraints are equivalent for two CohFTs that differ only by classes without insertions. In particular, for semisimple cases, we do not need to assume the vacuum axiom.}
\end{remark}

The ancestor Virasoro constraints admit a genus expansion form as follows:
\beq\label{def:Lgm-A}\textstyle
(L^{\anc}_m\cA^{\tau}({\bf s};\hbar))/\cA^{\tau}({\bf s};\hbar)=\sum_{g\geq 0}\hbar^{2g-2}\mathscr L^{\anc}_{g,m}({\bf s}).
\eeq
We refer to \(\mathscr{L}^{\anc}_{g,m}({\bf t}) = 0\) as the \textit{genus-\(g\) \(L^{\anc}_m\)-constraint}. For fixed \(m\) and all \(g\), they're called the \(\textit{L}^{\anc}_m\)-constraint; for fixed \(g\) and all \(m\), they're called the \textit{genus-\(g\) ancestor Virasoro constraints}. 
We will see that the $L^{\anc}_{-1}$-constraint follows from the vacuum axiom (Proposition~\ref{prop:string-dilaton}).
Our main result is the following theorem:

\begin{manualtheorem}{1}[=Theorem~\ref{thm:anc-vira}]\label{thm:vira-anc-intro}
	(1). The genus-0 ancestor Virasoro conjecture always holds.

(2). For each $m\geq 0$, the genus-1 $L_m^{\anc}$-constraint is equivalent to the following equation: 
	\beq\label{eqn:vira-genusone-0}
	\<E^{m+1}\>^{\tau}_{1,1}
	=-\frac{1}{4}\sum_{i+j=m}\str(\E^i\mu \E^j\mu)
	-\frac{1}{24}\sum_{i+j=m}\str((\E^i\mu\E^j{\bf 1})*_{\tau}) 
+\frac{1}{24}\str((\E^{m}(\mu+\tfrac{\delta}{2}){\bf 1})*_{\tau}),
	\eeq
where `$\str$' stands for the super-trace defined by $\str(A)=\sum_a\eta(\phi_a,A\phi^a)$ for $A\in\End(\Frob)$.

	(3). The ancestor Virasoro constraints hold for semisimple homogeneous CohFTs.
\end{manualtheorem}

\subsection{Generalized Virasoro conjecture}
The {\it generalized Virasoro conjecture}  extends the original Virasoro conjecture for the descendent GW theory to an arbitrary  CohFT with vacuum. To define  the (formal) {\it descendent potential} (Definition~\ref{def:formaldes})
of a given CohFT, we need to pick an  $S$-calibration and $\dvac$-calibration   (see \S \ref{sec:calibration}).
For a specific geometric theory, we expect that there always exist certain calibrations such that the formal  descendent potential  introduced   matches  the   naturally defined geometric one, as it does in GW theory.

Then we select the  $S$- and $\dvac$-calibration that meet a specific homogeneity condition,  as  the CohFT does. The homogeneity condition is defined by   an additional operator \(\rho\in\End(\Frob)\), which serves as the descendent counterpart of \(\E\) \footnote{In general, the operator \(\rho\) should be generalized to an operator-valued polynomial \(\rho(z)\in \End(\Frob)[z^{-1}]\). All the results for \(\rho\) have a parallel version for \(\rho(z)\). See Section \ref{subsec:hom-des} for details.}.

We introduce the generalized Virasoro constraints as follows. 
Recall the flat unit ${\bf 1}$ is involved in the standard Virasoro operators via the dilaton shift $\tilde {\bf t}(z)={\bf t}(z)-{\bf 1}z$ (\cite{EHX97,EJX98}).
For an arbitrary CohFT, we generalize the dilaton shift by $\tilde{\bf t}(z)={\bf t}(z)-\tau_0-z\dil(z)$, 
where $\tau_0$ is a point on $\Frob$ and $\dil(z)\in \Frob[z]$ is a constant vector valued polynomial in $z$ \footnote{It is determined by the $S$-calibration and $\dvac$-calibration, see equation~\eqref{eqn:dvac-u}}.

\begin{definition}
The Virasoro operators $L_{m}$, $m\geq -1$, are defined as follows: 
	\begin{align}
		L_m=&\, \frac{1}{2\hbar^2}\, \eta(\rho^{m+1} \tilde t_0, \tilde t_0)
		-\delta_{m,0}\cdot \frac{1}{4}\, \str\Big(\mu^2-\frac{1}{4}\Big)
		+\sum_{k \geq 0}\sum_{l=0}^{m+k}\sum_{a,b=0}^{N-1}  (C_{m})_{k-1,a}^{l,b} \tilde t_k^a \frac{\pd }{\pd t_{l}^{b}}    \nonumber \\
		&\,+\frac{\hbar^2}{2}\sum_{k=0}^{m-1}\sum_{l=0}^{m-k-1}\sum_{a,b,c=0}^{N-1}(-1)^{k+1}(C_{m})_{-k-2,b}^{l,c}\eta^{ab}\frac{\pd^2 }{\pd t_{k}^{a}\pd t_{l}^{c}},\label{def:Lm}
	\end{align}
	where for $m\geq -1$, $k,l\in \mathbb Z$, $a,b\in\{0,\cdots,N-1\}$, the $(C_{m})_{k,a}^{l,b}$ are constants defined by
	$$
	\big(\rho+  ( \mu+\tfrac{3}{2})z\!+z^2\partial_z\big)^{m+1}\phi_a z^k
	=\textstyle \sum_{l=k}^{k+m+1}\sum_{b=0}^{N-1}(C_{m})_{k,a}^{l,b}\phi_bz^{l}.
	$$
\end{definition}
It's easy to verify these operators satisfy the Virasoro commutation relation:
$$
[L_m,L_n]=(m-n) L_{m+n} ,\qquad   m,n \geq -1.
$$

The following two constants $m_\nu$ and $c_\nu$ are needed in the generalized Virasoro conjecture:
\begin{definition} \label{defnofm0}
	For an \(S\)- and \(\dvac\)-calibrated homogeneous CohFT, we say it has the {\it Virasoro-index} \(\virm \in \mathbb{Z}\) and  {\it Virasoro-constant} \(\cdes\)   under the following conditions:
	\begin{enumerate}
		\item If the $\dvac$-vector \(\dvac^{\tau}(z)\) is a polynomial in \(z\), then we define \(\virm = - 1\) and \(\cdes = 0\).
		\item If the $\dvac$-vector \(\dvac^{\tau}(z)\) is not a polynomial in $z$, but the conformal dimension \(\delta\) satisfies \(\frac{\delta - 3}{2}\in \mathbb{Z}_{\geq 0}\), then we define \(\virm=\frac{\delta - 3}{2}\) and $\cdes$ as the   coefficient of $ z^{1 - \delta} $ in	\begin{equation}\label{def:cm}
		 \tfrac{1}{2}\cdot (-1)^{\virm}\cdot(\virm!)^2\cdot\eta(E,\dvac^{\tau}(z)),
		\end{equation}
		We will prove in Lemma \ref{lem:cm}  that \(\cdes\) is a constant (independent of $\tau$).
	\end{enumerate}
	 
\end{definition}

\begin{manualconjecture}{2}[Generalized Virasoro conjecture]\label{conj:des-vira}
When calibrated by a homogeneous $S$-matrix and $\dvac$-vector, any homogeneous CohFT with the Virasoro-index $\virm$ has the total descendent potential $\cD({\bf t};\hbar)$ that satisfies the following generalized Virasoro constraints:
\beq\label{eqn:vira-D}
\big(L_m+\tfrac{\delta_{m,2 \virm}}{ \hbar^{2}}\cdot\cdes\big)\cD({\bf t};\hbar)=0, \qquad  m\geq \virm.
\eeq
\end{manualconjecture}

Similarly to the ancestor version, the generalized Virasoro constraints also admit a genus expansion form:
\beq\label{def:Lgm-D}\textstyle
(L_m\cD({\bf t};\hbar))/\cD({\bf t};\hbar)=\sum_{g\geq 0}\hbar^{2g-2}\mathscr L_{g,m}({\bf t}).
\eeq
We have the concepts the {\it genus-$g$ $L_m$-constraint}, the {\it genus-$g$ (generalized) Virasoro constraints} and {\it $L_{m}$-constraint} in parallel.

For $m\geq \virm$, we will prove that the $L_m$-constraint is equivalent to the $L_m^{\anc}$-constraint in Proposition~\ref{prop:anc-des-vira}.
In particular, if $\virm=-1$, the $L_{-1}$-constraint always holds.
We have the following results for the generalized Virasoro conjecture.
\begin{manualtheorem}{2}[$=$Theorem~\ref{thm:vira-des-0} $+$ Theorem~\ref{thm:vira-des-g} ]\label{thm:vira-des-intro}
(1). For any homogeneous CohFT with homogeneous $S$- and $\dvac$-calibrations and a Virasoro index $\virm$, the genus-$0$ generalized Virasoro constraints $\mathscr L_{0,m}({\bf t}) + {\delta_{m,2 \virm}} \! \cdot   c_\nu =0$ hold for $m \geq \virm$. 

(2). For each $m\geq \max\{\virm,0\}$, the genus-1 $L_m$-constraint is equivalent to equation~\eqref{eqn:vira-genusone-0}.

(3). The full generalized Virasoro conjecture holds for any semisimple homogeneous CohFT with homogeneous $S$- and $\dvac$-calibrations and a Virasoro index $\virm$.
\end{manualtheorem}
\begin{remark}
When the unit ${\bf 1}$ is flat, we have $\mu({\bf 1})=-\frac{\delta}{2}\cdot {\bf 1}$, and the simplified genus-1 Virasoro constraints (equation~\eqref{eqn:vira-genusone-0}) are equivalent to the ones in~\cite[Theorem 4.4]{Liu01} (for GW theories) and~\cite[\S 3.10.7]{DZ01} (for semisimple cases).
\end{remark}

\subsection{Applications}
We present two applications of our main theorems.

First, we investigate an enumerative geometric theory named the $\epsilon$-deformed negative \(r\)-spin theory, whose descendent potential is denoted by \(\cD^{r,\epsilon}({\bf t};\hbar)\)   (see \S \ref{sec:negative-r-spin} for the definition). This theory has the Virasoro index $m_\dvac=0$. By directly applying Theorem~\ref{thm:vira-des-intro}, we derive the Virasoro constraints for this theory.
\begin{manualproposition}{1}[$=$Proposition~\ref{prop:vira-negative-rspin}]
	\label{prop:vira-negative-r}
	The generalized Virasoro conjecture holds for the deformed negative r-spin theory. 
Namely, $L^{r,\epsilon}_m\mathcal D^{r,\epsilon}({\bf t};\hbar)=0$ for $m\geq  0$.
\end{manualproposition}

Second, we identify a two-dimensional semisimple homogeneous CohFT $\Omega^{\eGdd}$, calibrated with a specific \(S\)-matrix and \(\dvac\)-vector.   Its descendent correlators extend the counting invariants of Grothendieck's dessins d'enfants, a topic that is widely studied in the literature \cite{KZ15, Zhou19}.
To be precise, let \(\cD^{\eGdd}(\epsilon_1,\epsilon_2,{\bf t};\hbar)\)  denote the total descendent potential of the calibrated $\Omega^{\eGdd}$,  and let $Z^{\Gdd}(u,v,{\bf p};\hbar)$ denote the exponential of the generating function of Grothendieck's dessins d'enfants.
Using the generalized Virasoro constraints,   we prove:
\begin{manualproposition} {2}[$=$Proposition~\ref{prop:Gdd-eGdd}]
By taking $t^1_k=0$ and $t_k^0=k!\, p_{k+1}$ for $k\geq 0$,
we have
$$
\cD^{\eGdd}(u,v,{\bf t};\hbar)/\cD^{\eGdd}(u,v, {\bf 0};\hbar)=Z^{\Gdd}(u,v,{\bf p};\hbar).
$$
\end{manualproposition}
Since \(\cD^{\eGdd}(\epsilon_1,\epsilon_2,{\bf t};\hbar)\) encompasses the generating series of Grothendieck's dessins d'enfants as a sub-series, we term \(\Omega^{\eGdd}\) the  CohFT  of extended Grothendieck's dessins d'enfants. 
 \subsection{Plan of the paper} This paper is organized as follows. 
We first introduce the CohFTs with vacuum in \S \ref{sec:CohFT} and then define their descendent potentials in \S \ref{sec:CohFT-des}.
We prove our main results on Virasoro constraints in \S \ref{sec:vira-proof} and give two applications of the generalized Virasoro constraints in \S \ref{sec:negative-r-spin} and \S \ref{sec:vira-dessin} respectively.

\textbf{Acknowledgments.} 
The authors want to thank Ce Ji for reading this paper carefully.
The second author would like to thank Jian Zhou and Di Yang for their encouragements and valuable discussions.
The work was supported in part by National Key Research and Development Program of China  No. 2023YFA1009802
and NSFC 12225101.

\section{CohFT with vacuum}
\label{sec:CohFT}
In this section, we begin by reviewing the preliminaries of GW  theories. Subsequently, we extend the concepts from GW theory to  CohFTs  with   vacuum and establish the necessary properties.
Next, we consider the generalized Frobenius manifold that naturally arises from these CohFTs. We introduce the homogeneity condition for both the Frobenius manifold and the underlying CohFT.
In the end, we recall the Givental–Teleman reconstruction theorem for semisimple homogeneous CohFTs with a vacuum.

\subsection{Gromov--Witten theory}
Let $X$ be a non-singular projective variety of complex dimension $\delta$, 
we call the cohomology $H^*(X,\mathbb C)$ of $X$ the {\it state space} and denote it by $\Frob$.
There is a natural bilinear form $\eta$ on $\Frob$ defined by the Poincar\'e pairing,
and we fix a homogeneous basis $\{\phi_a\}_{a=0}^{N-1}$ of $\Frob$ and denote by $\{\phi^a\}_{a=0}^{N-1}$ its dual basis with respect to $\eta$, i.e., $\eta(\phi^a,\phi_b)=\delta^a_b$. 
In the follows, we also call $\Frob$ the {\it small phase space} and the space of $\Frob$-valued power series $\Frob[[z]]$ the {\it big phase space}.

Let $\Mbar_{g,n}(X,\beta)$ be the moduli space of  degree $\beta\in H_2(X,\mathbb Z)$ stable maps from a genus $g$ curve with $n$ marked points to the target $X$.
It was proved~\cite{LiT98, BF97} that this moduli space is compact and can be equipped with a virtual fundamental class $[\Mbar_{g,n}(X,\beta)]^{\rm vir}$ of complex dimension $(3-\delta)(g-1)+n+\int_{\beta}c_1(X)$.

The genus-$g$ descendent potential $\cF_g({\bf t})$ of the GW theory of $X$ is defined by
$$
\cF_g({\bf t}):=\sum_{n\geq 0,\beta\in H_2(X,\mathbb Z)}\frac{Q^{\beta}}{n!}
\int_{[\Mbar_{g,n}(X,\beta)]^{\rm vir}}\prod_{i=1}^{n}\bigg(\sum_{k_i\geq 0}\ev_i^{*}(t_{k_i})\psi_i^{k_j}\bigg),
$$
where $t_k=t_{k}^a\phi_a\in\Frob$, $\psi_i$ is the $1$-st Chern class of the universal cotangent line bundle over $\Mbar_{g,n}(X,\beta)$ corresponding to the $i$-th marked point, $\ev_i^*$ is the pull-back by the evaluation map $\ev_i:\Mbar_{g,n}(X,\beta)\to X$ at $i$-th marked point
and $Q^{\beta}=\prod_i Q_{i}^{d_i}$ is a monomial in the Novikov ring $\mathbb A:=\mathbb C[[Q ]]$ with $\beta$ having expansion $\sum_{i=1}^{b} d_i\beta_i$ in the basis $\{\beta_i\}_{i=1}^{b}$ of $H_2(X,\mathbb Z)$.
The total descendent potential $\cD({\bf t};\hbar)$ is defined by
$$
\cD({\bf t};\hbar):=e^{\sum_{g\geq 0}\hbar^{2g-2}\cF_g({\bf t})}.
$$

Let $\Mbar_{g,n}$ be the  moduli space of stable curves of genus $g$ with $n$ marked points. Consider the forgetful map 
$f: \Mbar_{g,n+m}(X,\beta)\to \Mbar_{g,n+m}\to \Mbar_{g,n}$ and
denote by $\bar\psi_i:=f^{*}(\psi_i)$   the pull back of the class $\psi_i$ on $\Mbar_{g,n}$. The genus-$g$ ancestor potential $\bar\cF^{\tau}_g({\bf s})$ of the GW theory of $X$ is defined by:
	\beq\label{def:anc-Fg-GW}
	\bar\cF^{\tau}_g({\bf s})
	:=\sum_{n,m\geq 0,\beta\in H_2(X)}\frac{Q^{\beta}}{n!\, m!}\int_{[\Mbar_{g,n+m}(X,\beta)]^{\rm vir}}\prod_{i=1}^{n}\bigg(\sum_{k_i\geq 0}\ev_i^{*}(s_{k_i})\bar\psi_i^{k_i}\bigg)\prod_{i=n+1}^{n+m}\ev_i^{*}(\tau),
	\eeq
	where $\tau=\tau^a\phi_a\in\Frob$ and $s_k=s_k^a\phi_a\in\Frob$. 
The total ancestor potential $\cA^{\tau}({\bf s};\hbar)$ is defined by
$$
\cA^{\tau}({\bf s};\hbar):=e^{\sum_{g\geq 0}\hbar^{2g-2}\bar\cF^{\tau}_g({\bf s})}.
$$

Following Givental~\cite{Giv01a}, introduce the $S$-matrix $S^{\tau}(z)\in \End(\Frob)[[z^{-1}]]$ defined by:
$$
\eta(\phi_a,S^{\tau}(z)\phi_b)=\eta(\phi_a,\phi_b)+\sum_{k\geq 0}\frac{\pd^2 \cF_0({\bf t})}{\pd t_0^a\pd t_k^b}\Big|_{\bf t=\tau}z^{-k-1}.
$$
where by ${\bf t=\tau}$,  we mean $t_0=\tau$ and $t_k=0$ for $k\geq 1$.
The $S$-matrix $S^{\tau}(z)$ satisfies the symplectic condition $S^{\tau,*}(-z)S^{\tau}(z)=\id$ 
and the quantum differential equation (QDE): 
$$
z \pd_{\tau^a}S^{\tau}(z)= \phi_a*_{\tau}S^{\tau}(z), \qquad a=0,\cdots,N-1.
$$
Here $S^{\tau,*}(z)$ is the dual of $S^{\tau}(z)$ with respect to $\eta$ and $*_{\tau}$ is the quantum product defined by
$$
\eta(\phi_a*_{\tau}\phi_b,\phi_c)=\frac{\pd^3 \cF_{0}(\bf t)}{\pd t_0^a\pd t_0^b\pd t_0^c}\bigg|_{\bf t=\tau}.
$$

According to Kontsevich--Manin~\cite{KM98} and Givental~\cite{Giv01a}, the total descendent potential $\cD({\bf t};\hbar)$ and the total ancestor potential $\cA^{\tau}({\bf s};\hbar)$ are related via the $S$-matrix by the following formula called the Kontsevich--Manin formula:
\beq\label{eqn:DA-correspondence}
\cD({\bf t};\hbar)=e^{F_1(\tau)+\frac{1}{2\hbar^2}W^{\tau}(\tilde{\bf t},\tilde{\bf t})}\cA^{\tau}({\bf s(t)};\hbar).
\eeq
Here $F_1(\tau)=\cF_1({\bf t})|_{\bf t=\tau}$, 
$\tilde{\bf t}$ is defined by $\tilde t_k^a=t_k^a-\delta_{k,1}\delta_{a,1}$ and is called the {\it dilaton shift},
$W^{\tau}(\tilde{\bf t},\tilde{\bf t})=\sum_{k,l\geq 0}\eta(\tilde{t}_k,W^{\tau}_{k,l}\tilde{t}_{l})$ is a quadratic function defined by
\beq\label{eqn:W-S}
W^{\tau}(z,w)=\sum_{k,l\geq 0}W^{\tau}_{k,l}z^{-k}w^{-l}=\frac{S^{\tau,*}(z)S^{\tau}(w)-\id}{z^{-1}+w^{-1}};
\eeq 
and the coordinate transformation ${\bf s}={\bf s(t)}$ is given by 
\beq\label{eqn:s-t}
{\bf s}(z)=[S^{\tau}(z){\bf t}(z)]_{+}-\tau,
\eeq 
where $[S^{\tau}(z){\bf t}(z)]_{+}$ stands for the part of $S^{\tau}(z){\bf t}(z)$ that contains only non-negative powers of $z$.

\subsection{Cohomological field theory}\label{subsec:CohFT}
Let $\varphi \in \Mbar_{g,n}(X,\beta)$ be a stable map from a genus $g$ curve $\Sigma_g$ with $n$ distinguished marked points  $x_1,\cdots,x_n$ to $X$ such that $[\varphi(\Sigma_g)]=\beta$, we consider two maps $\ev:\Mbar_{g,n}(X,\beta)\to X^n$ and $p: \Mbar_{g,n}(X,\beta)\to\Mbar_{g,n}$ defined by $\ev(\varphi)=(\varphi(x_1),\cdots,\varphi(x_n))$ and $p(\varphi)=\widetilde \Sigma_g$ (the stable curve defined by contracting the non-stable components of $\Sigma_g$) respectively.
The GW theory induces the GW class $I_{g, n}:\Frob ^{\otimes n}  \rightarrow   \mathbb Q[[Q]] \otimes H^*(\Mbar_{g,n},\mathbb{Q})$ defined by 
	$$\textstyle
	I_{g,n}(v_1\otimes\cdots \otimes v_n):=\sum_{\beta}Q^{\beta}\cdot p_{*}\big( \ev^{*}(v_1\otimes\cdots \otimes v_n)\big).
	$$

Kontsevich and Manin~\cite{KM94} introduced the CohFT to capture the axiomatic properties of the GW class $I_{g,n}$. 
Let  $\Frob$ be a complex vector space of dimension $N$  with a  non-degenerate (super-) symmetric bilinear form $\eta$.
Let $\mathbb A$ be an $\mathbb C$-algebra, a CohFT $\Omega=\{\Omega_{g,n}\}_{2g-2+n>0}$ on $(\Frob, \eta)$ is defined to be a set of maps to the cohomological classes of $\Mbar_{g,n}$
$$
\Omega_{g, n} : \Frob ^{\otimes n}  \rightarrow   \mathbb A \otimes H^*(\Mbar_{g,n},\mathbb{Q})
$$
satisfying (1) {\it  $S_n$-invariance axiom}: $\Omega_{g,n}(v_1\otimes \cdots \otimes v_n)$ is invariant under permutation between $v_1, \ldots, v_n\in \Frob$ 
and (2) {\it gluing axiom}: the pull-backs $q^*\Omega_{g,n}$ and $r^*\Omega_{g,n}$ of the gluing maps
\begin{equation*}
\begin{aligned}
q&:&&\Mbar_{g-1, n+2}\to \Mbar_{g,n}\\
r&:&& \Mbar_{g_1, n_1+1}\times \Mbar_{g_2, n_2+1} \to \Mbar_{g_1+g_2, n_1+n_2}
\end{aligned}
\end{equation*}
are equal to the contraction of $\Omega_{g-1, n+2}$ and $\Omega_{g_1, n_1+1} \otimes \Omega_{g_2, n_2+1}$ by the bivector $\sum_a \phi_a\otimes \phi^a$. Here $\{\phi_a\}_{a=0}^{N-1}$ is   a flat basis of $H$, { and $\{\phi^a\}_{a=0}^{N-1}$ is its dual basis with respect to $\eta$}.
According to the $S_n$-invariance axiom, we also denote $\Omega_{g,n}(v_1\otimes\cdots \otimes v_n)$ by $\Omega_{g,n}(v_1,\cdots , v_n)$.

Let $\tau$ be a point in the neighborhood $U$ of ${\bf 0}\in\Frob$, we define 
	\beq\label{def:shifted-Omega}
	\Omega_{g, n}^{\tau}(-):=\sum_{m\geq 0}\frac{1}{m!}(\pi^{m})_{*}\Omega_{g,n+m}(-,\tau,\cdots,\tau) \in \mathbb A[[\tau]] \otimes H^*(\Mbar_{g,n},\mathbb{Q}),
	\eeq
	where $(\pi^{m})_{*}$ is the push-forward via forgetful map $\pi^m:\Mbar_{g,n+m}\to \Mbar_{g,n}$ forgetting the last $m$ markings.
It is proved~\cite[Proposition 7.1]{Tel12} that $\Omega^{\tau}$ gives another (formal) CohFT, called the shifted CohFT,  on $\Frob$ with the bilinear form $\eta$ remains unchanged  and with $\mathbb C$-algebraic $\mathbb A^{\tau} := \mathbb A[[\tau]]$.

Given a CohFT $\Omega$, the quantum product is defined by $\eta(\phi_a*\phi_b,\phi_c):=\Omega_{0,3}(\phi_a,\phi_b,\phi_c)$. 
We denote by $*_{\tau}$ the quantum product for shifted CohFT $\Omega^{\tau}$.
The commutativity and the associativity of the quantum product is given by the $S_n$-invariance axiom and the gluing axiom  respectively. 

Introduce the ancestor correlators $\<-\>^{\tau}_{g,n}$, $2g-2+n>0$, for the CohFT $\Omega^{\tau}$:
\beq\label{def:bracket-anc}
\<v_1\bar\psi^{k_1},\cdots,v_n\bar\psi^{k_n}\>^{\tau}_{g,n}
:=\int_{\Mbar_{g,n}}\Omega^{\tau}_{g,n}(v_1, \cdots , v_n)\psi_1^{k_1}\cdots\psi_n^{k_n}.
\eeq
The genus $g$ ancestor potential $\bar\cF^{\tau}_g({\bf s})$ of the CohFT $\Omega^{\tau}$ is defined by
\beq\label{def:anc-Fg-choft}
\bar\cF^{\tau}_g({\bf s})
:=\sum_{n\geq 0}\frac{1}{n!}\<{\bf s}(\bar\psi_1),\cdots,{\bf s}(\bar\psi_{n})\>^{\tau}_{g,n}.
\eeq
where ${\bf s}(z)=\sum_{k\geq 0}s_kz^k\in\Frob[[z]]$. 
The total ancestor potential $\cA^{\tau}({\bf t};\hbar)$ is defined by
\beq\label{def:barF-A}
\cA^{\tau}({\bf s};\hbar):=e^{\sum_{g\geq 0}\hbar^{2g-2}\bar\cF^{\tau}_g({\bf s})}.
\eeq
For the shifted GW class $I_{g,n}^{\tau}$ defined by equation~\eqref{def:shifted-Omega} from the GW class $I_{g,n}$, 
one can see the definition equation~\eqref{def:anc-Fg-GW} of $\bar\cF^{\tau}_g$ is equivalent to equation \eqref{def:anc-Fg-choft}.

\begin{proposition}\label{prop:diff-anc-corr}
The ancestor correlators $\<-\>^{\tau}_{g,n}$, $2g-2+n>0$, satisfy the following differential equations~\footnote{For simplicity, we assume all the insertions here are even classes, when there are odd classes, the equation~\eqref{eqn:diff-anc-corr} should be written as follows:
$$\textstyle
\pd_{\tau^b}(\<\phi_{a_{[n]}}\bar\psi^{k_{[n]}}\>^{\tau}_{g,n})
=\<\phi_b,\phi_{a_{[n]}}\bar\psi^{k_{[n]}}\>^{\tau}_{g,n+1}
-\sum_{i=1}^{n}(-1)^{\sigma(a_i)\cdot \sum_{j=1}^{i-1}\sigma(a_j)}\cdot\<\phi_{b}*_{\tau}\phi_{a_i}\bar\psi^{k_i-1}, \phi_{a_{[n]\setminus\{i\}}}\bar\psi^{k_{[n]\setminus\{i\}}}\>^{\tau}_{g,n},
$$
where $\sigma(a)=0$ if $\phi_a$ is even and $1$ otherwise, and the proof is similar.}: for $b=0,\cdots,N-1$,
\beq\label{eqn:diff-anc-corr}
\pd_{\tau^b}(\<\phi_{a_{[n]}}\bar\psi^{k_{[n]}}\>^{\tau}_{g,n})
=\<\phi_b,\phi_{a_{[n]}}\bar\psi^{k_{[n]}}\>^{\tau}_{g,n+1}
-\sum_{i=1}^{n}\<\phi_{b}*_{\tau}\phi_{a_i}\bar\psi^{k_i-1}, \phi_{a_{[n]\setminus\{i\}}}\bar\psi^{k_{[n]\setminus\{i\}}}\>^{\tau}_{g,n},
\eeq
where $[n]=\{1,\cdots,n\}$ and for any $I\subset [n]$, $\phi_{a_I}\bar\psi^{k_{I}}=\otimes_{i\in I}\phi_{a_i}\bar\psi^{k_i}$.
\end{proposition}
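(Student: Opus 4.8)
The plan is to reduce the entire identity to the behaviour of the $\psi$-classes and the insertions under the forgetful morphism $\pi:\Mbar_{g,n+1}\to\Mbar_{g,n}$ dropping the last marked point, on which the extra insertion $\phi_b$ will sit. First I would differentiate the shifted class in \eqref{def:shifted-Omega}. Since $\partial_{\tau^b}\tau=\phi_b$ acts on each of the $m$ copies of $\tau$ and, by the $S_n$-invariance axiom, all $m$ resulting terms coincide, reindexing the sum gives
\[
\partial_{\tau^b}\Omega^{\tau}_{g,n}(-)=\sum_{m\geq 0}\frac{1}{m!}(\pi^{m+1})_{*}\Omega_{g,n+m+1}(-,\phi_b,\tau,\cdots,\tau).
\]
On the other hand, expanding $\Omega^{\tau}_{g,n+1}(-,\phi_b)$ by \eqref{def:shifted-Omega} and pushing forward along $\pi$, the two successive forgetful maps compose to the single map forgetting the last $m+1$ points, whence
\[
\pi_{*}\Omega^{\tau}_{g,n+1}(-,\phi_b)=\partial_{\tau^b}\Omega^{\tau}_{g,n}(-).
\]
This identity, which uses only that $\Omega^{\tau}$ is again a formal CohFT, is the first key step.

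Next I would compare $\psi$-classes across $\pi$. Writing $D_{i,n+1}\subset\Mbar_{g,n+1}$ for the boundary divisor on which the $i$-th and the $(n+1)$-st markings lie on a rational tail, the standard relations $\psi_i=\pi^{*}\psi_i+D_{i,n+1}$, $D_{i,n+1}D_{j,n+1}=0$ for $i\neq j$, and the self-intersection of $D_{i,n+1}$ combine to give
\[
\prod_{i=1}^{n}\psi_i^{k_i}=\pi^{*}\Big(\prod_{i=1}^{n}\psi_i^{k_i}\Big)+\sum_{i=1}^{n}\psi_{\bullet}^{k_i-1}\,D_{i,n+1}\prod_{j\neq i}\pi^{*}\psi_j^{k_j},
\]
where $\psi_{\bullet}$ is the cotangent class at the node. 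Integrating $\Omega^{\tau}_{g,n+1}(\phi_{a_{[n]}},\phi_b)$ against the first term and applying the projection formula together with the identity of the previous paragraph turns it into $\partial_{\tau^b}\<\phi_{a_{[n]}}\bar\psi^{k_{[n]}}\>^{\tau}_{g,n}$, producing the first summand on the right-hand side of \eqref{eqn:diff-anc-corr}.

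Finally I would treat the boundary contributions. On $D_{i,n+1}\cong\Mbar_{g,n}$ the gluing axiom for the CohFT $\Omega^{\tau}$ factors $\Omega^{\tau}_{g,n+1}(\phi_{a_{[n]}},\phi_b)\big|_{D_{i,n+1}}$ as the contraction of $\Omega^{\tau}_{0,3}(\phi_{a_i},\phi_b,\phi^a)$ with $\Omega^{\tau}_{g,n}(\cdots,\phi_a)$ over the node; since $\Omega^{\tau}_{0,3}(\phi_{a_i},\phi_b,\phi^a)=\eta(\phi_b*_{\tau}\phi_{a_i},\phi^a)$, summing over the node basis $\sum_a\phi_a\otimes\phi^a$ replaces the two insertions by $\phi_b*_{\tau}\phi_{a_i}$ at the node. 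Under the identification $D_{i,n+1}\cong\Mbar_{g,n}$ the node class $\psi_{\bullet}$ becomes $\psi_i$ and $\pi^{*}\psi_j$ becomes $\psi_j$, so the $i$-th boundary term equals $\<\phi_b*_{\tau}\phi_{a_i}\bar\psi^{k_i-1},\phi_{a_{[n]\setminus\{i\}}}\bar\psi^{k_{[n]\setminus\{i\}}}\>^{\tau}_{g,n}$. Collecting the pullback and boundary contributions then yields \eqref{eqn:diff-anc-corr}.

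I expect the main obstacle to be the intersection-theoretic bookkeeping of the $\psi$-class comparison — in particular verifying that the self-intersections of $D_{i,n+1}$ collapse to the single boundary correction with exponent $k_i-1$ and no stray signs — together with the careful application of the gluing axiom to the formal shifted CohFT $\Omega^{\tau}$ on the divisor. Once the even-class case is in place, the sign modifications in the presence of odd classes recorded in the footnote are routine, arising solely from permuting the insertion $\phi_b$ past the $\phi_{a_j}$ preceding the $i$-th slot.
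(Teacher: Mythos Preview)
Your proposal is correct and follows essentially the same route as the paper's proof: both argue via the $\psi$-class comparison $\psi_i=\pi^*\psi_i+D_{i,n+1}$ together with $D_{i,n+1}D_{j,n+1}=0$ and the self-intersection relation, then apply the projection formula to the pullback term and the gluing axiom to the boundary divisors. The only difference is organisational: you spell out the identity $\pi_*\Omega^{\tau}_{g,n+1}(-,\phi_b)=\partial_{\tau^b}\Omega^{\tau}_{g,n}(-)$ as a first step, whereas the paper leaves it implicit and works directly with the integral $\int_{\Mbar_{g,n}}(\pi_{n+1})_*\Omega^{\tau}_{g,n+1}(\phi_b,\phi_{a_{[n]}})\cdot\prod_i\psi_i^{k_i}$.
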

\begin{proof}
	We first recall some properties of the $\psi$-classes under the pull-back of the forgetful map, one can see e.g.~\cite{Wit91} for details.
		Let $\psi_1,\cdots,\psi_{n}$ be the psi-classes of $ \M_{g,n+1}$  at the first $n$-marked points or  psi-classes of of $ \M_{g,n}$ (by abuse of notation)  and let $\pi_{n+1}:  \M_{g,n+1} \rightarrow  \M_{g,n}$ be the forgetful map which forgets the last marked point. For $k=1,\cdots,n$, we have
		$$
		\pi_{n+1}^* \psi_k = \psi_k  -  D_{k,n+1},
		$$
		where $D_{k,n+1}$ is the boundary divisor  representing the nodal curve which has two irreducible components: a genus $0$ curve with two marked points $x_k$ and $x_{n+1}$ connected with a genus $g$ curve with the rest $n-1$-marked points. 
		Moreover, the boundary divisors satisfy
		$$
		D_{i,n+1} \cdot D_{j,,n+1}=0  \text{   for $i\neq j$ } \quad   \text{   and }   \quad   D_{i,n+1}^2=-  \pi_{n+1}^* \psi_{i } \cdot D_{i,n+1}. $$
		It is straightforward to deduce the following equation: for $k=1,\cdots,n$ and $m\geq 0$,
		\beq\label{eqn:pi-psi}
		\pi_{n+1}^{*}\psi^{m}_{k}=\psi_{k}^m-\pi_{n+1}^{*}\psi^{m-1}_{k} \cdot D_{k,n+1}.
		\eeq
	
Now we return to the CohFT side. By equation~\eqref{eqn:pi-psi}, we have
	\begin{align*}
\Omega^{\tau}_{g,n+1} \big( \phi_b ,   \phi_{a_{[n]}}  \big) \cdot   \pi_{n+1}^* \big( \textstyle \prod_{i=1}^n   \psi_i^{k_{i}}  \big)
 =   & \,\Omega^{\tau}_{g,n+1} \big( \phi_b ,   \phi_{a_{[n]}}  \big) \cdot \textstyle \prod_{i=1}^n\psi_i^{k_{i}} \  - \\
	&\, \textstyle \Omega^{\tau}_{g,n+1} \big( \phi_b ,   \phi_{a_{[n]}}  \big) \cdot    \sum_{i=1}^n      D_{i,n+1}\cdot  \pi_{n+1}^*  \big(    \psi_i^{k_{i}-1} \textstyle \prod_{j\neq i}    \psi_j^{k_{j}}      \big).
	\end{align*}
	Hence for intersection numbers
	\begin{align*}
		& \  \int_{\M_{g,n}}  ( \pi_{n+1} )_* \Omega^{\tau}_{g,n+1} \big( \phi_b ,   \phi_{a_{[n]}}  \big) \cdot \prod_{i=1}^n   \psi_i^{k_{i}}
         = \int_{\M_{g,n+1}}  \Omega^{\tau}_{g,n+1} \big( \phi_b ,   \phi_{a_{[n]}}  \big)   \prod_{i=1}^n \psi_i^{k_{i}}  \  - \\
		& \  \qquad   \sum_{i=1}^n   {\Omega^{\tau}_{0,\{x_i,x_{n+1},{\bullet}\}}}  \big( \phi_{a_{i}} , \phi_b,  \phi^c \big)  \cdot  \int_{\M_{g,n}}  \Omega^{\tau}_{g,n} \big( \phi_c ,   \phi_{a_{[n]\setminus \{i\}}}  \big)   \psi_\bullet^{k_{i}-1}   \prod_{j\neq i}   \psi_j^{k_{j}} ,
	\end{align*}
where we have used $\pi_{n+1,*} (\alpha \cdot  \pi_{n+1}^*\beta) =  \pi_{n+1,*} (\alpha) \cdot \beta $ and the second  gluing axiom, and the $\bullet$ is used to trace the marking. 
	This is exactly equation~\eqref{eqn:diff-anc-corr}, and the proof is finished.
\end{proof}

\subsection{Vacuum axiom}
In GW theory, there is an distinguished element ${\bf 1}\in \Frob$ (indeed, {\bf 1} is exactly the generator of the space $H^{0}(X,\mathbb C)$ ) satisfying (1)
${\bf 1}$ is the unity of the quantum product $*_\tau$ and (2) ${\bf 1}$ is covariantly constant with respect to $\eta$. 
Axiomatically, an distinguished element ${\bf 1}\in \Frob$ is called a {\it flat unit} if it satisfies the following {\it flat unit axiom}: 
$$
\pi_{\bullet}^*\Omega^{\tau}_{g,n}(v_1, \cdots , v_n)=\Omega^{\tau}_{g,{\bullet}+n}({\bf 1}, v_1, \cdots , v_n),
$$
where $\pi_{\bullet}:\Mbar_{g,{\bullet}+n}\to \Mbar_{g,n}$ is the forgetful map.
Here we have used $\bullet$ in $\Omega^{\tau}_{g,{\bullet}+n}$, as well as in $\Mbar_{g,{\bullet}+n}$, to trace the marking point we are referring to.
It is shown in~\cite[\S 5.3.3]{Man99} the flat unit axiom implies ${\bf 1}$ is covariantly constant with respect to $\eta$.
In general, the flat unit does not necessarily exist and the flat unit axiom is generalized  to the vacuum axiom  by Teleman \cite{Tel12}:
\begin{definition}[Vacuum]   \label{vacuumaxiom}
A distinguished element $\vac^{\tau}(z)=\sum_{k\geq 0}\vac_k^{\tau}\cdot z^k\in\Frob[[z]]$ is called the {\it vacuum vector}, and $\Omega^{\tau}$ is called the CohFT with vacuum,  if it satisfies the following 
{\it vacuum axiom}:
\beq\label{eqn:vacuum-axiom}
\pi_{\bullet}^*\Omega^{\tau}_{g,n}(v_1, \cdots , v_n)=\Omega^{\tau}_{g,\bullet+n}(\vac^{\tau}(\psi_{\bullet}), v_1, \cdots , v_n),
\eeq
where $\pi_{\bullet}:\Mbar_{g,{\bullet}+n}\to \Mbar_{g,n}$ is the forgetful map.
\end{definition}

The following Proposition is a direct consequence of the vacuum axiom.
\begin{proposition}\label{prop:string-dilaton}
The total ancestor potential $\cA^{\tau}({\bf s};\hbar)$ of a CohFT with vacuum $\vac^{\tau}(z)$ satisfies the ancestor string equation (the $L^{\anc}_{-1}$-constraint) 
\beq\label{eqn:string}
\bigg(\sum_{k,a}\tilde s_{k+1}^a\frac{\pd}{\pd s_k^a}+\frac{1}{2\hbar^2}\, \eta(s_0,s_0)\bigg)\cA^{\tau}({\bf s};\hbar)=0,
\eeq
and the ancestor dilaton equation
\beq\label{eqn:dilaton}
	\bigg(\sum_{k,a}\tilde s^{a}_k\frac{\pd}{\pd s^a_k}+\hbar\frac{\pd}{\pd{\hbar}}+\frac{1}{24}\str({\bf 1}*) \bigg)\cA^{\tau}({\bf s};\hbar)=0,
\eeq
where $\tilde s_k^{a}$ is defined by $\tilde {\bf s}(z)={\bf s}(z)-z\vac^{\tau}(z)$.
\end{proposition}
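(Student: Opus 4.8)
The plan is to deduce both equations from two correlator-level recursions that are the genus-by-genus, insertion-by-insertion shadows of the string and dilaton equations, and then to resum them against the ancestor potentials $\bar\cF^{\tau}_g$ using the dilaton-shifted variable $\tilde{\bf s}(z)={\bf s}(z)-z\vac^\tau(z)$. Both recursions are ``direct consequences of the vacuum axiom'' in that they combine \eqref{eqn:vacuum-axiom} with the pullback formula \eqref{eqn:pi-psi} for $\psi$-classes under the forgetful map $\pi_\bullet:\Mbar_{g,n+1}\to\Mbar_{g,n}$, exactly as in Proposition~\ref{prop:diff-anc-corr}.

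\textbf{String identity and its resummation.} First I would establish, for $2g-2+n>0$,
\[
\sum_{l\geq0}\<\vac_l^\tau\bar\psi^l, v_1\bar\psi^{k_1},\cdots,v_n\bar\psi^{k_n}\>^{\tau}_{g,n+1}
=\sum_{i:\,k_i\geq1}\<v_1\bar\psi^{k_1},\cdots,v_i\bar\psi^{k_i-1},\cdots,v_n\bar\psi^{k_n}\>^{\tau}_{g,n}.
\]
By the vacuum axiom the left side equals $\int_{\Mbar_{g,n+1}}\pi_\bullet^*\Omega^{\tau}_{g,n}(v_1,\cdots,v_n)\prod_i\psi_i^{k_i}$; expanding each $\psi_i^{k_i}=(\pi_\bullet^*\psi_i)^{k_i}+(\pi_\bullet^*\psi_i)^{k_i-1}D_{i,\bullet}$ via \eqref{eqn:pi-psi}, using $D_{i,\bullet}D_{j,\bullet}=0$, and pushing forward with $\pi_{\bullet*}(1)=0$ and $\pi_{\bullet*}(D_{i,\bullet})=1$ leaves exactly the right-hand side. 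Summing over $n$ with $v_i\bar\psi^{k_i}={\bf s}(\bar\psi_i)$ and weight $1/n!$, the contribution of the insertion $\frac{{\bf s}(\bar\psi)-s_0}{\bar\psi}$ coming from the bulk of $\tilde{\bf s}(\bar\psi)=\frac{{\bf s}(\bar\psi)-\bar\psi\vac^\tau(\bar\psi)-s_0}{\bar\psi}\cdot\bar\psi$ telescopes against the lowering on the right, so that $\sum_{k,a}\tilde s_{k+1}^a\,\pd_{s_k^a}\bar\cF^{\tau}_g$ reduces to the contributions of the strata where the identity fails because $\Mbar_{g,n}$ is unstable.

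\textbf{Dilaton identity and its resummation.} Next I would prove, in the same manner but inserting $\psi_\bullet\vac^\tau(\psi_\bullet)$ at the forgotten point,
\[
\sum_{l\geq0}\<\vac_l^\tau\bar\psi^{l+1}, v_1\bar\psi^{k_1},\cdots,v_n\bar\psi^{k_n}\>^{\tau}_{g,n+1}
=(2g-2+n)\<v_1\bar\psi^{k_1},\cdots,v_n\bar\psi^{k_n}\>^{\tau}_{g,n}.
\]
Here the extra factor $\psi_\bullet$ kills every boundary term since $\psi_\bullet D_{i,\bullet}=0$, and the dilaton constant $2g-2+n=\pi_{\bullet*}(\psi_\bullet)$ emerges through the projection formula. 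Resumming against ${\bf s}$, the Euler term $\sum_N\frac{N}{N!}\<{\bf s}^N\>^{\tau}_{g,N}$ produced by the ${\bf s}(\bar\psi)$ part of $\tilde{\bf s}(\bar\psi)$ cancels the $\frac{N}{N!}$ part of the $(2g-2+N)$-weighted terms, leaving $\sum_{k,a}\tilde s_k^a\,\pd_{s_k^a}\bar\cF^{\tau}_g=(2-2g)\bar\cF^{\tau}_g$, which is precisely the genus-$g$ part of $-\hbar\pd_\hbar\log\cA^\tau$, again up to unstable strata.

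\textbf{Boundary terms and the main obstacle.} The crux is the bookkeeping of the unstable strata $\Mbar_{0,2}$ and $\Mbar_{1,0}$, where the two identities break; for $g\geq2$ the $n=0$ term $\<\vac^\tau(\bar\psi)\>^{\tau}_{g,1}$ vanishes automatically since $\Mbar_{g,0}$ is stable and $\pi_{\bullet*}(1)=0$, so only $g=0$ ($n=2$) and $g=1$ ($n=0$) need care. For the string equation the sole surviving correction is in genus $0$ at $n=2$: evaluating $-\tfrac12\<\vac^\tau(\bar\psi),{\bf s},{\bf s}\>^{\tau}_{0,3}$ on $\Mbar_{0,3}$ (all $\psi$ vanish) gives $-\tfrac12\eta(\vac_0^\tau*_\tau s_0,s_0)$, which under $\vac_0^\tau={\bf 1}$ becomes $-\tfrac12\eta(s_0,s_0)$ and is cancelled by the explicit $\tfrac1{2\hbar^2}\eta(s_0,s_0)$ after extracting the $\hbar^{-2}$ part; here $\vac_0^\tau={\bf 1}$ is the $(0,2)$ instance of the vacuum axiom under the convention $\Omega^{\tau}_{0,2}=\eta$. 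For the dilaton equation the correction is in genus $1$ at $n=0$: $\<\bar\psi\vac^\tau(\bar\psi)\>^{\tau}_{1,1}$ collapses by dimension on $\Mbar_{1,1}$ to $\int_{\Mbar_{1,1}}\Omega^{\tau}_{1,1}(\vac_0^\tau)\psi_1=\tfrac1{24}\str({\bf 1}*)$, the remaining term in \eqref{eqn:dilaton}. The one genuinely delicate point, and the main obstacle, is the corresponding genus-$1$, $n=0$ string contribution $\<\vac^\tau(\bar\psi)\>^{\tau}_{1,1}$, which is not forced to vanish by the stable-range axioms alone; I would dispose of it by extending the vacuum axiom to the unstable stratum $(g,n)=(1,0)$, whence $\<\vac^\tau(\bar\psi)\>^{\tau}_{1,1}=\int_{\Mbar_{1,1}}\pi_\bullet^*\Omega^{\tau}_{1,0}=0$ since $\pi_{\bullet*}(1)=0$, consistent with the principle that the ancestor constraints depend only on classes with insertions.
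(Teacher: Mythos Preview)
Your approach is correct and is essentially the same route the paper takes: the paper also obtains the string equation by coupling the vacuum axiom~\eqref{eqn:vacuum-axiom} with $\psi$-classes and integrating, and obtains the dilaton equation from the cohomological identity $\pi_{\bullet,*}\bigl(\psi_\bullet\cdot\pi_\bullet^*(-)\bigr)=(2g-2+n)(-)$ together with the genus-$1$ topological recursion relation to evaluate $\langle\bar\psi\,\vac^\tau(\bar\psi)\rangle^\tau_{1,1}=\langle{\bf 1}\bar\psi\rangle^\tau_{1,1}=\tfrac{1}{24}\str({\bf 1}*_\tau)$. Your write-up is simply more explicit about the resummation and the unstable-range bookkeeping.

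One remark on your ``main obstacle'': you are right that the constant genus-$1$ term $\langle\vac^\tau(\bar\psi)\rangle^\tau_{1,1}$ is not forced to vanish by the vacuum axiom in the stable range alone, and the paper's proof does not isolate this case. Your resolution (reading the vacuum axiom at $(g,n)=(1,0)$ as $\Omega^\tau_{1,1}(\vac^\tau(\psi_\bullet))\in\pi_\bullet^*H^*(\Mbar_{1,0})=0$) is exactly the convention implicit in the paper's Definition~\ref{vacuumaxiom}; this is also consistent with the paper's later remark that the ancestor Virasoro constraints are insensitive to classes without insertions, so that one may always pass to a CohFT for which this term vanishes. In short, you and the paper are handling the same point in the same way, only you say it out loud.
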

\begin{proof}
Firstly, by coupling with the psi-classes and then taking integration on the space $\Mbar_{g,n}$, equation~\eqref{eqn:vacuum-axiom} gives immediately the ancestor string equation~\eqref{eqn:string}.
Secondly, together with the equation $\pi_{\bullet,*}\psi_{\bullet}=(2g-2+n)$ on $\Mbar_{g,n}$, the vacuum axiom gives
\beq\label{eqn:cohft-dilaton}
\pi_{\bullet,*}\Omega^{\tau}_{g,n+1}(\psi_{\bullet}\cdot \vac^{\tau}(\psi_{\bullet}), v_1, \cdots , v_n)
=(2g-2+n)\ \Omega^{\tau}_{g,n}(v_1, \cdots , v_n),
\eeq
where we have used $\pi_{\bullet, *}(\alpha\cdot \pi_{\bullet}^{*}\beta)=\pi_{\bullet,*}(\alpha)\cdot \beta$.
By dimension reason and the genus-1 topological recursion relation: $\<\bar\psi\cdot\vac(\bar\psi)\>^{\tau}_{1,1}=\<{\bf 1}\bar\psi\>^{\tau}_{1,1}=\frac{1}{24}\sum\eta(\phi_a,\phi^a)$.
Then, similar as how we prove the ancestor string equation, equation~\eqref{eqn:cohft-dilaton} gives us the ancestor dilaton equation~\eqref{eqn:dilaton}.
\end{proof}

Now we consider the differential equation for the vacuum vector, we prove the following Proposition generalizing Teleman's result~\cite[Proposition 7.3]{Tel12} for semisimple CohFTs to arbitrary cases.
\begin{proposition}\label{prop:QDE-vacuum}
The vacuum vector satisfies the following QDE: 
	\beq\label{eqn:QDE-vacuum}
	z\pd_{\tau^a}\vac^{\tau}(z)=\phi_a*_{\tau}\vac^{\tau}(z)-\phi_a,\qquad a=0,\cdots,N-1.
	\eeq
\end{proposition}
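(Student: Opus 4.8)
The plan is to derive the QDE by differentiating the vacuum axiom~\eqref{eqn:vacuum-axiom} as an identity of cohomology classes on $\Mbar_{g,n+1}$, in close parallel with the proof of Proposition~\ref{prop:diff-anc-corr}. First I would record the two inputs that make the differentiation explicit: by~\eqref{def:shifted-Omega}, differentiating the shifted class inserts one more $\phi_a$ and forgets it, $\partial_{\tau^a}\Omega^\tau_{g,n}(\vec v) = \mathrm{ft}_*\Omega^\tau_{g,n+1}(\phi_a,\vec v)$, where $\mathrm{ft}$ forgets the $\phi_a$-point; and the constant term of the vacuum is the unit, $\vac^\tau_0 = {\bf 1}$, which follows from the vacuum axiom on $\Mbar_{0,4}$ together with the gluing axiom. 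Since the QDE is an identity in $\Frob[[z]]$, it suffices to prove the equivalent statement $(\partial_{\tau^a}\vac^\tau)(z) = \phi_a*_\tau\frac{\vac^\tau(z)-{\bf 1}}{z}$ after inserting both sides at a marked point carrying $\psi_\bullet=z$ into an arbitrary class $\Omega^\tau_{g,n+1}(-,\vec v)$; the resulting vector identity can then be stripped off by testing against genus-$0$ three-point functions (using $\Omega^\tau_{0,3}(-,\phi_b,\phi_c)=\eta(-*_\tau\phi_b,\phi_c)$ and the string equation to reach the coefficient of each $\psi_\bullet^k$).

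Next I would apply $\partial_{\tau^a}$ to the vacuum axiom $\pi^*\Omega^\tau_{g,n}(\vec v)=\Omega^\tau_{g,n+1}(\vac^\tau(\psi_\bullet),\vec v)$, where $\pi$ forgets the vacuum-point $\bullet$. The left-hand side becomes $\pi^*\mathrm{ft}_*\Omega^\tau_{g,n+1}(\phi_a,\vec v)$. On the right-hand side the Leibniz rule produces the term $\Omega^\tau_{g,n+1}((\partial_{\tau^a}\vac^\tau)(\psi_\bullet),\vec v)$ together with $\mathrm{ft}'_*\Omega^\tau_{g,n+2}(\phi_a,\vac^\tau(\mathrm{ft}'^*\psi_\bullet),\vec v)$, where $\mathrm{ft}'$ forgets the new $\phi_a$-point. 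The key point is the discrepancy $\mathrm{ft}'^*\psi_\bullet=\psi_\bullet-D_{\bullet,*}$ coming from~\eqref{eqn:pi-psi}: writing $\vac^\tau(\psi_\bullet-D_{\bullet,*})=\vac^\tau(\psi_\bullet)-\big(\vac^\tau(\psi_\bullet)-\vac^\tau(\psi_\bullet-D_{\bullet,*})\big)$, the first summand recombines, via the vacuum axiom applied a second time to the $\bullet$-point, into $\mathrm{ft}'_*\pi'^*\Omega^\tau_{g,n+1}(\phi_a,\vec v)$, which equals the left-hand side by the commutativity of pullback and pushforward along forgetful maps (checked directly from~\eqref{eqn:pi-psi}). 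After cancelling, I am left with $\Omega^\tau_{g,n+1}((\partial_{\tau^a}\vac^\tau)(\psi_\bullet),\vec v)=\mathrm{ft}'_*\Omega^\tau_{g,n+2}\big(\phi_a,\{\vac^\tau(\psi_\bullet)-\vac^\tau(\psi_\bullet-D_{\bullet,*})\},\vec v\big)$.

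Finally I would evaluate the right-hand side. The bracketed correction is supported on the collision divisor $D_{\bullet,*}$; using $\psi_\bullet|_{D_{\bullet,*}}=0$ and $D_{\bullet,*}^2=-\psi_{\mathrm{node}}D_{\bullet,*}$ one computes that on $D_{\bullet,*}$ it equals $D_{\bullet,*}\cdot\sum_{j\geq 1}\vac^\tau_j\,\psi_{\mathrm{node}}^{\,j-1}$, i.e. the divided difference $\frac{\vac^\tau(\psi_{\mathrm{node}})-{\bf 1}}{\psi_{\mathrm{node}}}$ (this is where $\vac^\tau_0={\bf 1}$ enters). The gluing axiom on the genus-$0$ three-point bubble carrying $\bullet,*$ and the node then contracts $\Omega^\tau_{0,3}(\vac^\tau_j,\phi_a,\phi^c)=\eta(\vac^\tau_j*_\tau\phi_a,\phi^c)$ against the main component; after $\mathrm{ft}'_*$ (which collapses the bubble and renames the node $\bullet$, so $\psi_{\mathrm{node}}\mapsto\psi_\bullet$) this yields exactly $\Omega^\tau_{g,n+1}\big(\phi_a*_\tau\frac{\vac^\tau(\psi_\bullet)-{\bf 1}}{\psi_\bullet},\vec v\big)$. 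Stripping the insertion gives $(\partial_{\tau^a}\vac^\tau)(z)=\phi_a*_\tau\frac{\vac^\tau(z)-{\bf 1}}{z}$, and multiplying by $z$ and using $\phi_a*_\tau{\bf 1}=\phi_a$ produces the claimed QDE.

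I expect the main obstacle to be the $\psi$-class bookkeeping on the collision divisor in the last two steps: tracking carefully that the forgotten-point psi-class pullback introduces precisely the divided difference $\frac{\vac^\tau(z)-{\bf 1}}{z}$ (which is exactly what packages the product $\phi_a*_\tau\vac^\tau(z)$ and the inhomogeneous $-\phi_a$ into a single clean term), together with the verification of the commutativity $\pi^*\mathrm{ft}_*=\mathrm{ft}'_*\pi'^*$ and the final stripping of the marked-point insertion into a genuine identity of vectors in $\Frob[[z]]$.
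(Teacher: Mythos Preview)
Your proposal is correct and follows essentially the same route as the paper's own proof: differentiate the vacuum axiom~\eqref{eqn:vacuum-axiom} in $\tau^a$, use the commutativity of the two forgetful maps to match the left-hand side with the leading piece of the right-hand side, and reduce the discrepancy to a class supported on the collision divisor $D_{\bullet,*}$, where the gluing axiom turns the bubble contribution into the quantum product $\phi_a*_\tau\vac_{k+1}$; then strip off the CohFT insertion using non-degeneracy (what the paper calls ``non-degeneration of $\Omega^\tau$''). The only cosmetic difference is that the paper works coefficient-by-coefficient, deriving $\partial_{\tau^a}\vac_k=\phi_a*_\tau\vac_{k+1}$ directly, whereas you package the same computation in the generating series via the divided difference $\frac{\vac^\tau(z)-{\bf 1}}{z}$; and you record $\vac_0={\bf 1}$ as a separate input, while the paper extracts it as a consequence of the QDE after the fact.
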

\begin{proof}
	Take derivative with respect to $\tau^a$ on equation~\eqref{eqn:vacuum-axiom} (without losing generality, we assume $v_i$ are flat vectors), we have
\begin{align*}
\pi_{\bullet}^*\pi_{n+1, *}\Omega^{\tau}_{g,n+1}(v_1, \cdots , v_n,\phi_{a})
=&\, \textstyle
 \sum_{k\geq 0}\psi_\bullet^k\cdot \pi_{n+1,*}\Omega^{\tau}_{g,\bullet+n+1}(\vac_k, v_1, \cdots , v_n, \phi_{a})\\
&\, \textstyle +\sum_{k\geq 0}\psi_\bullet^k\cdot \Omega^{\tau}_{g,\bullet+n}(\pd_{\tau^a}\vac_k, v_1, \cdots , v_n).
\end{align*}
By commuting $\pi_{\bullet}^*$ with $\pi_{n+1,*}$ and by Definition~\ref{vacuumaxiom}, the left-hand side of this equation equals to $\pi_{n+1, *}\Omega^{\tau}_{g,\bullet+n+1}(\vac(\psi_{\bullet}), v_1, \cdots , v_n,\phi_{a})$. 
Similar as the method used in the proof of Proposition~\ref{prop:diff-anc-corr}, by equation~\eqref{eqn:pi-psi}, we have
$$\textstyle
\sum_{k\geq 1}\psi_{\bullet}^{k-1}\cdot \Omega^{\tau}_{g,\bullet+n}(\phi_a*\vac_k, v_1, \cdots , v_n)
=\sum_{k\geq 0}\psi_{\bullet}^k\cdot \Omega^{\tau}_{g,\bullet+n}(\pd_{\tau^a}\vac_k, v_1, \cdots , v_n).
$$
Then the non-degeneration of $\Omega^{\tau}$ gives $\pd_{\tau^a}\vac_k=\phi_a*_{\tau}\vac_{k+1}$ for all $k\geq 0$.
\end{proof}
The QDE~\eqref{eqn:QDE-vacuum} for the vacuum vector gives immediately ${\bf 1}^{\tau}:=\vac^{\tau}_0$ is the unity of the quantum product $*_{\tau}$, and furthermore
\beq\label{eqn:formula-vacuum}
\vac^{\tau}(z)=\textstyle \sum_{k\geq 0}\pd_{\bf 1^{\tau}}^k({\bf 1^{\tau}})z^k.
\eeq
In particular, if the vacuum vector $\vac^{\tau}(z)$ is flat (i.e., covariantly constant with respect to $\eta$), then we see $\vac^{\tau}(z)={\bf 1}^{\tau}$ is a flat unit of the CohFT $\Omega^{\tau}$. 
For non-flat vacuum vector $\vac^{\tau}(z)$, the vector field ${\bf 1}$ of the unities ${\bf 1}^{\tau}$ of the quantum product $*_{\tau}$ is called the {\it non-flat unit}.

\subsection{Generalized Frobenius manifold}
We recall the notion of Frobenius manifold, which was introduced by  Dubrovin~\cite{Dub96} to capture the structure of genus-$0$ GW theory, and consider its generalizations.

Let $U$ be a complex manifold, a Frobenius structure over $U$ consists of 
\begin{enumerate}[\,\,\,\, (i)]
	\item a flat holomorphic bilinear form $\eta$ on the complex tangent space $T  U$; 
	\item a commutative and associative product $*_{\tau}: T _\tau U \times T _\tau U \rightarrow T _\tau U$, called the {\it quantum product}, which are holomorphic on $\tau\in U$;
	\item  a holomorphic vector field   ${\bf 1} $, such that ${\bf 1}^{\tau}$ is the unity of the product $*_\tau$.
\end{enumerate}
In the original definition of a Frobenius manifold, the vector field  ${\bf 1}$ of unities is required to be covariantly constant.
We will call such a structure a generalized Frobenius manifold with non-flat unit if this property fails (c.f.  ~\cite{LQZ22}).

According to the flatness of $\eta$, we have $T_{0}U \cong T_{\tau}U$, by abuse of the notation, we denote this by $\Frob$. 
Furthermore, the neighborhood of $0\in U$ can be identified with a neighborhood of $0\in \Frob$ (via the exponential map), thus we also view $\tau$ as a point in $\Frob$.

Furthermore, the structure constants $\eta(\phi_a*_{\tau}\phi_b,\phi_c)$ of the quantum product are given by the $3$-rd derivatives $\pd_{\tau^a}\pd_{\tau^b}\pd_{\tau^c}\Phi(\tau)$ of a holomorphic function $\Phi(\tau)$, which is called the {\it potential} of the (generalized) Frobenius manifold. 
We see the potential $\Phi$ is uniquely specified modulo quadratic terms.
The (super-) commutativity of the quantum product comes from the (super-) commutativity of partial derivatives of $\Phi(\tau)$ with respect to $\tau$ 
and the associativity is equivalent to the following Witten-Dijkgraaf-Verlinde-Verlinde (WDVV) equation: 
$$
\frac{\pd^3\Phi(\tau)}{\pd\tau^a\pd \tau^b \pd \tau^c}\eta^{cd}\frac{\pd^3 \Phi(\tau)}{\pd\tau^d\pd \tau^e \pd \tau^f}
=(-1)^{\sigma(b)\cdot \sigma(e)}\frac{\pd^3 \Phi(\tau)}{\pd\tau^a\pd \tau^e \pd \tau^c}\eta^{cd}\frac{\pd^3 \Phi(\tau)}{\pd\tau^d\pd \tau^b \pd \tau^f},
$$
where $\sigma(a)=0$ if $\phi_a$ is even and $1$ otherwise.

\medskip
	In \S \ref{subsec:CohFT},  we observe that
	given a CohFT $\Omega$, we can define the shifted CohFT $\Omega^\tau$ and thus the quantum product $*_\tau$, with  structure constants lying in  $\mathbb A[[\tau]]$. If we assume that the structure constants  converge  on $\tau$   and we can choose the set of certain analytic functions as our $\mathbb C$-algebra $\mathbb A^\tau$, then  we can see a generalized Frobenius structure arises naturally on a neighborhood  $U$ containing $\tau=0 $. In general, even without assuming the convergence property, we can still define a {\it formal Frobenius manifold} via the quantum product of the shifted CohFT. We refer the reader to \cite{LP04} for more details about the definition of a formal Frobenius manifold.

\subsection{Homogeneity conditions}
\label{sec:hom-condition}
A homogeneous Frobenius manifold is defined by the existence of a coordinate system $\{\tau^a\}_{a=0}^{N-1}$ with a flat basis $\{\phi_a\}_{a=0}^{N-1}$  and a  vector field $E$, known as the  {\it Euler vector field}, which
has  the following form~\footnote{This is equivalent to the form $E=\sum_a(1-d_a)\tau^a\phi_a+\sum_a r^a\phi_a$ with $r^a=-(1-d_a)\tau^a_0$ for $d_a\ne 1$.}
\beq\label{def:euler}
E=\sum_{a}(1-d_a)\tilde\tau^a\pd_{\tau^a}+\sum_{d_a=1}r^a\pd_{\tau^a},\qquad  \tilde\tau^a=\tau^a-\tau^a_0
\eeq 
for some constants $\tau^a_0$ and satisfies the following equations:
\begin{align}
E\big(\eta(v_1,v_2)\big)-\eta([E,v_1],v_2)-\eta(v_1,[E,v_2])=&\, (2-\delta)\eta(v_1,v_2), \label{eqn:euler-eta}\\
[E,v_1*_{\tau}v_2]-[E,v_1]*_{\tau}v_2-v_1*_{\tau}[E,v_2]=&\, v_1*_{\tau}v_2, \label{eqn:euler-qp}	
\end{align}
for some constant $\delta$, which is called the {\it conformal dimension} of the Frobenius manifold.
We always identify $\pd_{\tau^a}$ with $\phi_a$ on the Frobenius manifold.

Introduce the grading operator $\mu$:
\beq\label{eqn:mu-E}
\mu(v)=\big(1-\tfrac{\delta}{2}\big)v-\nabla_{v}E,
\eeq
where $\nabla$ is the Levi--Civita connection of $\eta$.  
Then equation~\eqref{eqn:euler-eta} is equivalent to $\mu^{*}=-\mu$ ($\mu^*$ is the adjoint operator of $\mu$ with respect to $\eta$)
and equation~\eqref{eqn:euler-qp} is equivalent to the following homogeneity condition for the potential $\Phi(\tau)$:
\beq\label{eqn:hom-F0}
E \Phi(\tau)=(3-\delta) \Phi(\tau)+ \text{quadratic}+ \text{linear}+ \text{constant}.
\eeq
Introduce the operator $\E:=E*_{\tau}$, then for a homogeneous Frobenius manifold, one can see 
\beq\label{eqn:diff-E}
\pd_{\tau^a}\E=\phi_{a}*_{\tau}+[\phi_a*_\tau,\mu].
\eeq

We define a linear operator $\deg$ on the cohomology ring $H^{*}(\Mbar_{g,n}, \mathbb{Q})$, 
which acts on a homogeneous class $\alpha \in H^{2k}(\Mbar_{g,n}, \mathbb{Q})$ by $\deg \alpha := k\cdot \alpha$,
where $k$ denotes the complex degree of the class $\alpha$.

\begin{definition}
\label{homogeneouscohft} 
A  CohFT $\Omega$ is called {\it homogeneous} with respect to $E$, if for $2g-2+n>0$,
\begin{align}
	&\, \textstyle \pi_{\bullet,*}\Omega_{g,\bullet+n}(E|_{\tau=0},v_1,\cdots,v_n)-\sum_{i=1}^{n}\Omega_{g,n}(v_1,\cdots,(\tfrac{\delta}{2}+\mu)v_i,\cdots,v_n)\nonumber\\
	=&\, \big((g-1)\delta-\deg\big)\Omega_{g,n}(v_1,\cdots,v_n). \label{def:hom-CohFT-0}
\end{align} 
	A shifted CohFT $\Omega^{\tau}$ is called {\it homogeneous} with respect to $E$, if for $2g-2+n>0$,
	\begin{align}
		&\, \textstyle E \big(\Omega^{\tau}_{g,n}(v_1,\cdots,v_n)\big) -\sum_{i=1}^{n}\Omega^{\tau}_{g,n}(v_1,\cdots,[E,v_i],\cdots,v_n)\nonumber\\
		=&\, \big((g-1)\delta+n-\deg\big)\Omega^{\tau}_{g,n}(v_1,\cdots,v_n).\label{def:hom-CohFT}
	\end{align}
\end{definition}
\begin{remark}
We see that $\Omega$ is homogeneous if   equation~\eqref{def:hom-CohFT} holds  at $\tau=0$. 
	Conversely, by the definition of $\Omega^{\tau}$, one can deduce equation~\eqref{def:hom-CohFT} from equation~\eqref{def:hom-CohFT-0}.
	Therefore, $\Omega^{\tau}$ is homogeneous if and only if $\Omega$ is homogeneous.
\end{remark}

Consider $\pi_\bullet^{*}$ acting on both sides of equation~\eqref{def:hom-CohFT}, then by using similar method that we used to deduce the QDE~\eqref{eqn:QDE-vacuum} for $\vac^{\tau}(z)$, we have for homogeneous CohFT the vacuum vector satisfies the following homogeneity condition:
\beq\label{eqn:hom-vacuum}
(z\pd_z+E)\vac^{\tau}(z)=-\big(\mu+\tfrac{\delta}{2}\big)\vac^{\tau}(z).
\eeq
Furthermore, by the QDE~\eqref{eqn:QDE-vacuum}, the homogeneity condition can be rewritten as follows:
\beq\label{eqn:hom-vacuum-2}
(z\pd_z+\tfrac{1}{z}\E+\mu+\tfrac{\delta}{2})\vac^{\tau}(z)-\tfrac{1}{z}E=0.
\eeq
Let $v(\tau,z)=(z\pd_z+\tfrac{1}{z}\E+\mu+\tfrac{\delta}{2})\vac^{\tau}(z)-\tfrac{1}{z}E$, then by equations~\eqref{eqn:QDE-vacuum} and \eqref{eqn:diff-E}, one can see
$$
\pd_{\tau^a}v(\tau,z)=\tfrac{1}{z}\phi_a*_{\tau}v(\tau,z).
$$
Therefore, the homogeneity condition~\eqref{eqn:hom-vacuum-2} for $\vac^{\tau}(z)$ holds if and only if it holds at $\tau=0$.

Notice that the non-vanished correlator $\<\phi_{a_1}\bar\psi^{k_1},\cdots,\phi_{a_n}\bar\psi^{k_n}\>_{g,n}^{\tau}$ 
is obtained by taking integration of degree $(3g-3+n-\sum_{i}k_i)$ part of  $\Omega^{\tau}_{g,n}(\phi_{a_1},\cdots,\phi_{a_n})$, coupling with psi-classes $\prod_i \psi_i^{k_i}$ on $\Mbar_{g,n}$, the homogeneity condition~\eqref{def:hom-CohFT} for $\Omega^{\tau}$ gives the  following homogeneity condition for ancestor correlators: 
$$\textstyle
E\big(\<\phi_{a_1}\bar\psi^{k_1},\cdots,\phi_{a_n}\bar\psi^{k_n}\>_{g,n}^{\tau}\big)
=\big((\delta-3)(g-1)-\sum_{i=1}^{n}(1-d_{a_i}-k_i)\big)\<\phi_{a_1}\bar\psi^{k_1},\cdots,\phi_{a_n}\bar\psi^{k_n}\>_{g,n}^{\tau}.
$$
This is equivalent to the following homogeneity condition for potentials $\bar\cF_{g}^{\tau}({\bf s})$:
\beq\label{eqn:hom-Fg-anc-E}\textstyle
	\big(E+\sum_{k,a}(1-d_a-k)s_k^{a}\frac{\pd}{\pd s_k^{a}}\big)\bar\cF_g^{\tau}({\bf s})
	=(\delta-3)(g-1)\bar\cF_g^{\tau}({\bf s}).
\eeq

We introduce the {\it ancestor Euler vector field} $\mathscr E^{\tau}$ on the big phase space:
\beq\label{eqn:euler-anc-big}\textstyle
\mathscr E^{\tau}=\sum_{k,a}(1-d_a-k)\tilde s_k^{a}\frac{\pd}{\pd s_k^a}-\sum_{k,a,b}\E_b^a\tilde s_{k+1}^{b}\frac{\pd }{\pd s_{k}^a}.
\eeq
\begin{proposition}\label{prop:hom-anc}
The ancestor potentials $\bar\cF_g^{\tau}({\bf s})$, $g\geq 0$, for a homogeneous CohFT with vacuum satisfy
\beq\label{eqn:hom-Fg-anc}
\mathscr E^{\tau}\bar\cF_{g}^{\tau}({\bf s}) =(\delta-3)(g-1)\bar\cF_g^{\tau}({\bf s})
+\tfrac{\delta_{g,0}}{2}\eta(\E s_0,s_0)+\delta_{g,1}\<E\>_{1,1}^{\tau}.
\eeq
Moreover, the correlator $\<E\>_{1,1}^{\tau}$ is a constant and we denote it by $\cc$.
\end{proposition}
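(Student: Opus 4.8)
\emph{Plan.} The strategy is to rewrite the Euler ($\tau$-)derivative $E=\sum_b E^b\pd_{\tau^b}$, acting on the explicit $\tau$-dependence of the correlators, as a differential operator in the $s$-variables, and then to feed the result into the homogeneity relation~\eqref{eqn:hom-Fg-anc-E}. First I would differentiate $\bar\cF_g^{\tau}$ term by term. Applying Proposition~\ref{prop:diff-anc-corr} to each correlator $\<{\bf s}(\bar\psi)^{\otimes n}\>^{\tau}_{g,n}$ occurring in $\bar\cF_g^{\tau}$, the $\phi_b$-insertion terms assemble into $\pd\bar\cF_g^{\tau}/\pd s_0^b$ up to a boundary correction, whereas the quantum-product terms assemble exactly into $\sum_{l,a,c}s_{l+1}^a(\phi_b*_{\tau})_a^c\,\pd\bar\cF_g^{\tau}/\pd s_l^c$, where $(\phi_b*_{\tau})_a^c$ are the structure constants of $*_{\tau}$ and the $k=0$ summands produce no correction by the $\bar\psi^{-1}$ convention in Proposition~\ref{prop:diff-anc-corr}. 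The decisive point is a mismatch of summation ranges forced by the stability bound $2g-2+n>0$: the insertion term coming from the $\tau$-derivative begins one marking higher than $\pd\bar\cF_g^{\tau}/\pd s_0^b$. Keeping track of this discrepancy, I expect
\be
\pd_{\tau^b}\bar\cF_g^{\tau}=\frac{\pd\bar\cF_g^{\tau}}{\pd s_0^b}-B_{g,b}-\sum_{l,a,c}s_{l+1}^a(\phi_b*_{\tau})_a^c\,\frac{\pd\bar\cF_g^{\tau}}{\pd s_l^c},
\ee
where $B_{g,b}$ is the contribution of the lowest stable stratum: since $\bar\psi|_{\Mbar_{0,3}}=0$ one gets $B_{0,b}=\tfrac12\<\phi_b,s_0,s_0\>^{\tau}_{0,3}=\tfrac12\eta(\phi_b*_{\tau}s_0,s_0)$, whereas $B_{1,b}=\<\phi_b\>^{\tau}_{1,1}$ and $B_{g,b}=0$ for $g\ge 2$. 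Contracting with $E^b$ and using $\sum_b E^b(\phi_b*_{\tau})=\E$ turns the last operator into $\sum_{l,a,c}s_{l+1}^a\E_a^c\,\pd/\pd s_l^c$ and the boundary terms into $\tfrac12\eta(\E s_0,s_0)$ and $\<E\>^{\tau}_{1,1}$, respectively.

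Next I would treat the dilaton shift. Writing $\tilde s_k=s_k-w_k$ with $w_k=\vac^{\tau}_{k-1}$ (and $w_0=0$), the operator $\mathscr E^{\tau}$ splits as $(\star)-(\star\star)$, where $(\star)$ is the same expression in the bare variables $s_k$ and $(\star\star)$ collects the $w_k$-terms. The $z^k$-coefficient ($k\ge 0$) of the vacuum homogeneity equation~\eqref{eqn:hom-vacuum-2} reads $\E\vac^{\tau}_{k+1}=-(k+\mu+\tfrac{\delta}{2})\vac^{\tau}_k$; combined with $(\mu+\tfrac{\delta}{2})\phi_a=d_a\phi_a$ this makes the coefficient of $\pd/\pd s_k^a$ in $(\star\star)$ vanish identically for every $k\ge 1$, while for $k=0$ it reduces to $-E^a$ (using $\E\vac^{\tau}_0=E*_{\tau}{\bf 1}^{\tau}=E$). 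Hence $(\star\star)=-\sum_a E^a\,\pd/\pd s_0^a$ as an operator identity. Substituting the expression for $E\bar\cF_g^{\tau}$ from the first step into~\eqref{eqn:hom-Fg-anc-E} lets me solve for $(\star)\bar\cF_g^{\tau}$; then from $\mathscr E^{\tau}\bar\cF_g^{\tau}=(\star)\bar\cF_g^{\tau}+\sum_a E^a\,\pd\bar\cF_g^{\tau}/\pd s_0^a$ the two copies of $\sum_a E^a\,\pd/\pd s_0^a$ cancel, and what survives is exactly $(\delta-3)(g-1)\bar\cF_g^{\tau}+\tfrac{\delta_{g,0}}{2}\eta(\E s_0,s_0)+\delta_{g,1}\<E\>^{\tau}_{1,1}$, which is~\eqref{eqn:hom-Fg-anc}.

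For the final assertion, that $\cc=\<E\>^{\tau}_{1,1}$ does not depend on $\tau$, I would differentiate directly. Since $\pd_{\tau^b}E^a=(1-d_a)\delta^a_b$ and, by Proposition~\ref{prop:diff-anc-corr} applied to $\<\phi_a\>^{\tau}_{1,1}$ (whose $\bar\psi^{-1}$ correction is absent), $\pd_{\tau^b}\<\phi_a\>^{\tau}_{1,1}=\<\phi_b,\phi_a\>^{\tau}_{1,2}$, one finds $\pd_{\tau^b}\<E\>^{\tau}_{1,1}=(1-d_b)\<\phi_b\>^{\tau}_{1,1}+\<\phi_b,E\>^{\tau}_{1,2}$. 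On the other hand, the homogeneity relation for the single correlator $\<\phi_b\>^{\tau}_{1,1}$ gives $\<\phi_b,E\>^{\tau}_{1,2}=E\<\phi_b\>^{\tau}_{1,1}=-(1-d_b)\<\phi_b\>^{\tau}_{1,1}$, so the two contributions cancel and $\pd_{\tau^b}\cc=0$ for all $b$.

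The only genuinely delicate step is the bookkeeping in the first paragraph: the anomalous terms $\tfrac12\eta(\E s_0,s_0)$ and $\<E\>^{\tau}_{1,1}$ are \emph{entirely} boundary effects of the stability constraints, localized on $\Mbar_{0,3}$ and $\Mbar_{1,1}$. A careless computation that ignores the shift of summation ranges would produce the clean scaling $\mathscr E^{\tau}\bar\cF_g^{\tau}=(\delta-3)(g-1)\bar\cF_g^{\tau}$ with no anomaly, so the whole content of the proposition is getting these lowest-stratum contributions right.
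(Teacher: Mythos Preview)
Your proof is correct and follows essentially the same route as the paper: rewrite $\mathscr E^{\tau}$ by absorbing the dilaton shift via the vacuum QDE and homogeneity (your $(\star\star)$ computation is exactly the paper's rewriting of $\mathscr E^{\tau}$), then use Proposition~\ref{prop:diff-anc-corr} to express the first-order piece as $E\bar\cF_g^{\tau}$ plus the boundary anomalies on $\Mbar_{0,3}$ and $\Mbar_{1,1}$, and finally feed into~\eqref{eqn:hom-Fg-anc-E}. For the constancy of $\<E\>^{\tau}_{1,1}$ the paper argues at the level of the cohomology classes $\Omega^{\tau}_{1,1}$ and $\Omega^{\tau}_{1,2}$ directly, whereas you use the equivalent correlator-level homogeneity; both amount to the same cancellation $(1-d_b)\<\phi_b\>^{\tau}_{1,1}+\<\phi_b,E\>^{\tau}_{1,2}=0$.
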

\begin{proof}
To prove equation~\eqref{eqn:hom-Fg-anc}, we first notice that by the QDE~\eqref{eqn:QDE-vacuum} and homogeneity condition~\eqref{eqn:hom-vacuum} for the vacuum vector $\vac^{\tau}(z)$, the operator $\mathscr E^{\tau}$ can be rewritten as follows:
$$\textstyle
\mathscr E^{\tau}
=\big(\sum_a\eta(E,\phi^a)\frac{\pd}{\pd s_0^a}-\sum_{k,a,b}\E_b^a s_{k+1}^{b}\frac{\pd }{\pd s_{k}^a}\big)
+\sum_{k,a}(1-d_a-k) s_k^{a}\frac{\pd}{\pd s_k^a}.
$$
By equation~\eqref{eqn:diff-anc-corr}, we have
$$\textstyle
\big(\sum_a\eta(E,\phi^a)\frac{\pd}{\pd s_0^a}-\sum_{k,a,b}\E_b^a s_{k+1}^{b}\frac{\pd }{\pd s_{k}^a}\big)\bar\cF^{\tau}_{g}({\bf s})
=E\bar\cF^{\tau}_{g}({\bf s})+\frac{\delta_{g,0}}{2}\eta(\E s_0,s_0)+\<E\>^{\tau}_{1,1}.
$$
We see equation~\eqref{eqn:hom-Fg-anc} follows from equation~\eqref{eqn:hom-Fg-anc-E}.

Now we prove the correlator $\<E\>^{\tau}_{1,1}$ is a constant. Consider
	$$\textstyle
     \pd_{\tau^a}(\<E\>^{\tau}_{1,1})
	=\int_{\Mbar_{1,1}}\pd_{\tau^a}\Omega^{\tau}(E)=\int_{\Mbar_{1,1}}\Omega^{\tau}(\nabla_{\phi_a}E)+\int_{\Mbar_{1,2}}\Omega^{\tau}(E, \phi_a),
	$$
then by homogeneity condition of CohFT $\Omega^{\tau}$, the degree one part of $\Omega^{\tau}_{1,1}$ communicates with $\ad_{E}$, hence
	$$\textstyle
	\int_{\Mbar_{1,1}}\Omega^{\tau}(\nabla_{\phi_a}E)=-\int_{\Mbar_{1,1}}\Omega^{\tau}([E,\phi_a])
	=-\int_{\Mbar_{1,1}}E\big(\Omega^{\tau}(\phi_a)\big)
	=-\int_{\Mbar_{1,2}}\Omega^{\tau}(E, \phi_a).
	$$
This proves $\pd_{\tau^a}(\<E\>^{\tau}_{1,1})=0$ for each $a=0,\cdots, N-1$, and thus $\<E\>^{\tau}_{1,1}$ is a constant.
\end{proof}

\subsection{Semisimple CohFT and reconstruction theorem}
A CohFT $\Omega$ is called semisimple if the algebra $(H,\eta,*)$ is semisimple.
If the shifted CohFT $\Omega^{\tau}$ is semisimple, then we say $\tau$ is a semisimple point of the corresponding Frobenius manifold.
It is clear that if $\tau$ is a semisimple point, then so is the point in its (formal) neighborhood.

For a semisimple homogeneous CohFT $\Omega^{\tau}$ with vacuum $\vac^{\tau}(z)$, based on Givental's reconstruction procedure~\cite{Giv01a,Giv01b}, Teleman~\cite{Tel12} proved that $\Omega^{\tau}$ is uniquely and explicitly constructed from the Frobenius structure of $\Omega^{\tau}$.
In the follows, we briefly describe Givental and Teleman's result, we refer readers to~\cite{PPZ15} for more details and~\cite{CGL18} for generalizations of the setting.

Let $R^{\tau}(z)\in \mathbb E^{\tau}\otimes\End(\Frob)[[z]]$, called the $R$-matrix, be a formal power series
$$
R^{\tau}(z)=\id +R^{\tau}_1z+R^{\tau}_2z^2+\cdots,
$$
satisfying the symplectic condition $R^{\tau,*}(-z)R^{\tau}(z)=\id $, where $R^{\tau,*}(z)$ is the adjoint of $R^{\tau}(z)$ with respect to $\eta$.
Here $\mathbb E^{\tau}$ is an algebraic extension of the fractional field of $\mathbb A^{\tau}$.
We also introduce the matrix $V^{\tau}(z,w)$ associated with $R$-matrix:
$$
V^{\tau}(z,w)=\sum_{k,l\geq 0}V^{\tau}_{k,l}z^kw^l=\frac{\id -R^{\tau,*}(-z)R^{\tau}(-w)}{z+w}.
$$
An $R$-matrix defines an action, called the $R$-action, on a CohFT $\omega^{\tau}$ on $(\Frob,\eta)$ in the following way.
Let $\cG_{g,n}$ be the set of stable graphs of genus-$g$ with $n$-legs. For $\Gamma\in \cG_{g,n}$, define $\Cont_{\Gamma}:\Frob^{\otimes n}\to \mathbb E^{\tau}\otimes H^{*}(\Mbar_{g,n},\mathbb Q)$ by the following construction:
\begin{enumerate}[\qquad 1).]
\item place $\omega^{\tau}_{g(v),n(v)}$ at each vertex $v\in \Gamma$,
\item place $R^{\tau,*}(-\psi_i)\cdot $ at each leg $l_i\in \Gamma$ labeled by $i=1,\cdots,n$,
\item place $V^{\tau}(\psi_{v'},\psi_{v''})\phi_a\otimes \phi^a$ at each edge $e\in \Gamma$ connecting vertexes $v'$ and $v''$.
\end{enumerate}
The $R$-action on the CohFT $\omega^{\tau}$ is defined by
$$
(R^{\tau}\cdot \omega^{\tau})_{g,n}:=\sum_{\Gamma\in \cG_{g,n}}\frac{1}{|\Aut(\Gamma)|}\xi_{\Gamma,*}\Cont_{\Gamma},
$$
where $\xi_{\Gamma}:=\prod_{v\in \Gamma}\Mbar_{g(v),n(v)}\to \Mbar_{g,n}$ is the canonical map with image equal to boundary stratum associated to the graph $\Gamma$, and its push-forward $\xi_{\Gamma,*}$ induces a homomorphism from the strata algebra on  $\prod_{v\in \Gamma}\Mbar_{g(v),n(v)}$ to the cohomology ring (see~\cite{CGL18, PPZ15} for details).

Let $T^{\tau}(z)\in \mathbb E^{\tau}\otimes\Frob[[z]]$, called the $T$-vector, be a power series starting from degree $1$:
$$
T^{\tau}(z)=T^{\tau}_1z+T^{\tau}_2z^2+\cdots.
$$
A $T$-vector defines an action, called the $T$-action, on a CohFT $\omega^{\tau}$ on $(\Frob,\eta)$ by
$$
(T^{\tau}\cdot \omega^{\tau})_{g,n}(-)=\sum_{m\geq 0}\frac{1}{m!}\pi^{m}_{*}\omega^{\tau}_{g,n+m}(-\otimes T^{\tau}(\psi_{n+1})\otimes\cdots \otimes T^{\tau}(\psi_{n+m})),
$$
where $\pi^m:\Mbar_{g,n+m}\to \Mbar_{g,n}$ is the forgetful map forgetting the last $m$ marked points.

\begin{theorem}[Givental--Teleman reconstruction theorem]\label{thm:reconstruction}
	For a semisimple homogeneous CohFT $\Omega^{\tau}$ on $(\Frob,\eta)$ with the vacuum vector $\vac^{\tau}(z)$, 
let $\Frob=\oplus_{\alpha=1}^{N}\mathbb C\{\bar e_\alpha\}$ be the decomposition of algebra $(\Frob,\eta,*_{\tau})$ with $\eta(\bar e_\alpha,\bar e_\beta)=\delta_{\alpha,\beta}$, and $\bar e_\alpha*_{\tau}\bar e_\beta=\delta_{\alpha,\beta}\Delta_{\alpha}^{\frac{1}{2}} \bar e_\alpha$ for some functions $\Delta_{\alpha}=\Delta_{\alpha}(\tau)$,
then the CohFT $\Omega^{\tau}$ is uniquely reconstructed by the following formula:
	$$
	\Omega^{\tau}=R^{\tau}\cdot T^{\tau}\cdot (\oplus_{\alpha=1}^{N}\Omega^{\rm KW_\alpha}).
	$$
Here 
$\Omega^{\rm KW_\alpha}$ is the trivial CohFT on 1-dimensional space $\mathbb C\{\bar e_\alpha\}$
(i.e., $\Omega^{\rm KW_\alpha}_{g,n}(\bar e_\alpha,\dots,\bar e_\alpha)=1$),
$R^{\tau}$ is the $R$-matrix uniquely determined by
\beq\label{eqn:hom-R}
[R^{\tau}_{m+1},\E]=(m+\mu)R^{\tau}_m,\qquad m\geq 0,
\eeq
and $T^{\tau}$ is the $T$-vector given by
$$
T^{\tau}(z)=z\cdot {\bar {\bf 1}}-zR^{\tau}(z)^{-1}\vac^{\tau}(z),
$$
where $\bar {\bf 1}=\bar e_1+\cdots+\bar e_N$.
\end{theorem}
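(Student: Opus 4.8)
The plan is to deduce the statement from Teleman's classification of semisimple cohomological field theories~\cite{Tel12}, adapted to the vacuum setting. Teleman's theorem asserts that a semisimple CohFT is obtained from the topological (trivial) theory $\oplus_{\alpha}\Omega^{{\rm KW}_\alpha}$ by the combined action of a uniquely determined $R$-matrix together with a translation (the $T$-action). The two extra inputs I must supply are: (i) that homogeneity forces the recursion~\eqref{eqn:hom-R} on the $R$-matrix, and (ii) that the vacuum vector determines the explicit $T$-vector. Accordingly I would organize the argument into three steps: reconstruct the homogeneous $R^{\tau}$ from the Frobenius data, reconstruct $T^{\tau}$ from the vacuum axiom, and verify that the two actions combined reproduce $\Omega^{\tau}$ with the prescribed vacuum.

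\textbf{The $R$-matrix.} By semisimplicity the operator $\E=E*_{\tau}$ is diagonalizable with eigenvalues the canonical coordinates $u_{\alpha}$, which are pairwise distinct on the semisimple locus. Givental's reconstruction produces an $R$-matrix obeying the symplectic condition $R^{\tau,*}(-z)R^{\tau}(z)=\id$ and the normalization $R^{\tau}(0)=\id$. I would show that the homogeneity condition~\eqref{def:hom-CohFT}, propagated through the $R$-action, is equivalent to the recursion $[R^{\tau}_{m+1},\E]=(m+\mu)R^{\tau}_{m}$ for $m\geq 0$. This follows by differentiating the defining relations of $R^{\tau}(z)$ along the Euler field $E$ and using $\mu^{*}=-\mu$ together with~\eqref{eqn:diff-E}; the off-diagonal part of each equation is solvable precisely because the $u_{\alpha}$ are distinct, and the diagonal part is pinned by the symplectic condition. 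This gives existence and uniqueness of the homogeneous $R^{\tau}$.

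\textbf{The $T$-vector.} The topological theory carries the flat unit $\bar{\mathbf 1}=\sum_{\alpha}\bar e_{\alpha}$, whose vacuum is the constant $\bar{\mathbf 1}$, whereas the target vacuum is $\vac^{\tau}(z)$. The translation $T$ is the unique correction accounting for this discrepancy: tracking how the $R$- and $T$-actions modify the vacuum vector and imposing that the outcome equals $\vac^{\tau}(z)$ yields $zR^{\tau}(z)^{-1}\vac^{\tau}(z)=z\bar{\mathbf 1}-T^{\tau}(z)$, which is the stated formula. I would then check that $T^{\tau}(z)$ genuinely begins in degree $1$: since $R^{\tau}(0)=\id$ and $\vac^{\tau}_{0}={\mathbf 1}^{\tau}$ by Proposition~\ref{prop:QDE-vacuum}, the degree-$0$ term of $R^{\tau}(z)^{-1}\vac^{\tau}(z)$ equals the topological unit $\bar{\mathbf 1}$, so the constant term of $T^{\tau}$ cancels. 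Compatibility of $T^{\tau}$ with the grading is obtained by combining the vacuum homogeneity~\eqref{eqn:hom-vacuum} with the recursion~\eqref{eqn:hom-R}.

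\textbf{Main obstacle.} The principal difficulty is the interplay between the homogeneity recursion and Teleman's uniqueness statement: I must confirm that the homogeneous $R$-matrix determined by~\eqref{eqn:hom-R} is exactly the $R$-matrix appearing in the reconstruction, i.e.\ that homogeneity \emph{selects} Givental's $R$-matrix rather than merely constraining it. This reduces to showing that~\eqref{eqn:hom-R} is compatible with the QDE satisfied by $R^{\tau}(z)$ and admits a unique solution normalized by $R^{\tau}(0)=\id$, which again rests essentially on the distinctness of the canonical coordinates $u_{\alpha}$. A secondary technical point, which closes the loop, is to verify directly that $R^{\tau}\cdot T^{\tau}\cdot(\oplus_{\alpha}\Omega^{{\rm KW}_\alpha})$ satisfies the vacuum axiom~\eqref{eqn:vacuum-axiom} with precisely $\vac^{\tau}(z)$ as its vacuum, thereby confirming the chosen $T^{\tau}$.
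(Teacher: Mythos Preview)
The paper does not give its own proof of this statement: Theorem~\ref{thm:reconstruction} is quoted as the Givental--Teleman reconstruction theorem, with attribution to~\cite{Giv01a,Giv01b,Tel12} and a reference to~\cite{PPZ15,CGL18} for details, and is used later as a black box in the proof of Theorem~\ref{thm:anc-vira}(3). So there is nothing in the paper to compare your proposal against; your outline is essentially the standard argument one would extract from Teleman's paper, and it is the right shape.

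One small slip: in your $T$-vector step you assert that the degree-$0$ term of $R^{\tau}(z)^{-1}\vac^{\tau}(z)$ equals $\bar{\mathbf 1}$ because $\vac^{\tau}_{0}=\mathbf 1^{\tau}$. But $\mathbf 1^{\tau}=\sum_{\alpha}\Delta_{\alpha}^{-1/2}\bar e_{\alpha}=\sum_{\alpha}e_{\alpha}$, which is \emph{not} $\bar{\mathbf 1}=\sum_{\alpha}\bar e_{\alpha}$ unless all $\Delta_{\alpha}=1$. Fortunately this does not matter: $T^{\tau}(z)=z\bigl(\bar{\mathbf 1}-R^{\tau}(z)^{-1}\vac^{\tau}(z)\bigr)$ automatically has vanishing constant term because of the overall factor of $z$, so ``starts in degree $1$'' is immediate and there is nothing to check. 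What you have actually computed is that $T^{\tau}_{1}=\bar{\mathbf 1}-\mathbf 1^{\tau}$ is generically nonzero, which is fine and indeed expected (it is exactly the $\widehat{\Delta}$ rescaling that the paper separates out in equation~\eqref{eqn:A-KW}).
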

\begin{remark}\label{rmk:semicohft-vacuum}
For an arbitrary semisimple homogeneous CohFT $\Omega^{\tau}$ (without assuming the vacuum axiom in advance), Teleman~\cite{Tel12} has proved that there always exists a unique vacuum vector $\vac^{\tau}(z)$ such that the vacuum axiom~\eqref{eqn:vacuum-axiom} holds for $n\geq 1$ (but may fail for $n=0$, see~\cite[Remark 3.2]{Tel12}) and one always has
$$
\Omega^{\tau}_{g,n}=(R^{\tau}\cdot T^{\tau}\cdot (\oplus_{\alpha=1}^{N}\Omega^{\rm KW_\alpha}))_{g,n},\qquad n\geq 1.
$$
\end{remark}

\section{Descendents for CohFT}
\label{sec:CohFT-des}
 In this section, we first introduce the descendent potentials for a calibrated CohFT with vacuum, drawing parallels to the  GW theory. Our aim is to formally define the descendents in a manner independent of any specific geometric theory.
Next, we extend Givental's results regarding his Lagrangian cone to the context of the generalized Frobenius manifold.
	At the end of this section, we introduce the homogeneity condition for   both the calibrations and the descendent potentials, and subsequently define two types of Virasoro operators.
 
 \subsection{$S$- and $\dvac$-calibration of generalized Frobenius manifold} \label{sec:calibration}
For a CohFT with flat unit, the Kontsevich--Manin formula~\eqref{eqn:DA-correspondence} provides an approach to define a (formal) total descendent potential from the total ancestor potential via the $S$-matrix.
For a general CohFT (not necessarily contains a flat unit), the $S$-matrix can always be defined to be an operator-valued series which has the form:
$$S^{\tau}(z)= \id+\textstyle \sum_{n\geqslant 1} S^\tau_n z^{-n} \in \End H [[z^{-1}]]$$ 
and  satisfies the symplectic condition 
\beq\label{eqn:S-symp}
S^{\tau,*}(-z)S^{\tau}(z)=\id,
\eeq 
and the QDE: 
\beq\label{eqn:QDE-S}
z \pd_{\tau^a}S^{\tau}(z)= \phi_a*_{\tau}S^{\tau}(z), \qquad a=0,\cdots,N-1.
\eeq
It is clear that the $S$-matrix is not uniquely determined by the above conditions.
A choice of $S^{\tau}(z)$ is called an $S$-calibration of the (generalized) Frobenius manifold (of the CohFT).

Note that the flat unit plays a crucial role in the original Kontsevich–Manin formula~\eqref{eqn:DA-correspondence}.
To apply this formula for defining a (formal) total descendent potential in arbitrary CohFTs, we need to extend the notion of the flat unit on the descendent side, analogous to the vacuum vector on the ancestor side.
\begin{definition}[$\dvac$-calibration]
The $\dvac$-vector $\dvac^{\tau}(z)$ is an vector-valued series in $\Frob((z^{-1}))$ which satisfies the QDE: 
\beq\label{eqn:QDE-vacuum-des}
	z\pd_{\tau^a}\dvac^{\tau}(z)=\phi_a*_{\tau}\dvac^{\tau}(z)-\phi_a,\qquad
a=0,\cdots,N-1.
\eeq
Note the $\dvac$-vector is not uniquely determined by the above conditions. A choice of $\dvac^{\tau}(z)$ is called a $\dvac$-calibration of the (generalized) Frobenius manifold (of the CohFT).
\end{definition}

We note here the QDE~\eqref{eqn:QDE-vacuum-des} for the $\dvac$-vector $\dvac^{\tau}(z)$ is the same as the QDE~\eqref{eqn:QDE-vacuum} for the vacuum vector $\vac^{\tau}(z)$.
However, these two vectors belong to different spaces, $\dvac^{\tau}(z)$ resides in $\Frob((z^{-1}))$ while $\vac^{\tau}(z)$ lies in $\Frob[[z]]$ (in which space the solution to the QDE~\eqref{eqn:QDE-vacuum} is unique). 
The intersection of these two spaces is $\Frob[z]$, where the solution, if exists, is unique. 
In other words, if the $\dvac$-vector $\dvac^{\tau}(z)$ is a polynomial in $z$, then the vacuum vector $\vac^{\tau}(z)$ must also be a polynomial, and we have $\dvac^{\tau}(z)=\vac^{\tau}(z)$. 

Given an  $S$-matrix $S^{\tau}(z)$ and a $\dvac$-vector $\dvac^{\tau}(z)$, we define the $J$-function by:
\beq\label{def:J-function-small}
J^{\tau}(-z):=-zS^{\tau,*}(-z)\dvac^{\tau}(z).
\eeq
By equations~\eqref{eqn:QDE-S} and~\eqref{eqn:QDE-vacuum-des}, we see $\pd_{\tau^a}J^{\tau}(-z)=S^{\tau,*}(-z)\phi_a$ and thus $J^{\tau}(-z)$ has form
\beq\label{eqn:form-J}
J^{\tau}(-z)=\tau-\tau_0-z\dil(z)+  \textstyle \sum_{k\geq 0}J^{\tau}_{k,a}\phi^a(-z)^{-k-1}   \  \in \   H((z^{-1}))
\eeq 
for some constant vectors $\tau_0\in\Frob$, $\dil(z)\in\Frob[z]$.
Especially, we have
 \beq\label{eqn:dvac-u}
\dil(z)=[S^{\tau,*}(-z)\dvac^{\tau}(z)]_{+}.
\eeq
This is a constant vector and it will be used to generalize the dilaton shift.
\begin{remark}\label{rmk:J-nu}
Rather than introducing the \(\dvac\)-vector, if one begins by defining the \(J\)-function through  certain equations, the \(S\)-matrix can be derived from the equation \(\frac{\partial}{\partial\tau^a}J^{\tau}(-z)=S^{\tau,*}(-z)\phi_a\), and the \(\dvac\)-vector is determined from \(\dvac^{\tau}(z)=-z^{-1}S^{\tau}(z)J^{\tau}(-z)\). This alternative approach offers a different path to define these fundamental elements, highlighting a connection between the \(J\)-function and the \(\dvac\)-vector.
\end{remark}

Furthermore, note that $\eta(\phi_a,\pd_{\tau^c}S^{\tau}_1\phi_b)=\pd_{\tau^a}\pd_{\tau^c}\pd_{\tau^b}\Phi(\tau)$,
where $\Phi(\tau)$ is the potential of the Frobenius manifold induced by $\Omega^{\tau}$.
By taking integration of this equation, one has $\eta(\phi_a,S^{\tau}_1\phi_b)=\pd_{\tau^a}\pd_{\tau^b}\Phi(\tau)+C_{a,b}$ for some constants $C_{a,b}$. 
Recall the function $\Phi$ is defined up to a at most quadratic function of $\tau$, we require the function $\Phi(\tau)$ to satisfy
$\eta(\phi_a,S^{\tau}_1\phi_b)=\pd_{\tau^a}\pd_{\tau^b}\Phi(\tau)$.
Similarly, we require the function $\Phi(\tau)$ to satisfy $J_{0,a}^{\tau}=\pd_{\tau^a}\Phi(\tau)$.
The function $\Phi(\tau)$ is now determined up to a constant $\Phi(0)$.

\subsection{Descendent potentials for a  CohFT} 
Now we are ready to define the (formal) descendent potentials by applying a generalized version of the Kontsevich--Manin formula of an $S$- and $\dvac$-calibrated CohFT.

Firstly, for an arbitrary CohFT $\Omega^{\tau}$ (without requiring $S$- and $\dvac$-calibration), we define $F_1(\tau)$ by 
requiring $\pd_{\tau^a}F_1(\tau)=\int_{\M_{1,1}} \Omega^\tau_{1,1}( \phi_a)$. It is clear that $F_1(\tau)$ is determined by this condition up to a constant $F_1(0)$:
$$
F_1(\tau)=F_1(0)+\sum_{n\geq 1}\frac{1}{n!}\int_{\Mbar_{1,n}}\Omega_{1,n}(\tau,\cdots,\tau).
$$
Secondly, after choosing an $S$- and $\dvac$-calibration for a CohFT $\Omega^{\tau}$, we introduce
\beq\label{def:tildeW}\textstyle
\widetilde W^{\tau}({\bf t}-\tau,{\bf t}-\tau) :=2\Phi(\tau)+2\sum_{k,a} J^{\tau}_{k,a}\cdot (t_k^a-\delta_{k,0}\tau^a)+W^\tau({\bf t}-\tau,{\bf t}-\tau),
\eeq
where the bilinear form $W^{\tau}$, is defined by using $S$-matrix as in GW theory (see  \eqref{eqn:W-S}).
We note here for the CohFT with flat unit ${\bf 1}$, one has $\widetilde W^{\tau}({\bf t}-\tau,{\bf t}-\tau)=W^\tau(\tilde{\bf t},\tilde{\bf t}) +2c$ for some constant $c$ 
which can be taken to be vanished by changing $\Phi(\tau)\to \Phi(\tau)-c$.
Now we can define the (formal) total descendent potential $\cD({\bf t};\hbar)$ for an $S$- and $\dvac$-calibrated CohFT:

\begin{definition}
\label{def:formaldes}
	For an $S$- and $\dvac$-calibrated CohFT, its total descendent potential $\cD({\bf t};\hbar)$ 
	is defined  via the total ancestor potential by the following formula:
	\beq\label{eqn:des-anc}
	\cD({\bf t};\hbar)=e^{F_1(\tau)+\frac{1}{2\hbar^2}\widetilde W^{\tau}({\bf t}-\tau,{\bf t}-\tau)}\cdot\cA^{\tau}({\bf s(t)};\hbar),
	\eeq
	where  the coordinate transformation ${\bf s}={\bf s(t)}$  is defined by ${\bf s}(z)=[S^{\tau}(z){\bf t}(z)]_{+}-\tau$.
\end{definition}

Given the total descendent potential $\cD({\bf t};\hbar)$, the genus-$g$ descendent potential $\cF_g({\bf t})$ is defined by
$$
\log(\cD({\bf t};\hbar))=  \sum_{g\geq 0}\hbar^{2g-2}\cF_{g}({\bf t}),
$$
and the descendent invariants / correlators $\<-\>_{g,n}$ are defined by
$$
\<\phi_{a_1}\psi^{k_1},\cdots,\phi_{a_n}\psi^{k_n}\>_{g,n}
:=\pd_{t^{a_1}_{k_1}}\cdots\pd_{t^{a_n}_{k_n}}\cF_g({\bf t})\big|_{\bf t=0}.
$$

Introduce the $\tau$-shifted descendent correlators $\<-\>^{\tau}_{g,n}$:
\beq\label{def:bracket-des}
\<\phi_{a_1}\psi^{k_1},\cdots,\phi_{a_n}\psi^{k_n}\>^{\tau}_{g,n}
:=\pd_{t^{a_1}_{k_1}}\cdots\pd_{t^{a_n}_{k_n}}\cF_g({\bf t})\big|_{\bf t=\tau},
\eeq
then formula~\eqref{eqn:des-anc} can be rewritten in the correlator form as follows: for $2g-2+n>0$,
	\beq\label{eqn:des-anc-S}
	\<\phi_{a_1}\psi^{k_1},\cdots,\phi_{a_n}\psi^{k_n}\>^{\tau}_{g,n}
	=\<S^{\tau}(\bar\psi)\phi_{a_1}\bar\psi^{k_1},\cdots,S^{\tau}(\bar\psi)\phi_{a_n}\bar\psi^{k_n}\>^{\tau}_{g,n},
	\eeq
	where the correlators with insertion $\phi_{a}\bar\psi^{k}$ of negative $k$ are set to be $0$.
\begin{remark}When there are only the primary insertions, two definitions (equations~\eqref{def:bracket-des} and \eqref{def:bracket-anc}) of the correlator $\<\phi_{a_1},\cdots,\phi_{a_n}\>_{g,n}^{\tau}$ coincide with each other for $2g-2+n>0$; 
for $2g-2+n\leq 0$, the ancestor correlators are defined to be $0$ and we only consider the descendent correlators defined by equation~\eqref{def:bracket-des}.
\end{remark}

We have the following theorem generalizing Theorem 7.1 in~\cite{Giv01a}.
\begin{theorem}\label{thm:des-tau}
	The total descendent potential $\cD({\bf t};\hbar)$ does not depend on $\tau$.
\end{theorem}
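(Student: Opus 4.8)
The plan is to show $\pd_{\tau^b}\cD({\bf t};\hbar)=0$ for each $b$, i.e. $\pd_{\tau^b}\log\cD=0$, where by Definition~\ref{def:formaldes} $\log\cD=F_1(\tau)+\frac{1}{2\hbar^2}\widetilde W^{\tau}({\bf t}-\tau,{\bf t}-\tau)+\log\cA^{\tau}({\bf s}({\bf t});\hbar)$. The derivative of the last term splits into an \emph{explicit} part, coming from the $\tau$-dependence of the ancestor correlators, and a \emph{chain-rule} part, coming from the $\tau$-dependence of the change of variables ${\bf s}={\bf s}({\bf t})$; the strategy, following~\cite{Giv01a}, is to show that the leading pieces of these two parts cancel, leaving only low-genus anomalies that are absorbed by $\pd_{\tau^b}\big[F_1+\frac{1}{2\hbar^2}\widetilde W^\tau\big]$.

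First I would compute the chain-rule part. Differentiating ${\bf s}(z)=[S^{\tau}(z){\bf t}(z)]_+-\tau$, using the QDE~\eqref{eqn:QDE-S}, and noting that $S^\tau(z){\bf t}(z)$ and ${\bf s}(z)+\tau$ have the same nonnegative part in $z$, one gets the clean formula $\pd_{\tau^b}{\bf s}(z)=\phi_b*_{\tau}\tfrac{{\bf s}(z)-s_0}{z}-\phi_b$. Hence the chain-rule contribution to $\pd_{\tau^b}\log\cA^\tau({\bf s}({\bf t});\hbar)$ is $\big[\sum_{k\ge0,a}(\phi_b*_\tau s_{k+1})^a\pd_{s_k^a}-\pd_{s_0^b}\big]\log\cA^\tau$.

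Next I would compute the explicit part from Proposition~\ref{prop:diff-anc-corr}. Substituting the insertions ${\bf s}(\bar\psi)$ into~\eqref{eqn:diff-anc-corr} and resumming over the number of markings, the first term of~\eqref{eqn:diff-anc-corr} reproduces $\pd_{s_0^b}\log\cA^\tau$ and the second reproduces $-\sum_{k\ge0,a}(\phi_b*_\tau s_{k+1})^a\pd_{s_k^a}\log\cA^\tau$—precisely the negatives of the chain-rule terms. Since~\eqref{eqn:diff-anc-corr} is valid only for stable $(g,n)$, the resummation leaves boundary anomalies at the unstable values: the missing genus-$0$ three-point term gives $-\frac{1}{2\hbar^2}\eta(\phi_b*_\tau s_0,s_0)$, the missing genus-$1$ one-point term gives $-\<\phi_b\>^{\tau}_{1,1}$, and there is no anomaly for $g\ge2$. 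Adding the explicit and chain-rule parts, the $\pd_{s_0^b}$ and $\sum(\phi_b*_\tau s_{k+1})\pd_{s_k}$ terms cancel and I am left with $\pd_{\tau^b}\log\cA^\tau({\bf s}({\bf t});\hbar)=-\frac{1}{2\hbar^2}\eta(\phi_b*_\tau s_0,s_0)-\<\phi_b\>^{\tau}_{1,1}$.

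It then remains to verify $\pd_{\tau^b}\big[F_1(\tau)+\frac{1}{2\hbar^2}\widetilde W^\tau({\bf t}-\tau,{\bf t}-\tau)\big]=\frac{1}{2\hbar^2}\eta(\phi_b*_\tau s_0,s_0)+\<\phi_b\>^{\tau}_{1,1}$. The genus-$1$ ($\hbar^0$) part is immediate from the defining relation $\pd_{\tau^b}F_1(\tau)=\int_{\M_{1,1}}\Omega^{\tau}_{1,1}(\phi_b)=\<\phi_b\>^{\tau}_{1,1}$. The main obstacle is the genus-$0$ ($\hbar^{-2}$) identity $\pd_{\tau^b}\widetilde W^\tau({\bf t}-\tau,{\bf t}-\tau)=\eta(\phi_b*_\tau s_0,s_0)$, which I would prove by differentiating the three summands of $\widetilde W^\tau$ in~\eqref{def:tildeW} one by one. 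The derivative of $2\Phi(\tau)$ cancels the $\tau$-shift contribution of the linear term $2\sum_{k,a}J^{\tau}_{k,a}(t_k^a-\delta_{k,0}\tau^a)$ via $J^\tau_{0,a}=\pd_{\tau^a}\Phi$, and the remaining pieces—the explicit derivatives of $J^\tau$ and $W^\tau$ computed from $\pd_{\tau^a}J^{\tau}(-z)=S^{\tau,*}(-z)\phi_a$, the QDE~\eqref{eqn:QDE-S}, and the symplectic condition~\eqref{eqn:S-symp}, together with the $\tau$-shift contribution of $W^\tau$—reorganize, after extracting the relevant principal parts in $z$, into the single quadratic expression $\eta(\phi_b*_\tau s_0,s_0)$. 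This genus-$0$ computation is the technical heart of the theorem: it generalizes Givental's base-point independence~\cite[Theorem 7.1]{Giv01a} to the non-flat-unit setting, where the classical dilaton shift by ${\bf 1}$ is replaced by the data $(\tau_0,\dil(z))$ determined by the $\dvac$-calibration, so that one must carefully track the negative-power parts of $S^\tau(z){\bf t}(z)$ and of $J^\tau(-z)$.
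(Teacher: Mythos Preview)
Your approach is correct and amounts to the same argument as the paper's, just organized at the generating-function level rather than correlator by correlator. The paper proves the single identity
\[
\pd_{\tau^b}\big(\<\phi_{a_1}\psi^{k_1},\cdots,\phi_{a_n}\psi^{k_n}\>^{\tau}_{g,n}\big)=\<\phi_b,\phi_{a_1}\psi^{k_1},\cdots,\phi_{a_n}\psi^{k_n}\>^{\tau}_{g,n+1}
\]
for all $(g,n)$; this immediately gives $\pd_{\tau^b}\cF_g=0$ by Taylor-expanding $\cF_g$ at ${\bf t}=\tau$. For $2g-2+n>0$ the paper deduces it from the ancestor version~\eqref{eqn:diff-anc-corr} via the correlator-form KM relation~\eqref{eqn:des-anc-S} and the QDE~\eqref{eqn:QDE-S}---this is precisely your cancellation of the ``explicit'' and ``chain-rule'' pieces, packaged into a single line. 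What you identify as the ``technical heart'' (the identity $\pd_{\tau^b}\widetilde W^\tau=\eta(\phi_b*_\tau s_0,s_0)$ together with $\pd_{\tau^b}F_1=\<\phi_b\>^\tau_{1,1}$) is exactly the content of the unstable cases $(g,n)=(0,0),(0,1),(0,2),(1,0)$, which the paper dispatches with the phrase ``hold by definition'': the normalizations $\eta(\phi_a,S^\tau_1\phi_b)=\pd_{\tau^a}\pd_{\tau^b}\Phi$, $J^\tau_{0,a}=\pd_{\tau^a}\Phi$ and $\pd_{\tau^a}F_1=\<\phi_a\>^\tau_{1,1}$ in \S\ref{sec:calibration}--\S2.2 were chosen precisely so that these low-genus identities are automatic consequences of the QDE. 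So your ``main obstacle'' dissolves once one reads it as the defining property of $\widetilde W^\tau$; conversely, the paper's ``by definition'' hides exactly the computation you sketch.
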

\begin{proof}
	The Theorem is equivalent to the following equations: for $g\geq 0$,
	$$
	\pd_{\tau^b}\cF_{g}({\bf t})=0,\qquad b=0,\cdots, N-1.
	$$
	We prove these equations by proving the vanishing of coefficient of $\prod_{i=1}^{n} t_{k_i}^{a_i}$ on the left-hand side of these equations for all $n\geq 0$ and all $k_i\geq 0$, $a_i=0,\cdots, N-1$. Precisely, we prove
\beq\label{eqn:diff-des-corr}
\pd_{\tau^b}\big(\<\phi_{a_1}\psi^{k_1},\cdots,\phi_{a_n}\psi^{k_n}\>^{\tau}_{g,n}\big)=\<\phi_b,\phi_{a_1}\psi^{k_1},\cdots,\phi_{a_n}\psi^{k_n}\>^{\tau}_{g,n}.
\eeq
For $(g,n)=(0,0), (0,1), (0,2), (1,0)$, the results hold by definition. 
For the cases with $2g-2+n>0$, by using equation~\eqref{eqn:des-anc-S} and the QDE~\eqref{eqn:QDE-S} for $S$-matrix,
equation~\eqref{eqn:diff-des-corr} follows from equation~\eqref{eqn:diff-anc-corr}.
\end{proof}

\subsection{Givental's Lagrangian submanifold for generalized Frobenius manifold}
Following Givental~\cite{Giv01a,Giv04}, we introduce the loop space $\cH:=\Frob((z^{-1}))$, which by definition is the space of Laurent series in $z^{-1}$  with vector coefficients from $\Frob$:
$$ \textstyle
\Frob((z^{-1}))=\Big\{\sum_{k\geq 0} q_k z^k+ p_{k } (-z)^{-k-1} :  q _k, p_l \in \Frob  \Big\}.
$$ 
There is a symplectic bilinear form in $\cH$: for $f,g\in\cH$,
$$
\omega(f,g):=\frac{1}{2\pi {\bf i}}\oint\eta(f(-z), g(z)) dz=-\omega(g,f),
$$
where the integration is formally defined by $\frac{1}{2\pi {\bf i}}\oint \sum_{k} a_k z^{k}dz:=a_{-1}$.

By the polarization $\cH=\cH_{+}\oplus\cH_{-}=\Frob[z]\oplus z^{-1}\Frob[[z^{-1}]]$, 
the loop space $\cH$ can be identified with the cotangent bundle $T^{*}\cH_{+}$ of space $\cH_{+}$.
The genus-0 descendent potential $\cF_0({\bf t})$ can be viewed as a formal function on $\cH_{+}$ via the shift ${\bf q}(z)={\bf t}(z)-\tau_0-z\dil(z)$, where the shift is determined by equation~\eqref{eqn:form-J}.

Denote by $\cL$ the graph of the differential $d\cF_0$:
$$
\cL:=\big\{({\bf p,q})\in T^{*}\cH_{+}:{\bf p}=d_{\bf q}\cF_0({\bf t})\big\}.
$$
We see $\mathcal L$ is a Lagrangian submanifold of the loop space $\cH$.
We denote by $J({\bf t},-z)$ the point $({\bf q,p})$ in $\cL$ via the shift ${\bf q}(z)={\bf t}(z)-\tau_0-z\dil(z)$
and we call it the $J$-function on the big phase space.
Recall the $J$-function $J^{\tau}(-z)$ defined by~\eqref{def:J-function-small} and its relation with the genus-0 potential $\cF_0({\bf t})$~\eqref{def:tildeW}, we see $J^{\tau}(-z)=J({\bf t},-z)|_{\bf t=\tau}$. 
We introduce the tangent space of $\cL$:
$$
T\cL:=\sspan_{\mathbb C}\{\pd_{t_k^a}J({\bf t},-z): k\geq 0, a=0,\cdots, N-1\}.
$$

For the theory containing a flat unit, Givental proved that the system of ``string equation + dilaton equation + topological recursion relation" of the function $\cF_0({\bf t})$ is equivalent to $\cL$ is a Lagrangian cone satisfying $J({\bf t},-z)\in z\cdot T\cL \subset T\cL$ (see~\cite[Theorem 1]{Giv04} for details).
For arbitrary CohFTs, the string equation and the dilaton equation may no-longer hold,
we have only the topological recursion relation which is equivalent to the {\it constitutive relation} introduced by Dijkgraaf and Witten~\cite{DW90}. 
Introduce the DW map defined by the (formal) solution $\tau=\tau_{\rm DW}$ of the following equation:
\beq\label{eqn:DWmap}
\tau=[z^{0}](S^{\tau}(z){\bf t}(z)),
\eeq
where $[z^0]f(z)$ stands for the coefficient of $z^0$ in series $f(z)$,
then the constitutive relation reads
\beq\label{eqn:corr02-DW}
\frac{\pd^2\cF_0({\bf t})}{\pd t_k^a \pd t_{l}^b}=\<\phi_a\psi^k,\phi_b\psi^l\>^{\tau_{\rm DW}}_{0,2}.
\eeq
For the formal genus-$0$ potential $\cF_0({\bf t})$ defined in this paper, the constitutive relation can also be proved as follows. 
Firstly, by definition, we have
$$\textstyle
\frac{\pd^2\cF_0({\bf t})}{\pd t_k^a \pd t_{l}^b}=\sum_{m\geq 0}\frac{1}{m!}\<\phi_a\psi^k,\phi_b\psi^l,{\bf t}(\psi)-\tau,\cdots,{\bf t}(\psi)-\tau\>^{\tau}_{0,2+m}.
$$
	Secondly, by the relation of $\tau$-shifted descendent correlators and ancestor correlators~\eqref{eqn:des-anc-S}, 
	when taking $\tau$ to be the DW map (notice that $[z^0](S^{\tau}(z)({\bf t}(z)-\tau))=0$),
	we see for $m\geq 1$, the dimensional reason (the total degree of insertions is at least $m$ while $\dim \Mbar_{0,2+m}=m-1$) makes the correlators vanishing and one gets equation~\eqref{eqn:corr02-DW}.

\begin{proposition}\label{prop:Lagrangian}
	We have the following results:
	
	(1). $T\cL=S^{\tau_{\rm DW}, *}(-z) H[z]$;
	
	(2). $J({\bf t},-z)-J^{\tau_{\rm DW}}(-z)\in z\cdot T\cL $.
	 
\end{proposition}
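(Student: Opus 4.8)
The plan is to deduce both statements from a single master identity that rewrites $S^{\tau}(z)J(\mathbf t,-z)$ in ancestor terms, and then to feed in a dimension vanishing that occurs precisely at the Dijkgraaf--Witten point. Throughout I abbreviate $\tau=\tau_{\rm DW}$ and $S=S^{\tau}$. First I would record the genus-zero consequence of the definition of $\cD$ (equation~\eqref{eqn:des-anc}), namely
\[
\cF_0(\mathbf t)=\tfrac12\,\widetilde W^{\tau}(\mathbf t-\tau,\mathbf t-\tau)+\bar\cF_0^{\tau}(\mathbf s(\mathbf t)),
\]
valid for every fixed $\tau$. Differentiating once gives the momentum of $J$, and differentiating twice splits the two-point function into the constant block $\eta(\phi_a,W^{\tau}_{k,l}\phi_b)$ coming from $\widetilde W^{\tau}$ and a chain-rule block coming from $\bar\cF_0^{\tau}$.

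The heart of the argument is the identity, for fixed $\tau$ and all $k,a$,
\[
S(z)\,\pd_{t_k^a}J(\mathbf t,-z)=\pd_{t_k^a}\!\big(J^{\rm anc}_{\tau}(\mathbf s(\mathbf t),-z)\big),\qquad
J^{\rm anc}_{\tau}(\mathbf s,-z):=\mathbf s(z)-z\vac^{\tau}(z)+\textstyle\sum_{l}(\pd_{s_l^b}\bar\cF_0^{\tau})\,\phi^b(-z)^{-l-1}.
\]
To prove it I would apply $S(z)$ to $\pd_{t_k^a}J=\phi_a z^k+(\text{momentum})$: on the $\widetilde W$-block the symplectic relation $S^{\tau,*}(-z)S(z)=\id$ collapses the momentum through a difference-quotient computation with $W^{\tau}(\zeta,w)=\big(S^{\tau,*}(\zeta)S(w)-\id\big)/(\zeta^{-1}+w^{-1})$, producing exactly the truncation $[S(z)\phi_a z^k]_+$; on the ancestor block, writing $\pd s_m^c/\pd t_k^a$ as the coefficients of $S(z)$ and using $S(z)S^{\tau,*}(-z)=\id$ turns it into $\pd_{t_k^a}(J^{\rm anc}_{\tau}\circ\mathbf s)-[S(z)\phi_a z^k]_+$, and the two truncations cancel. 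Integrating in $\mathbf t$ and fixing the constant at $\mathbf t=\tau$ (there $\mathbf s=0$, $S(z)J^{\tau}(-z)=-z\dvac^{\tau}(z)$ from \eqref{def:J-function-small}, and $J^{\rm anc}_{\tau}(0,-z)=-z\vac^{\tau}(z)$) yields the master identity
\[
S(z)J(\mathbf t,-z)=\mathbf s(z)-z\dvac^{\tau}(z)+\textstyle\sum_{l}(\pd_{s_l^b}\bar\cF_0^{\tau})\,\phi^b(-z)^{-l-1}.
\]

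Next I would specialize to $\tau=\tau_{\rm DW}(\mathbf t)$, where $s_0=[z^0]\mathbf s=0$, so every background insertion $\mathbf s(\bar\psi)$ carries $\bar\psi$-degree $\ge1$. A dimension count then shows every genus-zero ancestor correlator with $j\le2$ marked insertions and at least one background insertion vanishes: the total $\psi$-degree is at least $n$ (the number of background insertions) while $\dim\Mbar_{0,j+n}=j+n-3$, forcing the marked $\psi$-degrees to sum to at most $j-3<0$. Hence both the ancestor one- and two-point functions vanish at the DW point. For part~(1) I then differentiate the master identity and drop the vanishing ancestor two-point term to get $S(z)\pd_{t_k^a}J=[S(z)\phi_a z^k]_+\in\Frob[z]$, i.e. $\pd_{t_k^a}J\in S^{\tau,*}(-z)\cH_+$, so $T\cL\subseteq S^{\tau,*}(-z)\cH_+$; since both sides are Lagrangian, transverse to $\cH_-$, and project isomorphically onto $\cH_+$ under $\pi_+$ (for $T\cL$ because $\pi_+\pd_{t_k^a}J=\phi_a z^k$, for the right side because $S^{\tau,*}(-z)$ is upper-triangular with identity leading term), the inclusion forces equality. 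For part~(2) the master identity at the DW point collapses, using the vanishing one-point function and $S(z)J^{\tau}(-z)=-z\dvac^{\tau}(z)$, to $S(z)\big(J(\mathbf t,-z)-J^{\tau}(-z)\big)=\mathbf s(z)\in z\Frob[z]$ (as $s_0=0$); by part~(1) the condition $g\in z\,T\cL$ is equivalent to $S(z)g\in z\Frob[z]$, which is exactly the claim.

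The main obstacle I anticipate is the generating-function bookkeeping in the master identity: carrying out the difference-quotient cancellation that converts the $W^{\tau}$-block into the clean truncation $[S(z)\phi_a z^k]_+$, and correctly pinning down the additive constant $z(\vac^{\tau}-\dvac^{\tau})$. Once that identity is in place, the dimension vanishing at the DW point is the conceptually clean input that makes both specializations collapse, so the remaining steps are essentially formal linear algebra in $\cH=\cH_+\oplus\cH_-$.
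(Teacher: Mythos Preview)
Your proposal is correct and lands on the same key equation as the paper, $S^{\tau_{\rm DW}}(z)\,\pd_{t_k^a}J(\mathbf t,-z)=[S^{\tau_{\rm DW}}(z)\phi_a z^k]_+$, via the same essential input (the dimension vanishing of genus-$0$ ancestor one- and two-point functions when $s_0=0$). The organization differs: the paper first isolates the constitutive relation $\pd_{t_k^a}\pd_{t_l^b}\cF_0=\langle\phi_a\psi^k,\phi_b\psi^l\rangle^{\tau_{\rm DW}}_{0,2}$ (using exactly your dimension count), then plugs in the explicit two-point formula $\langle\phi^b\psi^{l},\phi_a\psi^k\rangle^{\tau}_{0,2}=\sum_{i}(-1)^{i}(S^{\tau,*}_{l-i}S^{\tau}_{k+i+1})^{b}_{a}$ coming from the definition of $W^{\tau}$, and reads off $\pd_{t_k^a}J=S^{\tau_{\rm DW},*}(-z)[S^{\tau_{\rm DW}}(z)\phi_a z^k]_+$ directly. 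You instead build a single master identity for $S(z)J(\mathbf t,-z)$ in terms of the ancestor $J$-function, then differentiate and specialize. Your route gives a clean global statement and makes part~(2) fall out immediately from $s_0=0$, at the cost of the generating-function bookkeeping you flag; the paper's route avoids that bookkeeping by working at the level of two-point functions, and for the equality in~(1) it relies (implicitly) on the fact that $\{[S(z)\phi_a z^k]_+\}$ is an upper-triangular basis of $H[z]$, which is more elementary than your Lagrangian/transversality argument but yields the same conclusion.
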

\begin{proof}
	By definition and the constitutive relation~\eqref{eqn:corr02-DW},
$$\textstyle
	\pd_{t_k^a}(J({\bf t},-z))=\phi_a z^k+\sum_{l\geq 0}\phi_b\cdot \<\phi^b\psi^l,\phi_a\psi^k\>^{\tau_{\rm DW}}_{0,2}\cdot (-z)^{-l-1}.
$$
	Since $\<\phi^b\psi^{l},\phi_a\psi^k\>^{\tau}_{0,2}=\sum_{i=0}^{l}(-1)^{i}(S^{\tau,*}_{l-i}\cdot S^{\tau}_{k+i+1})_{a}^{b}$,
	we have
$$\textstyle
\pd_{t_k^a}(J({\bf t},-z))=	\phi_a z^{k}-\sum_{l\geq 0}\sum_{i=0}^{l}S^{\tau_{\rm DW},*}_{l-i}\cdot S^{\tau_{\rm DW}}_{k+i+1} \cdot (-z)^{-l+i}\cdot z^{-i-k-1}\cdot \phi_a z^k.
$$
Notice that the right hand side of above equation equals $\phi_a z^{k}-S^{\tau_{\rm DW},*}(-z)[S^{\tau_{\rm DW}}(z)\phi_a z^{k}]_{-}$, where $[f(z)]_{-}$ stands for the negative part of a Laurent series $f(z)$ in $z^{-1}$, then by equation~\eqref{eqn:S-symp}, we have
\beq\label{eqn:diff-W-S}\textstyle
\pd_{t_k^a}(J({\bf t},-z))=S^{\tau_{\rm DW},*}(-z)[S^{\tau_{\rm DW}}(z)\phi_a z^{k}]_{+}.
\eeq
	This proves the first part of the Proposition.
	
	Similarly, by noticing 
	$\frac{\pd \cF_0({\bf t})}{\pd t_k^a}=\<\phi_a\psi^k\>^{\tau_{\rm DW}}_{0,1}+\<\phi_a\psi^k,{\bf t}(\psi)-\tau_{\rm DW}\>^{\tau_{\rm DW}}_{0,2}$,
	we have
	$$
	J({\bf t},-z)-J^{\tau_{\rm DW}}(-z)=S^{\tau_{\rm DW},*}(-z)[S^{\tau_{\rm DW}}(z)({\bf t}(z)-\tau_{\rm DW})]_{+}.
	$$
	The second part of the Proposition follows from $[z^0]S^{\tau_{\rm DW}}(z)({\bf t}(z)-\tau_{\rm DW})=0$.
\end{proof}

\subsection{Homogeneity conditions for descendent potentials} 
\label{subsec:hom-des}
For a homogeneous CohFT with vacuum $\vac^{\tau}(z)$ and calibrated with $S$-matrix $S^{\tau}(z)$ and $\dvac$-vector $\dvac^{\tau}(z)$,
we introduce homogeneity condition for $\dvac^{\tau}(z)$ by directly applying the homogeneity condition~\eqref{eqn:hom-vacuum} for the vacuum $\vac^{\tau}(z)$, precisely,
\beq\label{eqn:hom-vacuum-des}
(z\pd_z+E)\dvac^{\tau}(z)=-\big(\mu+\tfrac{\delta}{2}\big)\dvac^{\tau}(z).
\eeq
We call such a vector $\dvac^{\tau}(z)$ is homogeneous.
Similarly as the vacuum vector,  the homogeneity condition~\eqref{eqn:hom-vacuum} for $\dvac^{\tau}(z)$ can be rewritten as
\beq\label{eqn:hom-vacuum-des-2}
(z\pd_z+\tfrac{1}{z}\E+\mu+\tfrac{\delta}{2})\dvac^{\tau}(z)-\tfrac{1}{z}E=0,
\eeq
and this equation holds if and only if it holds at $\tau=0$.
\begin{lemma}\label{lem:cm}
Given a homogeneous $\dvac$-vector, the   $\cdes$   in Definition~\ref{defnofm0} (2) is a constant.
\end{lemma}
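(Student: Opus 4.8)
The plan is to show directly that the coefficient of $z^{1-\delta}$ in $\eta(E,\dvac^{\tau}(z))$ is independent of $\tau$; since the prefactor $\tfrac12(-1)^{\virm}(\virm!)^2$ in \eqref{def:cm} is a $\tau$-independent constant, this immediately yields that $\cdes$ is constant. Equivalently, I will prove that $\pd_{\tau^b}$ of this coefficient vanishes for every $b=0,\dots,N-1$. Recall that we are in case (2) of Definition~\ref{defnofm0}, so $\delta$ is an odd integer with $\delta\ge 3$ and $1-\delta$ is a genuine (even, negative) exponent occurring in the Laurent expansion of $\dvac^{\tau}(z)\in\Frob((z^{-1}))$.

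First I would differentiate $\eta(E,\dvac^{\tau}(z))$ in $\tau^b$. Using $\nabla_{\phi_b}E=(1-d_b)\phi_b$ for the first factor and the QDE~\eqref{eqn:QDE-vacuum-des}, which gives $\pd_{\tau^b}\dvac^{\tau}(z)=\tfrac1z\big(\phi_b*_{\tau}\dvac^{\tau}(z)-\phi_b\big)$, for the second, one obtains
\beq
\pd_{\tau^b}\eta(E,\dvac^{\tau}(z))
=(1-d_b)\,\eta(\phi_b,\dvac^{\tau}(z))
+\tfrac1z\,\eta\big(E,\phi_b*_{\tau}\dvac^{\tau}(z)\big)
-\tfrac1z\,\eta(E,\phi_b).
\eeq
The key algebraic input is the Frobenius identity $\eta(a*_{\tau}b,c)=\eta(a,b*_{\tau}c)$ together with commutativity of $*_{\tau}$, which lets me rewrite $\eta\big(E,\phi_b*_{\tau}\dvac^{\tau}(z)\big)=\eta\big(\phi_b,\E\dvac^{\tau}(z)\big)$, where $\E=E*_{\tau}$.

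Next I would eliminate $\E\dvac^{\tau}$ using homogeneity. From the rewritten homogeneity condition~\eqref{eqn:hom-vacuum-des-2} one has $\tfrac1z\E\dvac^{\tau}(z)=\tfrac1zE-\big(z\pd_z+\mu+\tfrac{\delta}2\big)\dvac^{\tau}(z)$, so the two $\tfrac1z\eta(E,\phi_b)$ terms cancel. Applying $\mu^{*}=-\mu$ and $\mu(\phi_b)=\big(d_b-\tfrac{\delta}2\big)\phi_b$, the scalars $(1-d_b)$, $(d_b-\tfrac{\delta}2)$ and $-\tfrac{\delta}2$ collapse to $1-\delta$, and the whole expression simplifies to the clean identity
\beq\label{eqn:cm-key-sketch}
\pd_{\tau^b}\eta(E,\dvac^{\tau}(z))
=\big((1-\delta)-z\pd_z\big)\,\eta(\phi_b,\dvac^{\tau}(z)).
\eeq
The main point requiring care is the bookkeeping of the super-signs coming from $\eta$ (harmless, since $E$, $\dvac^{\tau}$ and $\phi_b$ may be taken homogeneous) and correctly tracking the two cancellations; this is the one genuinely computational step, but it is short.

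Finally I would extract the coefficient of $z^{1-\delta}$ from \eqref{eqn:cm-key-sketch}. Writing $\eta(\phi_b,\dvac^{\tau}(z))=\sum_n a_n(\tau)z^n$, the operator on the right sends this to $\sum_n\big((1-\delta)-n\big)a_n(\tau)z^n$, whose $z^{1-\delta}$-coefficient is $\big((1-\delta)-(1-\delta)\big)a_{1-\delta}=0$. Hence $\pd_{\tau^b}$ of the $z^{1-\delta}$-coefficient of $\eta(E,\dvac^{\tau}(z))$ vanishes for all $b$, so that coefficient---and therefore $\cdes$---is a constant. The essential observation, and the only place where the hypothesis $\tfrac{\delta-3}2\in\mathbb Z_{\ge0}$ enters, is that the first-order operator $(1-\delta)-z\pd_z$ annihilates precisely the Fourier mode $z^{1-\delta}$ that defines $\cdes$.
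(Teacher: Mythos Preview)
Your proof is correct and follows essentially the same approach as the paper: differentiate $\eta(E,\dvac^{\tau}(z))$ using the QDE~\eqref{eqn:QDE-vacuum-des} and the homogeneity condition~\eqref{eqn:hom-vacuum-des} to obtain an expression of the form $\big((1-\delta)-z\pd_z\big)\eta(\phi_b,\dvac^{\tau}(z))$, whose $z^{1-\delta}$-coefficient vanishes. Your write-up is in fact more detailed than the paper's, which states the key identity in one line without unpacking the cancellations.
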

\begin{proof}
By the QDE~\eqref{eqn:QDE-vacuum-des} and the homogeneity condition~\eqref{eqn:hom-vacuum-des}, 
for each $a=0,\cdots,N-1$,
$$
\pd_{\tau^a}\big(\eta(E,\dvac^{\tau}(z))\big)=(1-\delta)\cdot \eta(E,\dvac^{\tau}(z))-\eta(\phi_a,z\pd_z\dvac^{\tau}(z)).
$$
Therefore, $[z^{1-\delta}]\pd_{\tau^a}\big(\eta(E,\dvac^{\tau}(z))\big)=0$ and this proves $\cdes$ is a constant.
\end{proof}
To introduce the homogeneity condition for the $S$-matrix, we first recall in GW theory of $X$, the homogeneity condition for the $S$-matrix is given by
\beq\label{eqn:hom-S-GW}
(z\pd_z+E)S^{\tau}(z)=[S^{\tau}(z),\mu]+S^{\tau}(z)\rho/z,
\eeq
where $\rho=c_1(X)\cup$. 
For an arbitrary $S$-calibrated homogeneous CohFT, there is no geometric definition of the operator $\rho$,
and there may not have such operator $\rho$ such that the $S$-matrix satisfies equation~\eqref{eqn:hom-S-GW}.
Nonetheless, we can always have the following definition:
\begin{definition}
Given an $S$-calibrated homogeneous CohFT, the $\rho$-matrix takes the form $\rho(z)=\sum_{k\geq 0}\rho_kz^{-k}\in \End(\Frob)[[z^{-1}]]$ and is defined by
\beq\label{eqn:hom-S}
(z\pd_z+E)S^{\tau}(z)=[S^{\tau}(z),\mu]+S^{\tau}(z)\rho(z)/z.
\eeq
The $S$-matrix is called homogeneous if the operator $\rho(z)$ satisfies
\beq\label{eqn:hom-rho}
z\pd_z\rho(z)=\rho(z)+[\rho(z),\mu].
\eeq
For such case, we also call $\rho(z)$ is homogeneous and call equation~\eqref{eqn:hom-S} (resp. equation~\eqref{eqn:hom-rho}) the homogeneity condition for $S$-matrix (resp. $\rho$-matrix).
\end{definition}
By the QDE~\eqref{eqn:QDE-S}, the homogeneity condition for $S$-matrix can be rewritten as
\beq\label{eqn:hom-S-2}
(z\pd_z+\tfrac{1}{z}\E+\mu)S^{\tau}(z)-S^{\tau}(z)(\mu+\rho(z)/z)=0.
\eeq
Denote by $A(\tau,z)$ the left-hand side of above equation, then $\pd_{\tau^a}A(\tau,z)=\frac{1}{z}\phi_a*_{\tau}A(\tau,z)$.
Therefore, the homogeneity condition~\eqref{eqn:hom-S-2} for $S$-matrix holds if and only if it holds at $\tau=0$.

Based on the definition, we highlight three important results regarding the $\rho$-matrix and the homogeneous $S$-matrix.
Firstly, $\rho(z)$ does not depend on $\tau$ and satisfies $\rho^*(z)=\rho(-z)$.
This can be easily derived by using equations~\eqref{eqn:S-symp}, \eqref{eqn:QDE-S} and \eqref{eqn:euler-qp}.
Secondly, since $\mu$ is a diagonal matrix, one can see from equation~\eqref{eqn:hom-rho} a homogeneous $\rho$-matrix must be a polynomial in $z^{-1}$ and must be nilpotent
(see Appendix~\ref{sec:cons-hom-J} for a proof).
Finally, given a homogeneous CohFTs, the homogeneous $S$-matrix always exists (equivalently, the homogeneous $\rho$-matrix always exists).
Indeed, when the vacuum vector is flat (thus is the flat unit), this is equivalent to~\cite[Lemma 2.5]{Dub99} introduced by Dubrovin.
The general cases can be proved by using similar method as the proof of~\cite[Lemma 2.5]{Dub99}. 
A detailed proof is put in Appendix~\ref{sec:hom-S}.

By the definition of the $J$-function~\eqref{def:J-function-small}, 
the homogeneity conditions for the $\dvac$-vector and the $S$-matrix imply the following homogeneity condition for the $J$-function:
\beq\label{eqn:hom-J}
	(z\pd_z+E)J^{\tau}(-z)=\big(1-\tfrac{\delta}{2}-\mu-\rho(z)/z\big)J^{\tau}(-z).
\eeq
We show some consequences of this equation.
Firstly, the vector $\dil(z)$ satisfies 
\beq\label{eqn:hom-J-positive}
\big(z\pd_z+\mu+\tfrac{\delta}{2}\big)\dil(z)=0,\qquad
\rho(z)\dil(z)\in \Frob[z^{-1}].
\eeq
Secondly, the Euler vector field $E$ (see equation~\eqref{def:euler}) relates with $\rho$ and $\dil$ by
\beq\textstyle\label{eqn:hom-J-0}
\sum_{k\geq 0}\rho_k\dil_k=\sum_{d_a=1}r^a\phi_a.
\eeq
Thirdly, the potential $\Phi(z)$ for the Frobenius manifold satisfies
\begin{align}
E \Phi(\tau)=(3-\delta)\Phi(\tau) \textstyle +\frac{1}{2}\,\eta(\tilde\tau,\rho_0\tilde\tau)-\sum_{i\geq 0}\eta(\tilde\tau,\rho_{i+1}\dil_i)
 +\frac{1}{2}\sum_{i,j\geq 0}\eta(\dil_i,\rho_{i+j+2}\dil_j)+c_0. \label{eqn:hom-Phi}
\end{align}
for some constant $c_0$.
We put the explanation of these consequences in Appendix~\ref{sec:cons-hom-J}
\begin{remark}\label{rmk:cdes}
If $\delta\ne 3$, one can take $c_0=0$ by taking $\Phi(\tau)\to \Phi(\tau)-\frac{c_0}{3-\delta}$, but one can not do this if $\delta=3$.
For such case, if one further has $\dvac^{\tau}(z)\in \Frob[[z^{-1}]]$,
then the Virasoro-index $\virm=0$ and one can deduces $c_0=\cdes$ (the Virasoro constant defined in Definition~\ref{defnofm0} (2)) from equation~\eqref{eqn:hom-Phi}. 
\end{remark}

Given an homogeneous $S$-matrix with operator $\rho(z)$, we introduce the {\it descendent Euler vector field} $\mathscr E$ on the big phase space
(compare this with equation~\eqref{eqn:euler-anc-big}):
\beq\label{eqn:euler-des-big}
\mathscr E=\sum_{k,a}(1-d_a-k)\tilde t_k^{a}\frac{\pd}{\pd t_k^{a}}-\sum_{k,i,a,b}(\rho_i)_{b}^{a}\tilde t_{k+i+1}^{b}\frac{\pd}{\pd t_k^a},
\eeq
where $\tilde{\bf t}(z)={\bf t}(z)-\tau_0-z\dil(z)$.
We see equations~\eqref{eqn:hom-J-positive} and~\eqref{eqn:hom-J-0} imply $\mathscr E|_{{\bf t}=\tau}=E$.
\begin{proposition} \label{prop:hom-des}
The descendent potentials $\cF_g({\bf t})$, $g\geq 0$, of homogeneous CohFT with vacuum and calibrated by homogeneous $S$-matrix and $\dvac$-vector satisfy
	\beq\label{eqn:hom-Fg-des}
	\mathscr E \cF_g({\bf t})
	=(\delta-3)(g-1)\cF_g({\bf t}) +\frac{\delta_{g,0}}{2}\sum_{i,j\geq 0}\eta(\tilde t_i,\rho_{i+j}\tilde t_j)+\delta_{g,0}c_0 + \delta_{g,1}\cc ,
	\eeq
where $c_0$, $\cc$ are two constants given by equation~\eqref{eqn:hom-Phi} and Proposition~\ref{prop:hom-anc} respectively.
\end{proposition}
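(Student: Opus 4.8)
\emph{Strategy.} The plan is to transport the ancestor homogeneity of Proposition~\ref{prop:hom-anc} to the descendent side through the Kontsevich--Manin formula~\eqref{eqn:des-anc}. Extracting the coefficient of $\hbar^{2g-2}$ in $\log\cD$ from~\eqref{eqn:des-anc} yields
\[
\cF_g({\bf t})=\bar\cF_g^{\tau}({\bf s}({\bf t}))+\tfrac{\delta_{g,0}}{2}\,\widetilde W^{\tau}({\bf t}-\tau,{\bf t}-\tau)+\delta_{g,1}F_1(\tau).
\]
Since $\cF_g$ is independent of $\tau$ by Theorem~\ref{thm:des-tau}, while $\mathscr E$ is a vector field in ${\bf t}$ only, I keep $\tau$ fixed and compute $\mathscr E\cF_g$ by the chain rule; the essential term is $\mathscr E\big(\bar\cF_g^{\tau}({\bf s}({\bf t}))\big)=\sum_{l,b}(\mathscr E s_l^b)\,\pd_{s_l^b}\bar\cF_g^{\tau}$, so I must understand how $\mathscr E$ acts on the coordinate change ${\bf s}(z)=[S^{\tau}(z){\bf t}(z)]_+-\tau$.

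\emph{The conjugation identity.} Writing $\Theta=(1-\tfrac{\delta}{2})-\mu-z\pd_z$, definitions~\eqref{eqn:euler-des-big} and~\eqref{eqn:euler-anc-big} repackage the two Euler fields as $\mathscr E{\bf t}(z)=\big[(\Theta-\tfrac1z\rho(z))\tilde{\bf t}(z)\big]_+$ and $\mathscr E^{\tau}{\bf s}(z)=\big[(\Theta-\tfrac1z\E)\tilde{\bf s}(z)\big]_+$. Differentiating the coordinate change gives $\mathscr E{\bf s}=\big[S^{\tau}(\Theta-\tfrac1z\rho)\tilde{\bf t}\big]_+$, so the whole point is to commute $\Theta-\tfrac1z\rho(z)$ through $S^{\tau}(z)$. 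Using the homogeneity~\eqref{eqn:hom-S} of the $S$-matrix to eliminate $z\pd_z S^{\tau}$, and the QDE~\eqref{eqn:QDE-S} to rewrite the Euler derivative as $E(S^{\tau})=\tfrac1z\E S^{\tau}$, all correction terms collapse and I obtain
\[
S^{\tau}(z)\big(\Theta-\tfrac1z\rho(z)\big)\tilde{\bf t}(z)=\big(\Theta-\tfrac1z\E\big)\big(S^{\tau}(z)\tilde{\bf t}(z)\big).
\]
Hence $\mathscr E{\bf s}=\big[(\Theta-\tfrac1z\E)(S^{\tau}\tilde{\bf t})\big]_+$, and it remains to compare $S^{\tau}\tilde{\bf t}$ with $\tilde{\bf s}$.

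\emph{Matching the dilaton shifts.} Because $\Theta-\tfrac1z\E$ sends strictly negative powers of $z$ to strictly negative powers, only the nonnegative part of $S^{\tau}\tilde{\bf t}-\tilde{\bf s}$ can contribute; a short computation with the $J$-function~\eqref{def:J-function-small} identifies this part with $z\vac^{\tau}(z)-[z\dvac^{\tau}(z)]_+$. Now $\vac^{\tau}$ and $\dvac^{\tau}$ obey the \emph{same} homogeneity relation~\eqref{eqn:hom-vacuum-2},~\eqref{eqn:hom-vacuum-des-2}, from which the same computation gives $(\Theta-\tfrac1z\E)(z\vac^{\tau})=(\Theta-\tfrac1z\E)(z\dvac^{\tau})=-E$; taking nonnegative parts shows the two contributions cancel. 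Therefore $\mathscr E s_l^b=(\mathscr E^{\tau}s_l^b)\big|_{{\bf s}={\bf s}({\bf t})}$, and consequently $\mathscr E\big(\bar\cF_g^{\tau}({\bf s}({\bf t}))\big)=\big(\mathscr E^{\tau}\bar\cF_g^{\tau}\big)\big|_{{\bf s}={\bf s}({\bf t})}$, to which Proposition~\ref{prop:hom-anc} applies verbatim.

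\emph{Assembling the genera.} For $g\ge2$ this already gives $\mathscr E\cF_g=(\delta-3)(g-1)\cF_g$. For $g=1$ the term $F_1(\tau)$ is annihilated by $\mathscr E$ and the weight $(\delta-3)(g-1)$ vanishes, leaving exactly $\cc=\<E\>_{1,1}^{\tau}$. The genus-$0$ case is the main obstacle: after the reduction it amounts to the purely quadratic identity
\[
\mathscr E\big(\tfrac12\widetilde W^{\tau}\big)+\tfrac12\,\eta(\E s_0,s_0)\big|_{{\bf s}={\bf s}({\bf t})}=(3-\delta)\tfrac12\widetilde W^{\tau}+\tfrac12\sum_{i,j\ge0}\eta(\tilde t_i,\rho_{i+j}\tilde t_j)+c_0,
\]
which I verify by expanding $\widetilde W^{\tau}$ via its definition~\eqref{def:tildeW} and invoking the homogeneity of the potential $\Phi$~\eqref{eqn:hom-Phi} and of the $J$-function~\eqref{eqn:hom-J}; the constant $c_0$ emerges precisely as the constant appearing in~\eqref{eqn:hom-Phi}. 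Tracking the $\tau$-dependent quadratic coefficients in this last step is the most delicate bookkeeping and is where the homogeneity of the full calibration data is used.
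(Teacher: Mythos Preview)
Your proof is correct and proceeds along the same lines as the paper's: transport the ancestor homogeneity through the Kontsevich--Manin formula via the conjugation identity coming from~\eqref{eqn:hom-S-2}. The one genuine difference is that you keep $\tau$ fixed and work with $\mathscr E$ alone, deducing $\mathscr E s_l^b=(\mathscr E^{\tau}s_l^b)|_{{\bf s}={\bf s}({\bf t})}$ and then invoking Proposition~\ref{prop:hom-anc}; the paper instead applies the combined field $E+\mathscr E$ (using $E\cF_g=0$ from Theorem~\ref{thm:des-tau}), obtains the simpler formula $(E+\mathscr E)s_k=(1-\tfrac{\delta}{2}-\mu-k)s_k$, and feeds this into the more elementary homogeneity~\eqref{eqn:hom-Fg-anc-E} rather than the full ancestor identity~\eqref{eqn:hom-Fg-anc}.

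The practical payoff of the paper's choice shows up exactly where you flag difficulty: the genus-$0$ quadratic term. With $E+\mathscr E$, the $\tau$-dependent pieces $\Phi(\tau)$, $J^{\tau}$, and $W^{\tau}(z,w)$ are themselves homogeneous under $E$, so the paper can write a clean homogeneity identity for $W^{\tau}(z,w)$ and combine it with~\eqref{eqn:hom-Phi} to evaluate $(E+\mathscr E)\widetilde W^{\tau}$ in one line. In your setup $\mathscr E$ annihilates $\Phi(\tau)$ and the explicit $\tau$-pieces of $J^{\tau}$, so the homogeneity relations~\eqref{eqn:hom-Phi} and~\eqref{eqn:hom-J} enter only as algebraic identities among the $\tau$-dependent coefficients when you match the constant, linear, and quadratic terms in ${\bf t}$; this is doable (and your claimed identity is correct) but the bookkeeping is heavier than your sketch suggests. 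If you want to keep your approach, it is worth writing out that coefficient match explicitly, or alternatively noting that $\mathscr E=E+\mathscr E$ on $\cF_0$ and then switching to the paper's computation for the $\widetilde W^{\tau}$ piece.
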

\begin{proof}
We prove equation~\eqref{eqn:hom-Fg-des} by the homogeneity conditions for $J$-function~\eqref{eqn:hom-J} and for $\bar\cF_g({\bf s})$~\eqref{eqn:hom-Fg-anc-E}, as well as equation~\eqref{eqn:hom-S} with homogeneous $\rho(z)$.
Firstly, by the coordinate transformation ${\bf s}(z)=[S^{\tau}(z){\bf t}(z)]_{+}-\tau$ and equation~\eqref{eqn:hom-S}, we have
$$
(E+\mathscr E)(s_k)=(1-\tfrac{\delta}{2}-\mu-k)s_k,
$$
Together with equation~\eqref{eqn:hom-Fg-anc-E}, we obtain
$$
(E+\mathscr E)\bar\cF^{\tau}_{g}({\bf s(t)})=(\delta-3)(g-1)\bar\cF^{\tau}_{g}({\bf s(t)}).
$$
Secondly, $(E+\mathscr E)F_1(\tau)=\<E\>^{\tau}_{1,1}$ and as we have shown in the proof of Proposition~\ref{prop:hom-anc}, $\<E\>^{\tau}_{1,1}=\cc$ is a constant.
Thirdly, equation~\eqref{eqn:hom-S} gives us the following homogeneity condition for the $W$-matrix (see equation~\eqref{eqn:W-S} for the definition):
$$
(z\pd_z+w\pd_{w}+E)W^{\tau}(z,w)=[W^{\tau}(z,w),\mu]+\tfrac{\rho^{*}(z)}{z} \cdot W^{\tau}(z,w)+ W^{\tau}(z,w)\cdot  \tfrac{\rho(w)}{w} +W^{\tau}(z,w).
$$
Then by equation~\eqref{eqn:hom-Phi} and by direct computations, we have
$$\textstyle
(E+\mathscr E)\widetilde W^{\tau}({\bf t-\tau};{\bf t-\tau})=(3-\delta)\widetilde W^{\tau}({\bf t-\tau};{\bf t-\tau})
+\sum_{i,j\geq 0}\eta(\tilde t_i,\rho_{i+j}\tilde t_j)+2c_0.
$$ 
The Proposition follows from these computations and $E(\cF_g({\bf t}))=0$ (by Theorem~\ref{thm:des-tau}, $\cF_g({\bf t})$ does not depend on $\tau$).
\end{proof}

\begin{remark}
A descendent potential that satisfies \eqref{eqn:hom-Fg-des} is called homogeneous. Then, the generalized Virasoro conjecture claims that the homogeneous total descendent potential $\cD({\bf t};\hbar)$ satisfies the   {\it generalized Virasoro constraints} under certain conditions. 
\end{remark}

\subsection{Generalized  and ancestor Virasoro operators  via quantization}
For a infinitesimal transformation $A\in\End(\Frob((z^{-1})))$, i.e., $\omega(Af,g)=-\omega(f,Ag)$, let
\beq\label{eqn:quantization-A}
h_A(f):=\tfrac{1}{2}\cdot \omega(Af,f).
\eeq
and  we define  the quantization of $A$ by
$\widehat{A}:=\widehat{h_A(f)}$, where $f = \sum_{k\geq 0} q_k z^k + p_k (-z)^{-k-1}$   for $p_k=p_{k,a}\phi^a,q_k=q^a_k\phi_a \in \Frob$   and the quantization of quadratic Hamiltonians are given by
$$
(q_k^a q_{l}^b)^{\hat{}}=q_k^a q_{l}^b/\hbar^2,\qquad 
(q_k^a p_{l,b})^{\hat{}}=q_k^a \pd_{q_l^b},\qquad
(p_{k,a} p_{l,b})^{\hat{}}=\hbar^2 \pd_{q_k^a}\pd_{q_l^b}.
$$ 

For the case with a homogeneous $\rho$-matrix $\rho(z)=\rho\in\End(\Frob)$, following Givental~\cite{Giv01a}, we introduce the infinitesimal transformations $\ell_m$ for $m\geq -1$:
$$
\ell_m=-z^{-1/2}(z\pd_z z+\mu z+\rho)^{m+1}z^{-1/2} .
$$
Then the Virasoro operator $L_m$ (see equation~\eqref{def:Lm}) can be expressed as
\beq\label{def:Lm-quantum}
L_m=\widehat{\ell_m}+\tfrac{\delta_{m,0}}{4}\, \str\big(\tfrac{1}{4}-\mu^2\big),
\eeq
where the coordinate ${\bf q}(z) =\sum_{k} q_k z^k$ is related with the coordinate ${\bf t}(z)$ by ${\bf q}(z)=\tilde{\bf t}(z)$. 
In general, for the case with a homogeneous $\rho$-matrix $\rho(z)=\rho_0+\rho_1z^{-1}+\cdots\in \End(\Frob)[z^{-1}]$, we generalize the definition of $\ell_m$ by
$$
\ell'_m=-z^{-1/2}(z\pd_z z+\mu z+\rho(z))^{m+1}z^{-1/2},
$$
and define $L_m$ by substituting $\ell'_m$ for $\ell_m$ in equation~\eqref{def:Lm-quantum}.

Similarly, we introduce the infinitesimal transformations for the ancestors in parallel:
\begin{lemma}
For $m\geq -1$, let $\ell^{\anc}_m$ be the infinitesimal transformations defined by
$$
\ell^{\anc}_m=-z^{-1/2}(z\pd_z z+\mu z+\E)^{m+1}z^{-1/2},
$$
then
the Virasoro operator $L^{\anc}_m$ (see equation~\eqref{def:Lmtau}) can be expressed as
$$
L^{\anc}_m=\widehat{\ell^{\anc}_m}+\frac{1}{4}\sum_{i+j=m}\str\big(\E^i(\tfrac{1}{2}+\mu)\E^j(\tfrac{1}{2}-\mu)\big),
$$
where the coordinate ${\bf q}(z)$ is related with the coordinate ${\bf s}(z)$ by ${\bf q}(z)={\bf s}(z)-z\vac^{\tau}(z)$.
\end{lemma}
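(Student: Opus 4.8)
The plan is to strip the formal half-powers $z^{\pm 1/2}$ from $\ell^{\anc}_m$, compute the quadratic Hamiltonian $h_{\ell^{\anc}_m}(f)=\tfrac12\omega(\ell^{\anc}_m f,f)$ in the Darboux coordinates $(q,p)$, apply the quantization rules term by term, and compare with the explicit formula~\eqref{def:Lmtau}. First I would perform the conjugation: setting $D=z\pd_z z+\mu z+\E$ and $P=\E+(\mu+\tfrac32)z+z^2\pd_z$, a short computation on monomials shows $z^{-1/2}Dz^{1/2}=P$, whence $D^{m+1}=z^{1/2}P^{m+1}z^{-1/2}$ and
$$
\ell^{\anc}_m=-z^{-1/2}D^{m+1}z^{-1/2}=-P^{m+1}z^{-1}.
$$
This makes $\ell^{\anc}_m$ a genuine integer-power operator and, evaluating on $\phi_a z^k$, gives $\ell^{\anc}_m(\phi_a z^k)=-\sum_{l,b}(C^{\tau}_m)^{l,b}_{k-1,a}\phi_b z^{l}$, the exact bridge to the coefficients in~\eqref{def:Lmtau}. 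The lowest output $z^{-1}$ (only at $k=0$) is produced by applying $\E$ at every step, so it carries $\E^{m+1}$.

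Next I would confirm that $\ell^{\anc}_m$ is infinitesimally symplectic, which guarantees that $h_{\ell^{\anc}_m}$ is the right object to quantize. It suffices to show $D$ is $\omega$-self-adjoint: expanding $\omega(Df,g)=\tfrac1{2\pi\mathbf i}\oint\eta((Df)(-z),g(z))\,dz$, integrating by parts the $z^2\pd_z$ term, and invoking $\mu^{*}=-\mu$ (equivalent to~\eqref{eqn:euler-eta}) together with the Frobenius identity $\eta(\E u,v)=\eta(E,u*v)=\eta(u,\E v)$ give $\omega(Df,g)=\omega(f,Dg)$. Conjugating by $z^{-1/2}$ then turns self-adjointness into anti-self-adjointness (the two formal factors $(-1)^{-1/2}$ multiply to $-1$), so $\ell^{\anc}_m$ is infinitesimally symplectic.

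The heart of the argument is the term-by-term matching. Writing $f=f_++f_-$ for the polarization $\cH=\cH_+\oplus\cH_-$ and using anti-self-adjointness to equate the two cross terms, one gets the homogeneous quadratic form
$$
h_{\ell^{\anc}_m}(f)=\tfrac12\omega(\ell^{\anc}_m f_+,f_+)+\omega(\ell^{\anc}_m f_+,f_-)+\tfrac12\omega(\ell^{\anc}_m f_-,f_-),
$$
which has no constant term. The $qq$-piece comes solely from the $z^{-1}$ output at $k=0$ and equals $\tfrac12\eta(\E^{m+1}q_0,q_0)$; since $q_0=\tilde s_0=s_0$ under ${\bf q}(z)=\tilde{\bf s}(z)$, it quantizes to $\tfrac1{2\hbar^2}\eta(\E^{m+1}s_0,s_0)$. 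The $qp$-piece, after the residue pairing $\omega(\phi_b z^{l},f_-)=-p_{l,b}$, equals $\sum_{k,l\ge 0}(C^{\tau}_m)^{l,b}_{k-1,a}q_k^a p_{l,b}$ (the two minus signs cancel), quantizing to $\sum(C^{\tau}_m)^{l,b}_{k-1,a}\tilde s_k^a\pd_{s_l^b}$ with the correct sign and range $0\le l\le m+k$. The $pp$-piece, obtained by acting on $\phi^a(-z)^{-k-1}=(-1)^{k+1}\phi^a z^{-k-1}$ and raising indices with $\eta$, yields the factor $(-1)^{k+1}(C^{\tau}_m)^{l,c}_{-k-2,b}\eta^{ab}$ and, with the $\hbar^2$ from $(p_{k,a}p_{l,b})^{\hat{}}$, reproduces the second-order term of~\eqref{def:Lmtau}. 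Because the quantization carries no constant, the difference $L^{\anc}_m-\widehat{\ell^{\anc}_m}$ is exactly the constant in~\eqref{def:Lmtau}; in the super-symmetric setting this is the super-trace, and using $\mu-\tfrac12=-(\tfrac12-\mu)$ I would rewrite it as $\tfrac14\sum_{i+j=m}\str(\E^i(\tfrac12+\mu)\E^j(\tfrac12-\mu))$, the claimed correction.

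I expect the main obstacle to be the $pp$-block bookkeeping: correctly tracking the sign $(-1)^{k+1}$, the index-raising $\eta^{ab}$, and the support of the negative-index coefficients $(C^{\tau}_m)^{l,c}_{-k-2,b}$ so that the residue pairings land precisely on the second-order operator in~\eqref{def:Lmtau}. By contrast, the symplecticity check and the $qq$- and $qp$-matchings are routine once the identity $\ell^{\anc}_m=-P^{m+1}z^{-1}$ and the matrix elements are established.
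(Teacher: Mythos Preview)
Your proposal is correct and follows essentially the same approach as the paper. The paper's proof is a single sentence --- ``Compare $\ell^{\anc}_m$ with $\ell_m$, by replacing $\rho$ with $\E$, the proof for the ancestor case closely parallels that of the descendent case'' --- and what you have written is precisely that parallel computation spelled out in full: the conjugation identity $\ell^{\anc}_m=-D_{\E,z}^{m+1}z^{-1}$ you derive is exactly the paper's formula displayed just before equation~\eqref{eqn:L0m}, and your term-by-term Hamiltonian matching is the standard Givental-style argument the paper is invoking by reference.
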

\begin{proof} 
Compare $\ell^{\anc}_m$ with $\ell_m$, by replacing $\rho$ with $\E$, the proof for the ancestor case closely parallels that of the descendent case.
\end{proof}
\begin{remark}
Let $\ell_{m}^{\circ}:=-z^{-1/2}(z\pd_z z)^{m+1}z^{-1/2}$
then we have 
$$
\ell'_m=z^{-\mu}z^{-\sum_i\rho_{i}}\ell_m^{\circ}z^{\sum_{i}\rho_i}z^{\mu},\qquad
\ell^{\anc}_{m}=S^{\tau}(z)z^{-\mu}z^{-\sum_i\rho_{i}}\ell_{m}^{\circ}z^{\sum_{i}\rho_i}z^{\mu}S^{\tau}(z)^{-1}.
$$
For either case, $\rho(z)\in \End(\Frob)[z^{-1}]$ or $\rho(z)=\rho\in \End(\Frob)$, we have 
firstly, $\ell^{\anc}_{m}$ does not depend on the precise form of $\rho(z)$ and 
secondly, the definitions of $\ell'_m$ and $\ell_m$ share the same form.
Since the results for arbitrary $\rho(z)$ are parallel to those for $\rho(z)=\rho\in \End(\Frob)$, in the follows of this paper, we will focus on the case that $\rho(z)=\rho$ for simplicity.
\end{remark}

Introduce two operators $D_{\rho,z}$ and $D_{\E,z}$ by
\beq\label{def:DAz}
D_{A,z}:=A+(\mu+\tfrac{3}{2})z+z^2\pd_z, \qquad A=\rho,\,  \E,
\eeq
then by the QDE~\eqref{eqn:QDE-S} and the homogeneity condition~\eqref{eqn:hom-S} for $S$-matrix,
one can see 
\beq\label{eqn:D-rho-E}
S^{\tau}(z)D_{\rho,z}=D_{\E,z}S^{\tau}(z).
\eeq
The infinitesimal transformations $\ell_m$ and $\ell^{\anc}_m$ can be expressed as follows:
$$
\ell_m=-D^{m+1}_{\rho,z}\cdot z^{-1},\qquad
\ell^{\anc}_m=-D^{m+1}_{\E,z}\cdot z^{-1}.
$$
By identifying $\phi_a\psi^k$ with $\pd_{t^a_k}$,
the Virasoro operators $L_m$, $m\geq -1$, can be formally rewritten as follows
	\begin{align}
		L_m=&\, \textstyle \frac{1}{2\hbar^2}\,\eta(\rho^{m+1} \tilde {t}_0, \tilde {t}_0)
		-\delta_{m,0}\cdot \frac{1}{4}\, \str\big(\mu^2-\frac{1}{4}\big)
		+[D^{m+1}_{\rho,\psi}\psi^{-1}\tilde{\bf t}(\psi)]_{+}\nonumber\\
		&\, \textstyle +\frac{\hbar^2}{2}\sum_{k=1}^{m}(-1)^k\phi_a\psi^{k-1}\circ
		[D^{m+1}_{\rho,\psi}\phi^a\psi^{-k-1}]_{+}, \label{eqn:Lm-des}
	\end{align}
where the symbol ``$\circ$" represents the composition of two operators.
Similarly, by identifying $\phi_a\bar\psi^k$ with $\pd_{s^a_k}$, we have
	\begin{align}
		L^{\anc}_m=&\, \textstyle\frac{1}{2\hbar^2}\eta(\E^{m+1}s_0,  s_0)
		-\frac{1}{4} \sum_{i+j=m} \str(\E^i(\mu+\tfrac{1}{2})\E^j(\mu-\tfrac{1}{2}))
		+[D^{m+1}_{\E,\bar\psi}\bar\psi^{-1}\tilde{\bf s}(\bar\psi)]_{+}\nonumber\\
		&\, \textstyle+\frac{\hbar^2}{2}\sum_{k=1}^{m}(-1)^k\phi_a\bar\psi^{k-1}\circ
		[D^{m+1}_{\E,\bar\psi}\phi^a\bar{\psi}^{-k-1}]_{+}. \label{eqn:Lm-anc}
	\end{align}

By the quantization formulation of the Virasoro operators $L_m$, one can see for $m\geq -1$,
\beq\label{eqn:L0m}
\mathscr L_{0,m}({\bf t})=-\tfrac{1}{2}\, \omega(D^{m+1}_{\rho,z}z^{-1}J({\bf t},-z),J({\bf t},-z)).
\eeq
See equation~\eqref{def:Lgm-D} for  the definition of $\mathscr L_{g,m}({\bf t})$.

\begin{proposition}\label{prop:Lagrangian-vira}
For $m\geq -1$, the function $\mathscr L_{0,m}({\bf t})$ is a constant if and only if  
	$$
	D_{\rho,z}^{m+1}z^{-1}J^{\tau}(-z)\in T\cL.
	$$
\end{proposition}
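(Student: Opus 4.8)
\emph{Proof plan.} The plan is to show that the gradient of $\mathscr L_{0,m}$ is governed, through the symplectic form $\omega$, by the position of $D_{\rho,z}^{m+1}z^{-1}J({\bf t},-z)$ relative to $T\cL$, and then to reduce the resulting big-phase-space condition to the small phase space. First I would start from the genus-zero identity \eqref{eqn:L0m}, which reads $\mathscr L_{0,m}({\bf t})=-\tfrac12\,\omega\big(D_{\rho,z}^{m+1}z^{-1}J({\bf t},-z),J({\bf t},-z)\big)$. Write $A:=D_{\rho,z}^{m+1}z^{-1}=-\ell_m$; this operator is independent of ${\bf t}$ and, being the infinitesimal symplectic transformation underlying the quantization of $L_m$, satisfies $\omega(Af,g)=-\omega(f,Ag)$. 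Differentiating \eqref{eqn:L0m} by the product rule and using this antisymmetry together with the antisymmetry of $\omega$, the two terms coincide and one obtains
\[
\pd_{t_k^a}\mathscr L_{0,m}({\bf t})=-\,\omega\big(D_{\rho,z}^{m+1}z^{-1}J({\bf t},-z),\ \pd_{t_k^a}J({\bf t},-z)\big),\qquad k\geq 0,\ a=0,\dots,N-1.
\]
Hence $\mathscr L_{0,m}$ is constant in ${\bf t}$ precisely when $D_{\rho,z}^{m+1}z^{-1}J({\bf t},-z)$ is $\omega$-orthogonal to every tangent vector $\pd_{t_k^a}J({\bf t},-z)$, i.e. to all of $T\cL$.

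Next I would use that $T\cL$ is Lagrangian. By Proposition~\ref{prop:Lagrangian}(1), $T\cL=S^{\tau_{\rm DW},*}(-z)H[z]$; since $H[z]=\cH_+$ is Lagrangian and $S^{\tau,*}(-z)$ is symplectic by \eqref{eqn:S-symp}, its image $T\cL$ is Lagrangian, so its $\omega$-orthogonal complement equals $T\cL$ itself. Combined with the previous step this yields
\[
\mathscr L_{0,m}({\bf t})\ \text{is constant}\quad\Longleftrightarrow\quad D_{\rho,z}^{m+1}z^{-1}J({\bf t},-z)\in T\cL\ \text{for all }{\bf t}.
\]

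It then remains to pass from the big-phase-space function $J({\bf t},-z)$ to the small-phase-space $J^{\tau}(-z)$. For this I would first establish that $T\cL$ is stable under $D_{\rho,z}$: using \eqref{eqn:S-symp} in the form $S^{\tau,*}(-z)=S^{\tau}(z)^{-1}$ and the intertwining relation \eqref{eqn:D-rho-E}, $S^{\tau}(z)D_{\rho,z}=D_{\E,z}S^{\tau}(z)$, one computes $D_{\rho,z}\,T\cL=S^{\tau,*}(-z)D_{\E,z}H[z]$, and the operator $D_{\E,z}=\E+(\mu+\tfrac32)z+z^2\pd_z$ visibly preserves $H[z]=\Frob[z]$, so $D_{\rho,z}T\cL\subseteq T\cL$. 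Now fix ${\bf t}$ and set $\tau=\tau_{\rm DW}$; since $T\cL$ depends only on $\tau_{\rm DW}$ it coincides with the tangent space attached to the small-phase-space point $\tau$, and Proposition~\ref{prop:Lagrangian}(2) gives $z^{-1}\big(J({\bf t},-z)-J^{\tau}(-z)\big)\in T\cL$. Applying $D_{\rho,z}^{m+1}$ and invoking the stability just proved shows $D_{\rho,z}^{m+1}z^{-1}J({\bf t},-z)-D_{\rho,z}^{m+1}z^{-1}J^{\tau}(-z)\in T\cL$, so the membership of the first term is equivalent to that of $D_{\rho,z}^{m+1}z^{-1}J^{\tau}(-z)$. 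As ${\bf t}$ ranges over the big phase space $\tau_{\rm DW}$ sweeps out a neighborhood of $0$ in the small phase space, while conversely each small-phase-space point $\tau$ is realized at ${\bf t}=\tau$; matching the two conditions gives the stated equivalence.

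The derivative computation of the first paragraph is routine once the infinitesimal symplecticity of $\ell_m$ is recorded, and the Lagrangian step is immediate. I expect the genuine content to lie in the last paragraph: the $D_{\rho,z}$-stability of $T\cL$, which is exactly where the homogeneity intertwiner \eqref{eqn:D-rho-E} and the symplectic identity $S^{\tau,*}(-z)=S^{\tau}(z)^{-1}$ enter, together with the bookkeeping identifying the tangent spaces $T\cL$ of big-phase-space points with those of the small-phase-space Frobenius structure at $\tau_{\rm DW}$ via Proposition~\ref{prop:Lagrangian}. This identification of tangent spaces is the main point to get right.
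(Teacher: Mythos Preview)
Your proof is correct and follows the same overall strategy as the paper: differentiate \eqref{eqn:L0m}, show the gradient vanishes precisely when $D_{\rho,z}^{m+1}z^{-1}J({\bf t},-z)\in T\cL$, then reduce to the small phase space via Proposition~\ref{prop:Lagrangian}.

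The one genuine difference is in the middle step. You invoke that $T\cL=S^{\tau_{\rm DW},*}(-z)\cH_+$ is Lagrangian, hence equal to its own $\omega$-orthogonal complement, so the vanishing of all $\pd_{t_k^a}\mathscr L_{0,m}$ is exactly membership in $T\cL$. The paper instead performs an explicit computation: it packages the derivatives as
\[
\sum_{n,a}\pd_{t_n^a}\mathscr L_{0,m}({\bf t})\,\phi^a(-z)^{-n-1}
= S^{\tau_{\rm DW},*}(-z)\big[S^{\tau_{\rm DW}}(z)\,f\big]_{+}-f,
\qquad f=D_{\rho,z}^{m+1}z^{-1}J({\bf t},-z),
\]
and then appeals to the characterization $v\in T\cL\Leftrightarrow v=S^{\tau,*}(-z)[S^{\tau}(z)v]_{+}$. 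Your Lagrangian argument is cleaner and avoids the residue bookkeeping; the paper's computation has the mild advantage of sidestepping the (easy, but infinite-dimensional) verification that $(\cH_+)^{\perp_\omega}=\cH_+$. For the small-phase-space reduction you are actually more explicit than the paper, which simply cites Proposition~\ref{prop:Lagrangian} without spelling out the $D_{\rho,z}$-stability of $T\cL$ that you derive from \eqref{eqn:D-rho-E}.
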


\begin{proof}
We first notice that $v(z)\in T\cL$ if and only if $v(z)=S^{\tau,*}(-z)[S^{\tau}(z)v(z)]_{+}$.
This is because if $v(z)\in T\cL$, then there is some $v'(z)\in \Frob[[z]]$ such that $v(z)=S^{\tau,*}(-z)v'(z)$, 
by the symplectic condition of $S$-matrix~\eqref{eqn:S-symp}, we have $v'(z)=S^{\tau}(z)v(z)=[S^{\tau}(z)v(z)]_{+}$, 
and hence $v(z)=S^{\tau,*}(-z)[S^{\tau}(z)v(z)]_{+}$. 
The converse is straightforward.

Now we consider the derivatives of $\mathscr L_{0,m}({\bf t})$ with respect to $t_n^a$, by equation~\eqref{eqn:diff-W-S}, we have
$$
\pd_{t^a_n}(\mathscr L_{0,m}({\bf t}))
=-\omega(D^{m+1}_{\rho,z}z^{-1}J({\bf t},-z),S^{\tau_{\rm DW},*}(-z)[S^{\tau_{\rm DW}}(z)\phi_az^n]_{+}).
$$
For an arbitrary element $f(z)\in\cH$, it is straightforward to see 
$$\textstyle
\sum_{n,a}\omega(f(z),\phi_az^n)\phi^a(-z)^{-n-1}=[f(z)]_{-}.
$$
Then by using $S^{\tau,*}(-z)[S^{\tau}(z)\phi_az^n]_{+}=\phi_az^n-S^{\tau,*}(-z)[S^{\tau}(z)\phi_az^n]_{-}$
and 
$$
\omega(f(z),S^{\tau,*}(-z)[S^{\tau}(z)\phi_az^n]_{-})=\omega(S^{\tau,*}(-z)[S^{\tau}(z)f(z)]_{+},\phi_az^n),
$$
one can see
$$\textstyle
\sum_{n,a}\omega(f(z),S^{\tau,*}(-z)[S^{\tau}(z)\phi_az^n]_{+})\phi^a(-z)^{-n-1}=f(z)-S^{\tau,*}(-z)[S^{\tau}(z)f(z)]_{+}.
$$
In particular, take $f(z)=-D^{m+1}_{\rho,z}z^{-1}J({\bf t},-z)$, we get
	$$\textstyle
	\sum_{n,a}\pd_{t^a_n}(\mathscr L_{0,m}({\bf t}))(-z)^{-n-1}\phi^a
	=S^{*}(-z)[S(z)D^{m+1}_{\rho,z}z^{-1}J({\bf t},-z)]_{+}-D^{m+1}_{\rho,z}z^{-1}J({\bf t},-z).
	$$
Therefore, $\mathscr L_{0,m}({\bf t})$ is a constant 
if and only if $D^{m+1}_{\rho,z}z^{-1}J({\bf t},-z)\in T\cL$
and by Proposition~\ref{prop:Lagrangian}, this is equivalent to $D^{m+1}_{\rho,z}z^{-1}J^{\tau}(-z)\in T\cL$.
	The proof is finished.
\end{proof}
\begin{remark}
 If $\dvac^{\tau}(z)\in \Frob[z]$, then we have $z^{-1}J({\bf t},-z)\in T\cL$ and $\mathscr L_{0,m}(\bf t)=0$ for all $m\geq -1$ according to the formula~\eqref{eqn:L0m}. 
In particular, we have the string equation $\mathscr L_{0,-1}(\bf t)=0$. 
Similarly, we have dilaton equation 
	$$\textstyle
	\sum_{k,a}\tilde t_k^a\frac{\pd \cF_0({\bf t})}{\pd t_k^a}=2\cF_0({\bf t}),
	$$
	which is equivalent to the geometric formulation: $J({\bf t},-z)\in T\cL$.
	Givental's discussion~\cite[Theorem 1]{Giv04} then applies to this case: $\cL$ is a Lagrangian cone with the vertex at the origin and such
	that its tangent spaces $T\cL$ are tangent to $\cL$ exactly along $z T\cL$.
\end{remark}

\section{Results on Virasoro constraints}
\label{sec:vira-proof}
In this section, we prove Theorem~\ref{thm:vira-anc-intro} and Theorem~\ref{thm:vira-des-intro}, the main results on Virasoro conjectures.

\subsection{Ancestor Virasoro constraints}
In this subsection, we prove the following theorem:
\begin{theorem}\label{thm:anc-vira}
(1).\, The genus-0 ancestor Virasoro conjecture always holds.\\
(2).\, For each $m\geq 0$, the genus-1 $L^{\anc}_{m}$-constraint is equivalent to the following equation: 
\beq\label{eqn:vira-genusone}
\<E^{m+1}\>^{\tau}_{1,1}
= -\frac{1}{4}\sum_{a+b=m}\str(\E^a\mu \E^b\mu)
-\frac{1}{24}\sum_{a+b=m}\str((\E^a\mu\E^b{\bf 1})*) 
+\frac{1}{24}\str((\E^{m}((\mu+\tfrac{\delta}{2}){\bf 1})*).
\eeq
(3).\, The ancestor Virasoro constraints hold for semisimple homogeneous CohFTs. 
\end{theorem}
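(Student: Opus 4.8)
I would treat all three parts through the quantization formalism, writing $L^{\anc}_m=\widehat{\ell^{\anc}_m}+\tfrac14\sum_{i+j=m}\str\!\big(\E^i(\tfrac12+\mu)\E^j(\tfrac12-\mu)\big)$ with $\ell^{\anc}_m=-D^{m+1}_{\E,z}\,z^{-1}$, and reading off the genus decomposition~\eqref{def:Lgm-A}.

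For part~(1), I would first set up the ancestor analogues of Propositions~\ref{prop:Lagrangian} and~\ref{prop:Lagrangian-vira}: the ancestor Lagrangian cone $\cL^{\anc}$ is the graph of $d\bar\cF_0^{\tau}$ under the dilaton shift $\tilde{\bf s}(z)={\bf s}(z)-z\vac^{\tau}(z)$, with ancestor $J$-function $J^{\anc}({\bf s},-z)$; since the ancestor theory carries the trivial $S$-matrix, its tangent space $T\cL^{\anc}$ is the non-negative part $\Frob[[z]]$. The quantization formula then gives $\mathscr L^{\anc}_{0,m}({\bf s})=-\tfrac12\,\omega\big(D^{m+1}_{\E,z}z^{-1}J^{\anc}({\bf s},-z),\,J^{\anc}({\bf s},-z)\big)$, exactly as in~\eqref{eqn:L0m}, and its being constant reduces to the tangency $D^{m+1}_{\E,z}z^{-1}J^{\anc,\tau}(-z)\in T\cL^{\anc}$ at the base point, where $z^{-1}J^{\anc,\tau}(-z)=-\vac^{\tau}(z)$. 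This tangency is automatic: $\vac^{\tau}(z)\in\Frob[[z]]$ and $D_{\E,z}=\E+(\mu+\tfrac32)z+z^2\pd_z$ preserves non-negative powers of $z$ (the homogeneity~\eqref{eqn:hom-vacuum-2} even makes it explicit, $D_{\E,z}\vac^{\tau}(z)=E+\tfrac{3-\delta}{2}z\vac^{\tau}(z)$). Hence $\mathscr L^{\anc}_{0,m}$ is constant in ${\bf s}$, and evaluating at ${\bf s}=0$ kills it: the potential term $\tfrac12\eta(\E^{m+1}s_0,s_0)$ vanishes and every other term carries a factor $\langle\phi_b\bar\psi^l\rangle^{\tau}_{0,1}=0$. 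Thus $\mathscr L^{\anc}_{0,m}\equiv0$.

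For part~(2), I would extract the $\hbar^{0}$-coefficient of $(L^{\anc}_m\cA^{\tau})/\cA^{\tau}$. The surviving pieces are the constant super-trace term, the first-order operator applied to $\bar\cF_1^{\tau}$, the second-order symbol applied to $\pd^2\bar\cF_0^{\tau}$, and the cross term $\pd\bar\cF_0^{\tau}\cdot\pd\bar\cF_1^{\tau}$. Using part~(1) to control the pure-$\bar\cF_0^{\tau}$ contributions, the homogeneity of $\bar\cF_1^{\tau}$ from Proposition~\ref{prop:hom-anc} (which for $g=1$ reads $\mathscr E^{\tau}\bar\cF_1^{\tau}=\cc$), and the genus-$1$ one-point relation $\langle\bone\bar\psi\rangle^{\tau}_{1,1}=\tfrac1{24}\str(\id)$ together with the vacuum/dilaton identities, the entire ${\bf s}$-dependence collapses and the constraint becomes the single numerical identity~\eqref{eqn:vira-genusone} relating $\langle E^{m+1}\rangle^{\tau}_{1,1}$ to the super-traces. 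The most calculation-heavy step is reorganizing the $\E$-matrix coefficients $(C^{\tau}_m)^{l,b}_{k,a}$ into the closed super-trace expressions on the right-hand side of~\eqref{eqn:vira-genusone}.

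For part~(3), I would invoke the Givental--Teleman reconstruction (Theorem~\ref{thm:reconstruction}): in the semisimple case $\cA^{\tau}$ is produced from $\oplus_{\alpha}\Omega^{\mathrm{KW}_\alpha}$ by the quantized $R^{\tau}$- and $T^{\tau}$-actions, while the Witten--Kontsevich theorem supplies the Virasoro constraints for each trivial factor. Using the factorization $\ell^{\anc}_m=S^{\tau}(z)\,z^{-\mu}z^{-\rho}\,\ell^{\circ}_m\,z^{\rho}z^{\mu}\,S^{\tau}(z)^{-1}$ recorded in the Remark, and the $R$-matrix homogeneity $[R^{\tau}_{m+1},\E]=(m+\mu)R^{\tau}_m$ of~\eqref{eqn:hom-R}, I would conjugate $\widehat{\ell^{\anc}_m}$ by the Givental action and show it intertwines with the diagonal operators built from $\ell^{\circ}_m$ that annihilate $\oplus_{\alpha}\Omega^{\mathrm{KW}_\alpha}$. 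The main obstacle is matching the $\hbar^{0}$ central/anomaly terms produced by the quantized group action: they must be shown to reproduce exactly the super-trace correction $\tfrac14\sum_{i+j=m}\str\!\big(\E^i(\tfrac12+\mu)\E^j(\tfrac12-\mu)\big)$ sitting inside $L^{\anc}_m$. This is where the homogeneity of $R^{\tau}$ and the skew-adjointness $\mu^{*}=-\mu$ of the grading operator must conspire; carrying out this cocycle computation in the manner of Givental and Teleman then yields $L^{\anc}_m\cA^{\tau}=0$ for all $m\geq-1$ and finishes the proof.
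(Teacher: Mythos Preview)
Your approach for part~(3) aligns closely with the paper's: both conjugate the ancestor Virasoro operator through the Givental--Teleman reconstruction $\cA^{\tau}=\widehat{T_{\vac}}\widehat{R^{\tau}}\widehat{\Delta}\cD_N^{\rm KW}$ and reduce to the Kontsevich--Witten constraints, with the central obstacle being the cocycle computation for $e^{-\widehat{r(z)}}\widehat{\ell^{\anc}_m}e^{\widehat{r(z)}}$. One correction: the factorization via $S^{\tau}(z)$ you quote is not what drives the proof---the paper conjugates by $\widehat{R^{\tau}}$ directly, using $D_{\E,z}R^{\tau}(z)=R^{\tau}(z)D^{\rm top}_{\E,z}$ from~\eqref{eqn:hom-R} and the identity $\tr(\E^{m+1}R_1^{\tau})=-\tfrac12\sum_{a+b=m}\tr(\E^a\mu\E^b\mu)$, which is precisely where the super-trace correction emerges.

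For part~(1), your Lagrangian-cone route differs from the paper's and has a gap. The claim that ``the ancestor theory carries the trivial $S$-matrix, so $T\cL^{\anc}=\Frob[[z]]$'' is not correct: $S=\id$ does not satisfy the QDE~\eqref{eqn:QDE-S}, and the ancestor tangent space at a generic $\mathbf{s}$ is \emph{not} $\Frob[z]$---it is governed by the full two-point function $\langle\!\langle\phi_a\bar\psi^k,\phi_b\bar\psi^l\rangle\!\rangle^{\tau}_{0}$, which vanishes only at $\mathbf{s}=0$. What you actually need is an ancestor analogue of Proposition~\ref{prop:Lagrangian-vira}, which in turn rests on an ancestor constitutive relation (the genus-0 TRR producing an $\mathbf{s}$-dependent matrix $\tilde S^{\mathbf s}(z)$ with $T\cL^{\anc}=\tilde S^{\mathbf s,*}(-z)\Frob[z]$). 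Once that is set up your argument does go through and is arguably cleaner than the paper's; as written it is incomplete. The paper instead proves $\mathscr L^{\anc}_{0,m}\equiv0$ by direct coefficient-wise induction on the number of insertions: it differentiates the $(n;k_1,\dots,k_n)$-case with respect to $\tau^{b_{n+1}}$ and uses $\partial_{\tau^i}D^{m+1}_{\E,z}=[z^{-1}\phi_i\ast,D^{m+1}_{\E,z}]$ together with Proposition~\ref{prop:diff-anc-corr} and the QDE~\eqref{eqn:QDE-vacuum} for $\vac^{\tau}$.

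For part~(2), your sketch omits the substantive step. The paper shows that the general-$\mathbf{s}$ genus-1 constraint reduces to its value at $\mathbf{s}=0$ by the same induction mechanism as in part~(1), supplemented by explicit genus-1 TRR identities for correlators $\langle\phi_{a_1}\bar\psi,\dots,\phi_{a_n}\bar\psi\rangle^{\tau}_{1,n}$. At $\mathbf{s}=0$ the constraint reads $\langle D^{m+1}_{\E,\bar\psi}\vac^{\tau}(\bar\psi)\rangle^{\tau}_{1,1}=-\tfrac14\sum_{a+b=m}\str(\E^a(\mu+\tfrac12)\E^b(\mu-\tfrac12))$; the paper then expands $D^{m+1}_{\E,z}\vac^{\tau}(z)$ to order $z$ via the vacuum homogeneity $\E^{m+1}\vac^{\tau}(z)=E^{m+1}-E^m\ast((\mu+\tfrac{\delta}{2}){\bf 1})z+O(z^2)$ and applies the genus-1 TRR $\langle\phi_a\bar\psi\rangle^{\tau}_{1,1}=\tfrac1{24}\str(\phi_a\ast_{\tau})$ to obtain exactly~\eqref{eqn:vira-genusone}. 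Your appeal to Proposition~\ref{prop:hom-anc} and the assertion that ``the ${\bf s}$-dependence collapses'' do not substitute for this computation.
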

\begin{remark}\label{rmk:vacuum-vira}
As noted in Remark~\ref{rmk:semicohft-vacuum}, for semisimple cases, we need not assume the vacuum axiom. Given a semisimple homogeneous CohFT \(\Omega^{\tau}\), \(\Omega_{g,n}^{\tau}=\widetilde\Omega_{g,n}^{\tau}\) for \(n > 0\). Here, \(\widetilde\Omega^{\tau}=R^{\tau}\cdot T^{\tau}\cdot (\oplus_{\alpha = 1}^{N}\Omega^{\rm KW_\alpha})\) is constructed through \(R\)- and \(T\)-actions and satisfies the vacuum axiom.
This indicates a relation between the corresponding total ancestor potentials \(\cA^{\tau}({\bf s};\hbar)\) and \(\widetilde\cA^{\tau}({\bf s};\hbar)\): \(\cA^{\tau}({\bf s};\hbar)=e^{F(\tau;\hbar)}\cdot\widetilde\cA^{\tau}({\bf s};\hbar)\) for a function \(F(\tau;\hbar)\) which is independent of \({\bf s}\).
Consequently, the Virasoro constraints for \(\cA^{\tau}({\bf s};\hbar)\) are equivalent to those for \(\widetilde\cA^{\tau}({\bf s};\hbar)\).
\end{remark}
\begin{proof}[Proof of Theorem~\ref{thm:anc-vira}]
The first two parts of Theorem~\ref{thm:anc-vira} are indeed derived from the topological recursion relations of genus-0 and genus-1 and we put the proof in Appendix~\ref{sec:app-anc-vira}.

Now we prove the last part of Theorem~\ref{thm:anc-vira} by using the Givental--Teleman reconstruction theorem~\cite{Giv01a,Tel12}.
We just need to prove the results for CohFT with vacuum and then the results for general cases follow.
By semisimplicity, we have a decomposition $\Frob=\oplus_{\alpha=1}^{N}{\mathbb C}\bar e_\alpha$ of $H=T_{\tau}U$ with a basis $\{\bar e_\alpha\}$ such that $\eta(\bar e_\alpha,\bar e_\beta)=\delta_{\alpha,\beta}$ and $\bar e_\alpha*_{\tau}\bar e_\beta=\delta_{\alpha,\beta}\Delta_{\alpha}^{\frac{1}{2}}\bar e_\alpha$ for some functions $\Delta_{\alpha}=\Delta_{\alpha}(\tau)$.
We define $e_\alpha=\Delta_{\alpha}^{-\frac{1}{2}}\bar e_\alpha$.
The reconstruction theorem gives
\beq\label{eqn:A-KW}
\cA^{\tau}({\bf s};\hbar)=\widehat{T_{\vac}}\widehat{R^{\tau}}\widehat{\Delta}\cD_N^{\rm KW}({\bf s};\hbar).
\eeq
The formula~\eqref{eqn:A-KW} is explained as follows. 
Firstly, $\cD_N^{\rm KW}({\bf t};\hbar)=\prod_{\alpha}\cD^{\rm KW}({\bf t}^{\bar\alpha};\hbar)$ is a product of $N$ copies of the Witten--Kontsevich tau-function. 
Secondly, $\widehat\Delta\cD_N^{\rm KW}({\bf t};\hbar)=\prod_{\alpha}\cD^{\rm KW}(\Delta_{\alpha}^{\frac{1}{2}}{\bf t}^{\bar\alpha};\Delta_{\alpha}^{\frac{1}{2}}\hbar)$
and we denote this by $\cD^{\rm top}({\bf t},\hbar)$. 
Thirdly, $R^{\tau}(z)=\id+R^{\tau}_1z+\cdots\in \End(\Frob)[[z]]$ 
is the $R$-matrix determined by equation~\eqref{eqn:hom-R}.
It follows that $r(z):=\log(R^{\tau}(z))$ is an infinitesimal transformation on the loop space $\cH$ and thus $\widehat{r(z)}$ is defined by ~\eqref{eqn:quantization-A} via identification ${\bf q}(z)={\bf t}(z)-z{\bf 1}$ with ${\bf 1}=\sum_\alpha e_\alpha$.
The operator $\widehat{R^{\tau}}$ is defined by $e^{\widehat{r(z)}}$ 
and $\widehat{R^{\tau}}\cD^{\rm top}({\bf s};\hbar)$ means we replaces the coordinate ${\bf t}$ in the function $e^{\widehat{r(z)}}\cD^{\rm top}$ with ${\bf s}$.
Lastly, the action of $\widehat{T_{\vac}}$ is just a shift on the coordinate ${\bf s}$ which trans ${\bf s}(z)-z{\bf 1}$ to ${\bf s}(z)-z\vac^{\tau}(z)$.

The idea of the proof of $L^{\anc}_m\cA^{\tau}=0$ for a semisimple CohFT is as follows: we first consider how the Virasoro operator $L^{\anc}_m$ go across the operators $\widehat{T_{\vac}}$, $\widehat{R^{\tau}}$ and $\widehat{\Delta}$, then we prove the result constraints for $\cD_{N}^{\rm KW}$ by the Witten--Kontsevich theorem.
For the results, we have
\begin{itemize}
	\item $L^{\anc}_m\widehat{T_{\vac}}=\widehat{T_{\vac}}\bar L^{\anc}_{m}$, where $\bar L^{\anc}_{m}=L^{\anc}_{m}|_{{\bf s}(z)\to {\bf s}(z)+z\vac^{\tau}(z)-z{\bf 1}}$;
	\item $\bar L^{\anc}_{m}\widehat{R^{\tau}}=\widehat{R^{\tau}}L^{\rm top,\tau}_{m}$, where $L^{\rm top,\tau}_{m}=\widehat{\ell^{\rm top,\tau}_{m}}+\frac{m+1}{16}\tr(\E^m)$ with
	$$
	\ell^{\rm top,\tau}_{m}=-z^{-1/2}(z\pd_z z+\E)^{m+1}z^{-1/2}.
	$$
	\item $L^{\rm top,\tau}_{m}\widehat\Delta=\widehat\Delta  L_{m}^{N\rm KW,\tau}$ where $L_{m}^{N\rm KW,\tau}=\sum_{\alpha}L_{m}^{\rm KW_{\alpha},\tau}$ with $L_{m}^{\rm KW_{\alpha},\tau}$ the ancestor Virasoro operator for the trivial CohFT $\Omega^{\rm KW_{\alpha}}_{g,n}(\bar e_\alpha,\cdots,\bar e_\alpha)=1$.
	\item $L_{m}^{\rm KW_{\alpha},\tau}\cD_N^{\rm KW}({\bf t};\hbar)=0$ for each $\alpha$.
\end{itemize}
Since the first and third results follow immediately from the definition of $\widehat{T_{\vac}}$ and $\widehat{\Delta}$, 
the last result follows from the Witten--Kontsevich theorem~\footnote{More strictly, $L_{m}^{\rm KW_{\alpha},\tau}\cD_N^{\rm KW}({\bf t};\hbar)=\prod_{\beta\ne \alpha}\cD^{\rm KW}({\bf t}^{\bar\beta};\hbar)\cdot L_{m}^{\rm KW_{\alpha},\tau}\cD^{\rm KW}({\bf t}^{\bar\alpha};\hbar)$, and the Witten--Kontsevich theorem, together with Dijkgraaf--Verlinde--Verlinde's work~\cite{DVV91}, gives us 
$L_{m}^{\rm KW_{\alpha},\tau}\cD^{\rm KW}({\bf t}^{\bar\alpha};\hbar)|_{\tau=0}=0$. It is a simple consequence of the Witten--Kontsevich theorem that the equation also holds for arbitrary $\tau$.},
we show details for the proof of the second equation in the follows.

The central part of the proof is to compute $e^{-\widehat{r(z)}}\widehat{\ell^{\tau}_{m}}e^{\widehat{r(z)}}$ which by definition equals
\beq\label{eqn:Ri-vira-R}
-\tfrac{1}{2}\, e^{-\widehat{r(z)}}\omega(D_{\E,z}^{m+1}z^{-1}f(z),f(z))^{\widehat{}}
\, e^{\widehat{r(z)}},
\eeq
where $f(z)=\sum_{k\geq 0}\sum_\alpha q^\alpha_k e_\alpha z^k+\sum_{k\geq 0}\sum_{\alpha}p_{k,\alpha}e^{\alpha}(-z)^{-k-1}$.
Here $\{e^\alpha\}$ is the dual basis of $\{e_\alpha\}$ with respect to $\eta$.
We compute the expression~\eqref{eqn:Ri-vira-R} in four steps.
Firstly, introduce the quantization on linear functions of ${\bf p,q}$ by
$$
\widehat{q_k^\alpha}:=q_k^\alpha/\hbar,\qquad
\widehat{p_{k,\alpha}}:=\hbar\,\pd_{q_k^\alpha},
$$
we see $\widehat{f}(z)=\widehat{f_{+}}(z)+\widehat{f_{-}}(z)
=\sum_{k\geq 0}\sum_\alpha \frac{q^\alpha_k}{\hbar} e_\alpha z^k
+\sum_{k\geq 0}\sum_{\alpha}\hbar\pd_{q_{k}^\alpha}\, e^{\alpha}(-z)^{-k-1}$,
where by notation $\widehat{f}(z)$ (as well as $\widehat{f_{\pm}}(z)$) we mean we do not quantize $z^{m}$ in $f(z)$.
Then we have
\begin{align*}
	\tfrac{1}{2}\, \omega(D_{\E,z}^{m+1}z^{-1}f(z),f(z))^{\widehat{}}
	=&\, \tfrac{1}{2}\, \omega(D_{\E,z}^{m+1}z^{-1}\widehat{f_{+}}(z),\widehat{f_{+}}(z))
	+\omega(D_{\E,z}^{m+1}z^{-1}\widehat{f_{+}}(z),\widehat{f_{-}}(z)) \\
	&\, +\tfrac{1}{2}\, \omega(D_{\E,z}^{m+1}z^{-1}\widehat{f_{-}}(z),\widehat{f_{-}}(z)).
\end{align*}
Secondly, by the definition equation~\eqref{eqn:quantization-A} and by direct computation, we have
$$
\widehat{r(z)}=\sum_{m\geq 1}\bigg(-\sum_{k\geq 0}(r_{m})^{\beta}_{\alpha}q^\alpha_k\pd_{q^{\beta}_{k+m}}
+\frac{\hbar^2}{2}\sum_{k+l=m-1}(-1)^{k}\eta^{\alpha\sigma}(r_m)^{\beta}_{\sigma}\pd_{q^{\alpha}_{k}}\pd_{q^{\beta}_{l}}\bigg).
$$
Then by this explicit formula, it is straightforward to get
$[\widehat{r(z)},\widehat{f}(z)]=-{\widehat{r\cdot f}}(z)$,
and it follows immediately
\beq\label{eqn:Ri-f-R}
e^{-\widehat{r(z)}}\widehat{f}(z)e^{\widehat{r(z)}}
={\widehat{R^{\tau} f}}(z),
\eeq
where $R^{\tau}f(z)=R^{\tau}(z)f(z)$.
Hence we get the formula:
\begin{align*}
	-e^{-\widehat{r(z)}}\widehat{\ell^{\tau}_{m}}e^{\widehat{r(z)}}
	=&\, \tfrac{1}{2}\, \omega(D_{\E,z}^{m+1}z^{-1}\widehat{(R^{\tau} f)}_{+}(z),\widehat{(R^{\tau} f)}_{+}(z))
	+\omega(D_{\E,z}^{m+1}z^{-1}\widehat{(R^{\tau} f)}_{+}(z), \widehat{(R^{\tau} f)}_{-}(z)) \\
	&\, +\tfrac{1}{2}\, \omega(D_{\E,z}^{m+1}z^{-1}\widehat{(R^{\tau} f)}_{-}(z), \widehat{(R^{\tau} f)}_{-}(z)).
\end{align*}
Thirdly, notice that $[R^{\tau}(z)f(z)]_{-}=[R^{\tau}(z)f_{-}(z)]_{-}$, we always have
$$
\omega(D_{\E,z}^{m+1}z^{-1}\widehat{(R^{\tau} f)}_{\pm}(z), \widehat{(R^{\tau} f)}_{-}(z))
=\omega(D_{\E,z}^{m+1}z^{-1}[R^{\tau}(z) f(z)]_{\pm}, [R^{\tau}(z) f(z)]_{-})^{\widehat{}}.
$$
Therefore 
\begin{align*}
	&\, -e^{-\widehat{r(z)}}\widehat{\ell^{\tau}_{m}}e^{\widehat{r(z)}}
	-\tfrac{1}{2}\,\omega(D_{\E,z}^{m+1}z^{-1}R^{\tau}(z) f(z),R^{\tau}(z) f(z))^{\hat{}}\\
	=&\, 
	\tfrac{1}{2}\, \omega(D_{\E,z}^{m+1}z^{-1}\widehat{(R^{\tau} f)}_{+}(z), \widehat{(R^{\tau} f)}_{+}(z))
	-\tfrac{1}{2}\, \omega(D_{\E,z}^{m+1}z^{-1}[R^{\tau}(z) f(z)]_{+}, [R^{\tau}(z) f(z)]_{+})^{\widehat{}} .
\end{align*}
By the definition of $\omega$, we see the right-hand side of above equation equals
\begin{align*}
&\, \tfrac{1}{2}\eta(\E^{m+1}\widehat{(R^{\tau} f)}_{0},\widehat{(R^{\tau} f)}_{0})
-\tfrac{1}{2}\, \eta(\E^{m+1}[R^{\tau}(z) f(z)]_{0}, [R^{\tau}(z) f(z)]_{0})^{\widehat{}}
=\tfrac{1}{2}\tr(\E^{m+1}R_1).
\end{align*}
Lastly, by equation~\eqref{eqn:hom-R}, we have $[R^{\tau}_1,\E]=\mu$ and $[R^{\tau}_2,\E]=R^{\tau}_1+\mu R^{\tau}_1$, and these equations imply 
$$\textstyle
\tr(\E^{m+1}R^{\tau}_1)=-\frac{1}{2}\sum_{a+b=m}\tr(\E^a\mu\E^b\mu).
$$
Moreover, equation~\eqref{eqn:hom-R} gives $D_{\E,z}R^{\tau}(z)=R^{\tau}(z) D^{\rm top}_{\E,z}$ where $ D^{\rm top}_{\E,z}=\E+\frac{3}{2}z+z^2\pd_z$,
together with the symplectic condition $R^{\tau,*}(-z)R^{\tau}(z)=\id$, we have
$$\textstyle
	e^{-\widehat{r(z)}}\widehat{\ell^{\tau}_{m}}e^{\widehat{r(z)}}
	=-\frac{1}{2}\, \omega( (D^{\rm top}_{\E,z})^{m+1}z^{-1}f,f)
	+\frac{1}{4}\sum_{a+b=m}\tr(\E^a\mu\E^b\mu).
$$
This proves $\bar L^{\anc}_{m}\widehat{R^{\tau}}=\widehat{R^{\tau}}L^{\rm top,\tau}_{m}$.
The proof is finished.
\end{proof}
\begin{remark}
Our proof of the ancestor Virasoro conjecture for semisimple homogeneous CohFT is fundamentally grounded in Givental's work~\cite{Giv01a}. In his approach, Givental used the Virasoro constraints for the Witten–Kontsevich tau-function, the reconstruction procedure, and the Kontsevich–Manin formula to prove the (descendent) Virasoro conjecture. Our work explicitly elaborates on the intermediate step that Givental left unspecified.
\end{remark}
\begin{remark}\label{rmk:anc-mil-alx}
By exchanging $N$-copies of Virasoro operators for the Witten--Kontsevich tau-function with the reconstruction procedure, $N$-copies of formal Virasoro operators for semisimple CohFTs (not necessarily homogeneous) were constructed by Milanov~\cite{Mil14} and Alexandrov~\cite{Ale24}.
For homogeneous cases, our results show that summations of those $N$-copies of formal Virasoro operators admit concise formulae.
Furthermore, to the best of our knowledge, the ancestor Virasoro operators for non-semisimple CohFTs are new in the literature.
\end{remark}

\subsection{Generalized Virasoro constraints I: genus-0 part}\label{sec:Vira-des}
In this subsection, we prove the genus-0 generalized Virasoro constraints
by generalizing Givental's method of proving the genus-0 Virasoro constraints introduced in~\cite{Giv04} to arbitrary homogeneous CohFTs.

\begin{theorem}\label{thm:vira-des-0}
For the homogeneous CohFT with homogeneous calibrations and a Virasoro index $m_\dvac$ as defined in Definition~\ref{defnofm0}, the genus-$0$ generalized Virasoro constraints hold for $m \geq m_\dvac$:
$$\mathscr L_{0,m}({\bf t}) + {\delta_{m,2 \virm}} \cdot  c_\dvac =0 . $$  
\end{theorem}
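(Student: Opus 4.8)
The plan is to reduce the statement to a polynomiality condition via Proposition~\ref{prop:Lagrangian-vira}, transport everything to the $\dvac$-side, and then evaluate the resulting constant by a residue computation. First I would dispose of the case $\virm=-1$: here $\dvac^{\tau}(z)\in\Frob[z]$, so by the Remark following Proposition~\ref{prop:Lagrangian-vira} we already have $z^{-1}J({\bf t},-z)\in T\cL$ and $\mathscr L_{0,m}({\bf t})=0$ for all $m\geq-1$, while $\delta_{m,2\virm}=0$ since $2\virm=-2<m$. So I may assume $\virm=\frac{\delta-3}{2}\geq 0$.

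Next I would transport the problem to the ancestor side. Using the intertwining relation~\eqref{eqn:D-rho-E}, the symplectic condition~\eqref{eqn:S-symp} (which also yields $S^{\tau}(z)S^{\tau,*}(-z)=\id$), and the definition~\eqref{def:J-function-small} of $J^{\tau}$, one computes
$$S^{\tau}(z)\,D_{\rho,z}^{m+1}z^{-1}J^{\tau}(-z)=-D_{\E,z}^{m+1}\dvac^{\tau}(z).$$
Since membership in $T\cL=S^{\tau,*}(-z)\Frob[z]$ is equivalent to $S^{\tau}(z)(\,\cdot\,)\in\Frob[z]$ (as in the proof of Proposition~\ref{prop:Lagrangian-vira}), constancy of $\mathscr L_{0,m}$ becomes the assertion that $D_{\E,z}^{m+1}\dvac^{\tau}(z)$ is a polynomial in $z$. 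The engine of the argument is the identity
$$D_{\E,z}\big(z^k\dvac^{\tau}(z)\big)=(c+k)\,z^{k+1}\dvac^{\tau}(z)+z^kE,\qquad c:=\tfrac{3-\delta}{2}=-\virm,$$
obtained by substituting the homogeneity relation~\eqref{eqn:hom-vacuum-des-2} for $z^2\partial_z\dvac^{\tau}(z)$, whereupon the $\E$- and $\mu$-actions on $\dvac^{\tau}(z)$ collapse into a scalar. Iterating, the ``$\dvac$-part'' of $D_{\E,z}^{m+1}\dvac^{\tau}(z)$ is $c(c+1)\cdots(c+m)\,z^{m+1}\dvac^{\tau}(z)$, the remainder lying in $\Frob[z]$; the scalar prefactor vanishes precisely because $c+\virm=0$ with $0\leq\virm\leq m$. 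This proves polynomiality, hence constancy of $\mathscr L_{0,m}({\bf t})$ for $m\geq\virm$.

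The hard part will be evaluating the constant, which is where I expect the main obstacle to lie. Evaluating~\eqref{eqn:L0m} at ${\bf t}=\tau$ and using $\omega$-invariance under the symplectic $S^{\tau}$ gives
$$\mathscr L_{0,m}(\tau)=-\tfrac12\,\omega\big(D_{\E,z}^{m+1}\dvac^{\tau}(z),\,z\dvac^{\tau}(z)\big).$$
The two tools are the $\omega$-adjoint relation $\omega(D_{\E,z}f,g)=\omega(f,(D_{\E,z}-z)g)$, coming from $\E^{*}=\E$, $\mu^{*}=-\mu$ and integration by parts inside the residue, together with $[D_{\E,z},z]=z^2$, i.e. $(D_{\E,z}-z)z=zD_{\E,z}$ (here $z$ denotes multiplication). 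Writing the polynomial remainder as $B_{m+1}=\sum_{j=0}^m(c)_j\,D_{\E,z}^{m-j}(z^jE)$ with $(c)_j=c(c+1)\cdots(c+j-1)$, discarding the vanishing $\dvac$-part, and noting that the pairing of any two polynomials is zero, these relations let me move each power of $D_{\E,z}$ back onto the second slot and reduce everything to
$$\mathscr L_{0,m}(\tau)=-\tfrac12\Big(\textstyle\sum_{j=0}^m(-1)^j(c)_j(c)_{m-j}\Big)\,\big[z^{-m-2}\big]\,\eta\big(E,\dvac^{\tau}(z)\big).$$
The delicate points to watch are the exact constants in the $\omega$-adjoint relation and the bookkeeping that isolates the single scalar residue of $\eta(E,\dvac^{\tau}(z))$ from the a priori more complicated $B_{m+1}$.

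Finally I would evaluate the combinatorial factor. Using $(-\virm)_j=(-1)^j\virm!/(\virm-j)!$, only indices with $m-\virm\leq j\leq\virm$ survive; this range is empty for $m>2\virm$, and for $\virm\leq m\leq2\virm$ the sum reduces to $\tfrac{(-1)^{\virm}(\virm!)^2}{(2\virm-m)!}\sum_{i=0}^{2\virm-m}(-1)^i\binom{2\virm-m}{i}$, which vanishes for $m<2\virm$ (since $\sum_{i=0}^{n}(-1)^i\binom{n}{i}=0$ for $n\geq1$) and equals $(-1)^{\virm}(\virm!)^2$ for $m=2\virm$. Thus the constant is $0$ for every $m\neq 2\virm$, while at $m=2\virm$ one has $-m-2=1-\delta$, so the residue is exactly $[z^{1-\delta}]\eta(E,\dvac^{\tau}(z))$; comparing with Definition~\ref{defnofm0} and invoking Lemma~\ref{lem:cm} (so that this residue is $\tau$-independent) yields $\mathscr L_{0,m}(\tau)=-\delta_{m,2\virm}\cdes$, which is the claim.
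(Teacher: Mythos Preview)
Your proof is correct. The first step (constancy of $\mathscr L_{0,m}$ via Proposition~\ref{prop:Lagrangian-vira}) is essentially the paper's argument, reorganised around the single identity $D_{\E,z}(z^k\dvac^{\tau})=(c+k)z^{k+1}\dvac^{\tau}+z^kE$ rather than the shifted operator $\tilde D_{\E,z}=D_{\E,z}+\frac{\delta-3}{2}z$; both collapse the obstruction to the rising factorial $c(c+1)\cdots(c+m)$ with $c=-\virm$.

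The evaluation of the constant takes a genuinely different route. The paper stays on the $J$-side: it rewrites $D_{\rho,z}^{m+1}z^{-1}J^{\tau}(-z)=z^m\prod_{i=0}^m(i-\virm-\nabla_E)J^{\tau}(-z)$, expands in powers of $\nabla_E$, uses that $z^m\nabla_E^{k-1}J^{\tau}\in T\cL$ for $k\geq 2$ to kill cross terms, and then exploits a differential identity $\partial_{\tau^a}\omega(z^mEJ^{\tau},J^{\tau})=(m-2\virm)\omega(z^m\partial_{\tau^a}J^{\tau},J^{\tau})$ to reduce the problem to evaluating a polynomial in $m-2\virm$. You instead move to the $\dvac$-side and carry out a purely algebraic residue computation: the adjoint relation $\omega(D_{\E,z}f,g)=\omega(f,(D_{\E,z}-z)g)$ and the commutator $(D_{\E,z}-z)z=zD_{\E,z}$ reduce everything to the single scalar residue $[z^{-m-2}]\eta(E,\dvac^{\tau}(z))$ multiplied by the alternating sum $\sum_j(-1)^j(c)_j(c)_{m-j}$, which you evaluate combinatorially. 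Your approach avoids the differential equation entirely and is arguably more transparent in isolating the Virasoro constant $\cdes$; the paper's approach is closer in spirit to the Lagrangian-cone geometry. Both arrive at exactly $-\delta_{m,2\virm}\cdes$.
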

\begin{proof}
We note firstly by formula~\eqref{eqn:L0m}, if $\dvac^{\tau}(z)\in \Frob[z]$, then by definition $z^{-1}J^{\tau}(-z)\in T\cL$ and thus $z^{-1}J({\bf t},-z)\in T\cL$ according to Proposition~\ref{prop:Lagrangian}.
The genus-$0$ generalized Virasoro constraints follows immediately from the fact $D_{\rho,z}T\cL\subset T\cL$ and $\omega(f,g)=0$ for $f,g\in T\cL$.

Now we consider the case with $\dvac^{\tau}(z)\notin \Frob[z]$, then by assumption, 
$\delta=2\virm+3$ for some non-negative integer $\virm$.
By Proposition~\ref{prop:Lagrangian-vira}, we prove the genus-$0$ $L_m$-constraint, $m\geq \virm$, in two steps:
firstly, $D^{m+1}_{\rho,z}z^{-1}J^{\tau}(-z)\in T\cL$ (then we have $\mathscr L_{0,m}({\bf t})$ is a constant) and
secondly $\mathscr L_{0,m}({\bf t})|_{\bf t=\tau}=-\delta_{m,2\virm}\cdes$. 
We prove these step by step.

For the first step, by Proposition~\ref{prop:Lagrangian} and equations~\eqref{eqn:S-symp} and~\eqref{eqn:D-rho-E}, we just need to prove $D^{m+1}_{\E,z}\dvac^{\tau}(z)\in \Frob[z]$.
Let $\tilde D_{\E,z}=D_{\E,z}+\frac{\delta-3}{2}z=\E+(\mu+\frac{\delta}{2})z+z^2\pd_z$,
then by equation~\eqref{eqn:hom-vacuum-des}, we have
$\tilde D_{\E,z}\dvac^{\tau}(z)=E$ and thus
	\beq\label{eqn:tildeD-J-vac}
	\tilde D^{m+1}_{\E,z}\dvac^{\tau}(z)=\tilde D^{m}_{\E,z}E \in\Frob[z],\qquad \forall m\geq 0.
	\eeq
	Notice that $[\tilde D_{\E,z},z^{l}]=l z^{l+1}$, we have $\tilde D^k_{\E,z}z^{l}=z^{l}(\tilde D_{\E,z}+l z)^k$.
	Hence
	\beq\label{eqn:D-tildeD}\textstyle
	D^{m+1}_{\E,z}=\big(\tilde D_{\E,z}+\frac{3-\delta}{2}z\big)^{m+1}
	=\tilde D^{m+1}_{\E,z}+\sum_{k=0}^{m}a_{m,k}(z)\tilde D^{k}_{\E,z}
	\eeq
	for some polynomials $a_{m,k}(z)\in\mathbb Q[z]$. In particular, by direct computation, we have 
	$$\textstyle
	a_{m,0}(z)=\big(z^2\pd_z+\frac{3-\delta}{2}z\big)^{m+1}(1)=z^{m+1}\prod_{i=0}^{m}\big(\frac{3-\delta}{2}+i\big).
	$$
	Since by definition, $\virm=\frac{\delta-3}{2}\in\mathbb Z_{\geq 0}$, we have $a_{m,0}(z)=0$ for $m\geq \virm$.
Therefore,
$$\textstyle
	D^{m+1}_{\E,z}\dvac^{\tau}(z)
=\tilde D^{m}_{\E,z}E +\sum_{k=1}^{m}a_{m,k}(z)\tilde D^{k-1}_{\E,z}E
\in \Frob[z].
$$

For the second step, by equation~\eqref{eqn:hom-J} and by induction, it is straightforward to see
$$\textstyle
	D_{\rho,z}^{m+1}z^{-1}J^{\tau}(-z)=z^m\big(\prod_{i=0}^m(i-\virm- \nabla_E) \big) J^{\tau}(-z).
$$
Here the multiplications of $\nabla_E$ act on a function $f$ by $\nabla_{E}^{k+1}(f)=E(\nabla_E^{k}(f))$.
	Write 
	$$\textstyle
\prod_{i=0}^m(i-\virm- \nabla_E)
	=\sum_{k=0}^{m+1} b_{m,k}\nabla_{E}^k,$$ 
	then we have $b_{m,0}=0$ for $m\geq \virm$.
	Notice 
	$$
	\omega(z^m EJ^{\tau}(-z),\nabla_{ E}^{k-1} J^{\tau}(-z))=0,\qquad 2\leq k\leq m+1,$$ 
	(since $\omega(z^m\cdot f,g)=\omega(f,(-z)^mg)$ and $z^m\nabla_{ E}^{k-1} J^{\tau}(-z)\in T\cL$ for $2\leq k\leq m+1$), we have
	$$\textstyle
	\mathscr L_{0,m}({\bf t})|_{\bf t=\tau}=-\frac{1}{2}\sum_{k=1}^{m+1}b_{m,k}\nabla_{E}^{k-1}\big(\omega(z^m EJ^{\tau}(-z),J^{\tau}(-z))\big).
	$$
	Furthermore, it follows from the straightforward computations that
	$$
	\pd_{\tau^a}\big(\omega(z^m EJ^{\tau}(-z),J^{\tau}(-z))\big)
	=(m-2\virm)\omega(z^m\pd_{\tau^a}J^{\tau}(-z),J^{\tau}(-z)),
	$$
	thus we have
	$$\textstyle
	\mathscr L_{0,m}({\bf t})|_{\bf t=\tau} =-\frac{1}{2}\sum_{k=1}^{m+1}b_{m,k}(m-2\virm)^{k-1}\cdot\omega(z^m EJ^{\tau}(-z),J^{\tau}(-z)).
	$$
	For $m\geq \virm$ and $m\ne 2\virm$, we have
	$\sum_{k=1}^{m+1}b_{m,k}(m-2\virm)^{k}=\prod_{i=0}^{m}(i-\virm-(m-2\virm)$,
	thus $\mathscr L_{0,m}({\bf t})|_{\bf t=\tau}=0$.
	For $m=2\virm$, we have $b_{2\virm,1}=(-1)^{\virm+1}(\virm!)^2$ and
	$$\textstyle
	\mathscr L_{0,2\virm}({\bf t})|_{\bf t=\tau}
=-\frac{1}{2}b_{2\virm,1}\cdot\omega(EJ^{\tau}(-z), z^{2\virm}J^{\tau}(-z)).
	$$
Notice that $EJ^{\tau}(-z)=S^{\tau,*}(-z)E$ and $\omega(S^{\tau,*}(-z)f,g)=\omega(f,S^{\tau}(z)g)$, the r.h.s of above equation gives exactly $-\cdes$.
\end{proof}
\begin{remark}\label{rmk:inv-vira-0}
	We can prove that if the term $a_{m,0}(z)$ in the proof of Theorem~\ref{thm:vira-des-0} is not vanished, then the genus-0 $L_m$-constraint fails.
	Otherwise, let $\widetilde\dvac^{\tau}(z)=a_{m,0}(z)\dvac^{\tau}(z)$, the $L_m$-constraint, together with equation~\eqref{eqn:tildeD-J-vac}, gives us $\widetilde\dvac^{\tau}(z)\in\Frob[z]$. 
	By QDE~\eqref{eqn:QDE-vacuum-des} of the $\dvac$-vector, we have
	$$
	z\pd_{\tau^a}\widetilde\dvac^{\tau}(z)=\phi_a*_{\tau}\widetilde\dvac^{\tau}(z)-a_{m,0}(z)\phi_a.
	$$
	Let $\widetilde\dvac^{\tau}(z)=\sum_{i}\widetilde\dvac^{\tau}_i z^i$, then the equation solves $\widetilde\dvac^{\tau}_0=\cdots=\widetilde\dvac^{\tau}_m=0$ and thus
	$$
	\dvac^{\tau}(z)=\widetilde\dvac^{\tau}(z)/a_{m,0}(z)\in\Frob[z],
	$$
which contradicts our assumption.
We see the existence of the Virasoro-index $\virm$ is the necessary (and sufficient for genus-$0$) condition for the generalized Virasoro conjecture.
\end{remark}

\begin{corollary}\label{cor:vira-0-nu-v}
	For a homogeneous CohFT with vacuum $\vac^{\tau}(z)$ and calibrated with a homogeneous $S$-matrix and $\dvac$-vector $\dvac^{\tau}(z)$, the genus-0 $L_m$-constraint holds if and only if 
	$$
	D_{\E,z}^{m+1}\dvac^{\tau}(z)=D_{\E,z}^{m+1}\vac^{\tau}(z).
	$$
\end{corollary}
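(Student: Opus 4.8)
The plan is to reduce the statement to a condition purely about the vectors $\dvac^{\tau}(z)$ and $\vac^{\tau}(z)$ and then to exploit that both satisfy the \emph{same} first-order homogeneity equation. First I would invoke Proposition~\ref{prop:Lagrangian-vira}: the genus-$0$ $L_m$-constraint --- i.e. the assertion that $\mathscr L_{0,m}({\bf t})$ is constant in ${\bf t}$ --- is equivalent to $D_{\rho,z}^{m+1}z^{-1}J^{\tau}(-z)\in T\cL$. By Proposition~\ref{prop:Lagrangian}(1) and the symplectic condition~\eqref{eqn:S-symp}, a vector $v(z)$ lies in $T\cL$ iff $S^{\tau}(z)v(z)\in\Frob[z]$. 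Using the definition $J^{\tau}(-z)=-zS^{\tau,*}(-z)\dvac^{\tau}(z)$, the identity $S^{\tau}(z)S^{\tau,*}(-z)=\id$, and the intertwining relation~\eqref{eqn:D-rho-E}, namely $S^{\tau}(z)D_{\rho,z}=D_{\E,z}S^{\tau}(z)$, applied $m+1$ times, I would compute $S^{\tau}(z)D_{\rho,z}^{m+1}z^{-1}J^{\tau}(-z)=-D_{\E,z}^{m+1}\dvac^{\tau}(z)$. Hence the genus-$0$ $L_m$-constraint is equivalent to $D_{\E,z}^{m+1}\dvac^{\tau}(z)\in\Frob[z]$; this is exactly the reduction already used in the proof of Theorem~\ref{thm:vira-des-0}.

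Next I would bring in the homogeneity of both calibrations. Writing $\tilde D_{\E,z}=D_{\E,z}+\tfrac{\delta-3}{2}z=\E+(\mu+\tfrac{\delta}{2})z+z^2\pd_z$, equations~\eqref{eqn:hom-vacuum-2} and~\eqref{eqn:hom-vacuum-des-2} give $\tilde D_{\E,z}\vac^{\tau}(z)=E=\tilde D_{\E,z}\dvac^{\tau}(z)$, so the difference $w(z):=\vac^{\tau}(z)-\dvac^{\tau}(z)$ satisfies $\tilde D_{\E,z}w(z)=0$. Using the expansion $D_{\E,z}^{m+1}=\tilde D_{\E,z}^{m+1}+\sum_{k=0}^{m}a_{m,k}(z)\tilde D_{\E,z}^{k}$ from~\eqref{eqn:D-tildeD}, together with $\tilde D_{\E,z}^{k}w(z)=0$ for every $k\geq 1$, only the $k=0$ term survives on $w$, giving $D_{\E,z}^{m+1}\vac^{\tau}(z)-D_{\E,z}^{m+1}\dvac^{\tau}(z)=a_{m,0}(z)\,w(z)$, where $a_{m,0}(z)=z^{m+1}\prod_{i=0}^{m}(\tfrac{3-\delta}{2}+i)$. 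The same expansion applied to $\dvac^{\tau}(z)$, combined with~\eqref{eqn:tildeD-J-vac}, shows $D_{\E,z}^{m+1}\dvac^{\tau}(z)=(\text{polynomial})+a_{m,0}(z)\dvac^{\tau}(z)$. The statement therefore collapses to the elementary equivalence $a_{m,0}(z)\dvac^{\tau}(z)\in\Frob[z]\iff a_{m,0}(z)\,w(z)=0$.

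Finally I would verify this equivalence by cases on $a_{m,0}(z)$. If $a_{m,0}(z)=0$ (i.e. $\tfrac{\delta-3}{2}\in\{0,\dots,m\}$) both sides hold trivially. If $a_{m,0}(z)\neq 0$, then $a_{m,0}(z)\dvac^{\tau}(z)\in\Frob[z]$ is equivalent to $z^{m+1}\dvac^{\tau}(z)\in\Frob[z]$, and the implication that this forces $\dvac^{\tau}(z)\in\Frob[z]$ is the main obstacle. I would resolve it exactly as in Remark~\ref{rmk:inv-vira-0}: multiplying the QDE~\eqref{eqn:QDE-vacuum-des} by $a_{m,0}(z)$ and comparing coefficients of $z^{0},\dots,z^{m}$ forces the lowest $m+1$ coefficients of $z^{m+1}\dvac^{\tau}(z)$ to vanish, so $\dvac^{\tau}(z)$ has no negative powers and is a polynomial. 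Once $\dvac^{\tau}(z)\in\Frob[z]$, uniqueness of the polynomial solution of the QDE gives $\dvac^{\tau}(z)=\vac^{\tau}(z)$, hence $w(z)=0$; the converse is immediate, since $\dvac^{\tau}(z)=\vac^{\tau}(z)\in\Frob[z]$ makes $z^{m+1}\dvac^{\tau}(z)$ polynomial. I expect this last bootstrap --- upgrading a bounded-below Laurent tail to a genuine polynomial through the recursion in the QDE --- to be the only delicate point; everything else is formal manipulation with the operators $D_{\E,z}$ and $S^{\tau}(z)$.
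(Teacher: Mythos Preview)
Your argument is correct and follows essentially the same approach as the paper: reduce via Proposition~\ref{prop:Lagrangian-vira} and~\eqref{eqn:D-rho-E} to $D_{\E,z}^{m+1}\dvac^{\tau}(z)\in\Frob[z]$, use the expansion~\eqref{eqn:D-tildeD} together with the shared homogeneity $\tilde D_{\E,z}\vac^{\tau}=\tilde D_{\E,z}\dvac^{\tau}=E$ to isolate the obstruction $a_{m,0}(z)$, and handle the nonvanishing case by the QDE bootstrap of Remark~\ref{rmk:inv-vira-0}. One minor imprecision: you identify the genus-$0$ $L_m$-constraint with ``$\mathscr L_{0,m}$ is constant'' rather than with $\mathscr L_{0,m}({\bf t})+\delta_{m,2\virm}\cdes=0$, but the missing verification of the constant is exactly the second step of Theorem~\ref{thm:vira-des-0}, which you do invoke.
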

\begin{proof}
We note first the string equation holds if and only if $J^{\tau}(-z)\in z T\cL$ which is equivalent to the fact that $\dvac^{\tau}(z)$ a polynomial in $z$ and is further equivalent to $\dvac^{\tau}(z)=\vac^{\tau}(z)$.

For the case that $\dvac^{\tau}(z)\notin\Frob[z]$, the validity of the genus-$0$ $L_m$-constraint requires the existence of the Virasoro-index $\virm$ and $m\geq \virm\geq 0$.
Since the $\dvac$-vector and the vacuum vector satisfy the same homogeneity condition~\eqref{eqn:hom-vacuum-des} (or ~\eqref{eqn:hom-vacuum}),  equation~\eqref{eqn:tildeD-J-vac} also applies to the vacuum vector $\vac^{\tau}(z)$. 
By equation~\eqref{eqn:D-tildeD}, $D_{\E,z}^{m+1}\dvac^{\tau}(z)=D_{\E,z}^{m+1}\vac^{\tau}(z)$ follows from $a_{m,0}(z)=0$.
Conversely, the same homogeneity condition for the $\dvac$-vector and the vacuum vector, equation~\eqref{eqn:tildeD-J-vac} and equation~\eqref{eqn:D-tildeD} together give that the equation $D_{\E,z}^{m+1}\dvac^{\tau}(z)=D_{\E,z}^{m+1}\vac^{\tau}(z)$ implies $a_{m,0}(z)=0$
 (since we have assumed $\dvac^{\tau}(z)\notin\Frob[z]$ thus $\dvac^{\tau}(z)\ne \vac^{\tau}(z)$).
Notice that $a_{m,0}(z)=z^{m+1}\prod_{i=0}^{m}(\frac{3-\delta}{2}+i)$,
the vanishing of $a_{m,0}(z)$ ensures $\delta=2\virm+3$ for some integer $0\leq \virm \leq m$ and the genus-$0$ $L_m$-constraint follows.
\end{proof}

\subsection{Generalized Virasoro constraints II: higher genus part}
In this subsection, we consider the generalized Virasoro constraints of higher genus. 
We prove the second and third parts of Theorem~\ref{thm:vira-des-intro} by establishing the equivalence between (generalized) descendent and ancestor Virasoro constraints.

\begin{proposition}\label{prop:anc-des-vira}
	For $m\geq \virm$, the $L_m$-constraint is equivalent to the $L^{\anc}_m$-constraint.
\end{proposition}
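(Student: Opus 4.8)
The plan is to realize Proposition~\ref{prop:anc-des-vira} as the quantized form of the conjugacy $\ell^{\anc}_m=\mathrm{Ad}_{S^{\tau}(z)}\ell_m$. Recall from Definition~\ref{def:formaldes} that, in Givental's quantization language, the descendent and ancestor potentials are related by $\cD({\bf t};\hbar)=e^{F_1(\tau)}\,\widehat{(S^{\tau})^{-1}}\,\cA^{\tau}({\bf s};\hbar)$, where $\widehat{(S^{\tau})^{-1}}$ encodes both the coordinate change ${\bf s}(z)=[S^{\tau}(z){\bf t}(z)]_{+}-\tau$ and the quadratic exponential $e^{\frac{1}{2\hbar^2}\widetilde W^{\tau}}$, and where the descendent dilaton shift $\tilde{\bf t}(z)={\bf t}(z)-\tau_0-z\dil(z)$ is carried by $S^{\tau}$ toward the ancestor dilaton shift $\tilde{\bf s}(z)={\bf s}(z)-z\vac^{\tau}(z)$. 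First I would record the classical conjugacy: equation~\eqref{eqn:D-rho-E}, $S^{\tau}(z)D_{\rho,z}=D_{\E,z}S^{\tau}(z)$, together with $\ell_m=-D^{m+1}_{\rho,z}z^{-1}$ and $\ell^{\anc}_m=-D^{m+1}_{\E,z}z^{-1}$, gives at once $\ell^{\anc}_m=S^{\tau}(z)\,\ell_m\,S^{\tau}(z)^{-1}$ as infinitesimal symplectic transformations of $\cH$.

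Next I would quantize this conjugacy. Writing $S^{\tau}=e^{s}$ with $s=\log S^{\tau}$ infinitesimal symplectic, the quantized operators satisfy $\widehat{S^{\tau}}^{-1}\,\widehat{\ell^{\anc}_m}\,\widehat{S^{\tau}}=\widehat{\ell_m}+C_m$, where $C_m$ is a scalar cocycle. The evaluation of $C_m$ is entirely parallel to the computation of $e^{-\widehat{r(z)}}\widehat{\ell^{\tau}_m}e^{\widehat{r(z)}}$ carried out in the proof of Theorem~\ref{thm:anc-vira}(3): one expands $\omega(D^{m+1}_{\E,z}z^{-1}f,f)^{\widehat{}}$ into creation and annihilation parts, uses $\widehat{S^{\tau}}^{-1}\widehat{f}(z)\widehat{S^{\tau}}=\widehat{(S^{\tau})^{-1}f}(z)$ as in~\eqref{eqn:Ri-f-R}, and collects the purely scalar residue. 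Comparing $C_m$ with the two explicit scalar anomalies appearing in the definitions of $L_m$ and $L^{\anc}_m$—namely $-\tfrac{\delta_{m,0}}{4}\str(\mu^2-\tfrac14)$ and $-\tfrac14\sum_{i+j=m}\str(\E^i(\mu+\tfrac12)\E^j(\mu-\tfrac12))$—I would verify $C_m+\big(\text{anomaly of }L^{\anc}_m\big)=\big(\text{anomaly of }L_m\big)$ at order $\hbar^{0}$, using $\mu^{*}=-\mu$ and the symmetrizations of $\tr(\E^{m+1})$-type traces.

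The main obstacle, and the place where the hypothesis $m\geq\virm$ is indispensable, is the matching of the two dilaton shifts at order $\hbar^{-2}$. The conjugacy $\ell^{\anc}_m=\mathrm{Ad}_{S^{\tau}}\ell_m$ is exact only as an operator on $\cH$; once the shifts are inserted, the transport of $\tilde{\bf t}$ through $S^{\tau}$ produces $z\vac^{\tau}(z)$ up to the discrepancy between the $\dvac$-vector $\dvac^{\tau}(z)$ and the vacuum vector $\vac^{\tau}(z)$. When $\dvac^{\tau}(z)\in\Frob[z]$ the two coincide, $\virm=-1$, and nothing is corrected; in general I would invoke the polynomiality established inside the proof of Theorem~\ref{thm:vira-des-0}, namely $D^{m+1}_{\E,z}\dvac^{\tau}(z)\in\Frob[z]$ for $m\geq\virm$ (equivalently $a_{m,0}(z)=0$), to show that $\widehat{\ell_m}$ applied to this discrepancy is annihilated for all $m\geq\virm$, except that at $m=2\virm$ it leaves the residual constant $\mathscr L_{0,2\virm}({\bf t})|_{{\bf t}=\tau}=-\cdes$. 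This is precisely the genus-$0$ computation of Theorem~\ref{thm:vira-des-0}, reused now to identify the $\hbar^{-2}$ term $\tfrac{\delta_{m,2\virm}}{\hbar^2}\cdes$.

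Assembling the pieces yields the operator identity $L^{\anc}_m\,\cA^{\tau}=e^{-F_1(\tau)}\,\widehat{S^{\tau}}\big(L_m+\tfrac{\delta_{m,2\virm}}{\hbar^2}\cdes\big)\cD$ for all $m\geq\virm$. Since $\widehat{S^{\tau}}$ and the scalar $e^{-F_1(\tau)}$ are invertible, the vanishing of the left-hand side is equivalent to the vanishing of $\big(L_m+\tfrac{\delta_{m,2\virm}}{\hbar^2}\cdes\big)\cD$, which is exactly the asserted equivalence of the $L_m$-constraint and the $L^{\anc}_m$-constraint. The two genuinely computational inputs—the scalar cocycle $C_m$ and the $\hbar^{-2}$ shift correction—are the heart of the argument, the latter being the subtler one because it is where the Virasoro-index controls the output.
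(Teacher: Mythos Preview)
Your plan is essentially the same as the paper's, only repackaged in Givental's $\widehat{S}$-language rather than carried out by direct conjugation with $e^{\frac{1}{2\hbar^2}\widetilde W^{\tau}}$. The paper defines $\tilde L^{\anc}_m:=e^{-\frac{1}{\hbar^2}\widetilde W^{\tau}}\big(L_m+\tfrac{\delta_{m,2\virm}}{\hbar^2}\cdes\big)e^{\frac{1}{\hbar^2}\widetilde W^{\tau}}$, splits it into pieces $\tilde L^{\anc}_{m,2}$, $\tilde L^{\anc}_{m,1}$, $\tilde L^{\anc}_{m,0}$, and checks each against the explicit formula~\eqref{eqn:Lm-anc}. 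The linear part $\tilde L^{\anc}_{m,1}$ is where Corollary~\ref{cor:vira-0-nu-v} (equivalently your polynomiality $D^{m+1}_{\E,z}\dvac^{\tau}(z)\in\Frob[z]$) enters to convert $\dvac^{\tau}$ into $\vac^{\tau}$; the $\hbar^{-2}$ piece $B^{\tau}_m$ is then identified with $\eta(\E^{m+1}s_0,s_0)$ by invoking the already-proved genus-$0$ constraint. These are exactly your two ``computational inputs''.

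One caution: your claim that the scalar $C_m$ is computed ``entirely parallel'' to the $R$-matrix case in the proof of Theorem~\ref{thm:anc-vira}(3) is optimistic. The $S$-matrix is lower-triangular in $z$ while $R$ is upper-triangular, so the cocycle sits on the opposite side: for $R$ one extracts a single trace $\tfrac12\tr(\E^{m+1}R_1)$, whereas for $S$ the constant term arises from the second-derivative part of $L_m$ hitting $e^{\widetilde W^{\tau}/\hbar^2}$, and the paper evaluates it through a residue computation with $W^{\tau}(z,w)$ and the identity $[D^{m+2}_{\rho,w},w^{-1}]=-(m+2)D^{m+1}_{\rho,w}$ before collapsing it to $-\tfrac14\sum_{i+j=m}\str\!\big(\E^i(\mu+\tfrac12)\E^j(\mu-\tfrac12)\big)$. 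The outcome matches your anomaly bookkeeping, but the calculation is not a one-line trace identity, so be prepared for that step to carry most of the work.
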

\begin{proof}
	By the relation of the total descendent potential and the total ancestor potential (equation~\eqref{eqn:des-anc}), 
	the $L_m$-constraint is equivalent to the following equation for $\cA^{\tau}({\bf s};\hbar)$:
	$$
	\tilde L^{\anc}_{m}\cA^{\tau}({\bf s};\hbar)=0,
	$$
	where the operator $\tilde L^{\anc}_{m}$ is defined by
	$$
	\tilde L^{\anc}_{m}:=e^{-\frac{1}{\hbar^2}\widetilde W^{\tau}({\bf t}-\tau,{\bf t}-\tau)}\cdot (L_m+\tfrac{\delta_{m,2\virm}}{\hbar^2}\cdes)\cdot e^{\frac{1}{\hbar^2}\widetilde W^{\tau}({\bf t}-\tau,{\bf t}-\tau)}.
	$$ 
Here we use the expression~\eqref{eqn:Lm-des} of the operator $L_m$.
By viewing ${\bf t}(z)=[S^{\tau,*}(-z){\bf s}(z)]_{+}+\tau$  and $\phi_a\psi^k=[S^{\tau}(\bar\psi)\phi_a\bar\psi^k]_{+}$,
our goal is to prove $\tilde L^{\anc}_{m}=L^{\anc}_{m}$ (see equation~\eqref{eqn:Lm-anc} for the expression of the operator $L^{\anc}_m$).
By definition, $\tilde L_m^{\anc}$ has form:
$$
L^{\anc}_{m}= \tfrac{1}{2\hbar^2}B^{\tau}_m({\bf t},{\bf t})
+\tilde L_{m,0}^{\tau}+\tilde L_{m,1}^{\tau}+\tfrac{\hbar^2}{2}\tilde L_{m,2}^{\tau},
$$
where $B^{\tau}_{m}({\bf t},{\bf t})$ is a quadratic function of ${\bf t}$ and
\begin{align*}
\tilde L^{\anc}_{m,2}=&\, \sum_{k=1}^{m}(-1)^k \phi_a \psi^{k-1}\circ [D^{m+1}_{\rho,\psi}\phi^a\psi^{-k-1}]_{+},\\
\tilde L^{\anc}_{m,1}=&\,  [D^{m+1}_{\rho,\psi}\psi^{-1}\tilde{\bf t}(\psi)]_{+}
+\sum_{k=1}^{m}(-1)^k\big(\<\phi_a\psi^{k-1}\>^{\tau}_{0,1}+\<\phi_a\psi^{k-1},{\bf t}(\psi)-\tau\>^{\tau}_{0,2}\big) 
[D^{m+1}_{\rho,\psi}\phi^a\psi^{-k-1}]_{+},\\
\tilde L_{m,0}^{\tau}=&\,  -\delta_{m,0}\frac{1}{4}\str\Big(\mu^2-\frac{1}{4}\Big)
+\frac{1}{2}\sum_{k=1}^{m}(-1)^k\<\phi_a\psi^{k-1}, [D^{m+1}_{\rho,\psi}\phi^a\psi^{-k-1}]_{+}\>^{\tau}_{0,2}.
\end{align*}
In the follows, we compute these three terms one by one.
	
Firstly, by equation~\eqref{eqn:D-rho-E}, we have
	$$ \textstyle
	\sum_{k=1}^{m}(-1)^k\phi_a\psi^{k-1}\circ [D^{m+1}_{\rho,\psi}\phi^a\psi^{-k-1}]_{+}
	=\sum_{k=1}^{m}(-1)^k[S^{\tau}(\bar\psi)\phi_a\bar\psi^{k-1}]_{+}\circ [D^{m+1}_{\E,\bar\psi}S^{\tau}(\bar\psi)\phi^a\bar\psi^{-k-1}]_{+}.
	$$
By taking expansion of $S^{\tau}(\bar\psi)=\sum_{j\geq 0}S^{\tau}_j\bar\psi^{-j}$ and using 
$\phi_a\otimes (S^{\tau}_j\phi^a)=(S^{\tau,*}_j\phi_a)\otimes \phi^a$,
above equation can be further simplified as follows:
$$\textstyle
\sum_{k=1}^{m}\sum_{i,j\geq 0}(-1)^k[S^{\tau}_iS^{\tau,*}_j\phi_a\bar\psi^{k-i-1}]_{+}\circ [D^{m+1}_{\E,\bar\psi}\phi^a\bar\psi^{-k-j-1}]_{+}.
$$
Note the summation in above expression is finite, by changing the order of the summation and by using the symplectic condition~\eqref{eqn:S-symp} of the $S$-matrix, we get
$$\textstyle
\tilde L^{\anc}_{m,2}= \sum_{k=1}^{m}(-1)^k \phi_a \bar\psi^{k-1}\circ [D^{m+1}_{\E,\psi}\phi^a\bar\psi^{-k-1}]_{+}.
$$

Secondly, notice that
$$\textstyle
\tilde L^{\anc}_{m,1}= \big[D^{m+1}_{\rho,\psi}\psi^{-1}\big(J^{\tau}(-\psi)+{\bf t}(\psi)-\tau
+\sum_{k\geq 1}\<\phi_a\psi^{k-1},{\bf t}(\psi)-\tau\>^{\tau}_{0,2} \phi^a(-\psi)^{-k}\big)\big]_{+}
$$
by using equation~\eqref{eqn:diff-W-S} and by identifying $\phi_a\psi^k$ with $[S^{\tau}(\bar\psi)\phi_a\bar\psi^k]_{+}$, we see
$$
\tilde L^{\anc}_{m,1}= \big[D_{\E,\bar\psi}^{m+1}\bar\psi^{-1}S^{\tau}(\bar\psi)\big(J^{\tau}(-\bar\psi)+S^{\tau,*}(-\bar\psi)[S^{\tau}(\bar\psi)({\bf t}(\bar\psi)-\tau)]_{+}\big)\big]_{+}.
$$
Now by using equations~\eqref{def:J-function-small}, \eqref{eqn:S-symp}, \eqref{eqn:s-t}
and Corollary~\ref{cor:vira-0-nu-v}, we obtain
	$$
	\tilde L^{\anc}_{m,1}
	=\big[D_{\E,\bar\psi}^{m+1}\bar\psi^{-1}({\bf s}(\bar\psi)-\bar\psi\vac^{\tau}(\bar\psi))\big]_{+}.
	$$
	
Thirdly,  by the definition of $\<-\>^{\tau}_{0,2}$, one can see for $m\geq 1$
$$\textstyle
\tilde L_{m,0}^{\tau}=\frac{1}{2}\sum_{k\geq 1}  (-1)^k  \mathop{\Res}_{z=0} \mathop{\Res}_{w=0}
\eta (\phi_az^{k-1},W^{\tau}(z,w) D_{\rho,w} ^{m+1} \phi^aw^{-k-1})\tfrac{dz}{z} \tfrac{dw}{w}.
$$
Notice that $[D^{m+2}_{\rho,w},w^{-1}]=-(m+2) D_{\rho,w}^{m+1}$, we have
\begin{align*}
\tilde L_{m,0}^{\tau}=&\, \textstyle -\frac{1}{2m+4}\sum_{k\geq 1}  (-1)^k  \mathop{\Res}_{z=0} \mathop{\Res}_{w=0}
\eta (\phi_az^{k-1},W^{\tau}(z,w) D^{m+2}_{\rho,w}\phi^aw^{-k-2})\tfrac{dz}{z} \tfrac{dw}{w}\\
&\, \textstyle +\frac{1}{2m+4}\sum_{k\geq 1}  (-1)^k  \mathop{\Res}_{z=0} \mathop{\Res}_{w=0}
\eta (\phi_az^{k-1},W^{\tau}(z,w) w^{-1}D^{m+2}_{\rho,w}\phi^aw^{-k-1})\tfrac{dz}{z} \tfrac{dw}{w}.
\end{align*}
Recall by definition of $W^{\tau}$ (equation~\eqref{eqn:W-S}), we have
$$W^{\tau}(z,w) w^{-1}=-W^{\tau}(z,w) z^{-1}+S^{\tau,*}(z)S^{\tau}(w)-\id. $$
Substitute this into the second line on the r.h.s of above equation, then one can see the summation containing $-W^{\tau}(z,w) z^{-1}$ cancel with the first line on the r.h.s of above equation, and one gets the result:
$$\textstyle
\tilde L_{m,0}^{\tau}= \frac{1}{2m+4}\sum_{k\geq 1}  (-1)^k  \mathop{\Res}_{z=0} \mathop{\Res}_{w=0}
\eta (\phi_az^{k-1},(S^{\tau,*}(z)S^{\tau}(w)-\id)D^{m+2}_{\rho,w}\phi^aw^{-k-1})\tfrac{dz}{z} \tfrac{dw}{w}.
$$
Now by using similar method that we used to compute $\tilde L_{m,2}^{\tau}$, one can see
\begin{align*}
\tilde L_{m,0}^{\tau}= &\, \textstyle -\frac{1}{2m+4}\sum_{k\geq 1} (-1)^k  \mathop{\Res}_{z=0} \mathop{\Res}_{w=0}
\eta (\phi_az^{k-1},D^{m+2}_{\rho,w}\phi^aw^{-k-1})\tfrac{dz}{z} \tfrac{dw}{w}\\
&\, \textstyle +\frac{1}{2m+4}\sum_{k\geq 1} (-1)^k  \mathop{\Res}_{z=0} \mathop{\Res}_{w=0}
\eta (\phi_az^{k-1},D^{m+2}_{\E,w}\phi^aw^{-k-1})\tfrac{dz}{z} \tfrac{dw}{w}.
\end{align*}
This gives
$$
\tilde L_{m,0}^{\tau}=  \textstyle \frac{1}{2m+4} \str \big([w^0]D^{m+2}_{\rho,w}(w^{-2})\big)
 - \frac{1}{2m+4} \str \big([w^0]D^{m+2}_{\E,w}(w^{-2})\big).
$$
By the definitions of $D_{A,z}$ (equation~\eqref{def:DAz}), we have
$$
\str \big([w^0]D^{m+2}_{A,w}(w^{-2})\big)
=\sum_{i+j+k=m}\str\big(A^i(\mu+\tfrac{1}{2})A^j(\mu-\tfrac{1}{2})A^k\big)
=\frac{m+2}{2}\sum_{i+j=m}\str\big(A^i(\mu+\tfrac{1}{2})A^j(\mu-\tfrac{1}{2})\big)
$$
When taking $A=\rho$, notice that $\rho$ is nilpotent and $\mu$ is diagonal under flat basis, above expression vanishes for $m>0$, thus we have
$$\textstyle
\tilde L_{m,0}^{\tau}=   -\frac{1}{4}\sum_{i+j=m}\str\big(\E^i(\mu+\tfrac{1}{2})\E^j(\mu-\tfrac{1}{2})\big)
$$
and the formula also holds for $m=0$.
	
Now we know $\tilde L_m^{\tau}$ has the following form:
	\begin{align*}
		\tilde L^{\anc}_m=&\, \textstyle \frac{1}{2\hbar^2}B^{\tau}_{m}({\bf t(s)},{\bf t(s)})
		-\frac{1}{4} \sum_{a+b=m} \str(\E^a(\mu+\tfrac{1}{2})\E^b(\mu-\tfrac{1}{2})) \\
		&\, \textstyle +[D^{m+1}_{\E,\bar\psi}\psi^{-1}\tilde{\bf s}(\psi)]_{+}
		+\frac{\hbar^2}{2}\sum_{k=1}^{m}(-1)^{k}\phi_a\bar\psi^{k-1}\circ
		[D^{m+1}_{\E,\bar\psi}\phi^a\bar\psi^{-k-1}]_{+}.
	\end{align*}
	By genus-0 $L_m$-constraint, we have
	$$\textstyle
	\frac{1}{2}B^{\tau}_{m}({\bf t(s)},{\bf t(s)})-\frac{1}{2}\<D^{m+1}_{\E,\bar\psi}\vac^{\tau}(\bar\psi),s_0,s_0\>_{0,3}^{\tau} +O({\bf s}^3)=0.
	$$
	This gives $B^{\tau}_{m}({\bf t(s)},{\bf t(s)})=\eta(\E^{m+1}s_0, s_0)$, and we obtain $\tilde L^{\tau}_m=L^{\anc}_m$.
\end{proof}

Now we can prove the second and third parts of Theorem~\ref{thm:vira-des-intro}.
\begin{theorem}\label{thm:vira-des-g}
We have the following results:

	(1). For each $m\geq \max\{\virm,0\}$, the genus-1 $L_m$-constraint is equivalent to equation~\eqref{eqn:vira-genusone-0}.

	(2). The generalized Virasoro conjecture, as stated in Conjecture \ref{conj:des-vira}, holds for any semi-simple homogeneous CohFT  with  homogeneous calibrations and a Virasoro index.
\end{theorem}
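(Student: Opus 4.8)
The plan is to deduce everything from the operator-level equivalence already recorded in Proposition~\ref{prop:anc-des-vira}, transferring the ancestor results of Theorem~\ref{thm:anc-vira} to the descendent side genus by genus. Recall that for $m\geq\virm$ the operator $\tilde L^{\anc}_m$ obtained by conjugating $L_m+\frac{\delta_{m,2\virm}}{\hbar^2}\cdot\cdes$ by $\exp\big(\frac{1}{\hbar^2}\widetilde W^{\tau}\big)$ and performing the coordinate change ${\bf s}={\bf s}({\bf t})$ is shown there to equal $L^{\anc}_m$. Consequently the full generalized $L_m$-constraint on $\cD({\bf t};\hbar)$ is equivalent to $L^{\anc}_m\cA^{\tau}({\bf s};\hbar)=0$, and this equivalence respects the genus expansions~\eqref{def:Lgm-D} and~\eqref{def:Lgm-A} since $\tilde L^{\anc}_m=L^{\anc}_m$ holds with all of its $\hbar$-dependence intact.

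For part~(1), I would simply intersect the two ranges of validity. By Proposition~\ref{prop:anc-des-vira}, for $m\geq\virm$ the genus-$1$ $L_m$-constraint is equivalent to the genus-$1$ $L^{\anc}_m$-constraint, i.e.\ $\mathscr L_{1,m}({\bf t})=0\iff\mathscr L^{\anc}_{1,m}({\bf s})=0$. By Theorem~\ref{thm:anc-vira}(2), for $m\geq 0$ the latter is in turn equivalent to equation~\eqref{eqn:vira-genusone-0}. Both equivalences hold simultaneously precisely when $m\geq\max\{\virm,0\}$, which is exactly the stated hypothesis; chaining them proves the claim. No new computation is required beyond observing that the genus-$1$ component of $\tilde L^{\anc}_m$ agrees with that of $L^{\anc}_m$, which is already implicit in the proof of Proposition~\ref{prop:anc-des-vira}.

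For part~(2), given a semisimple homogeneous CohFT with homogeneous $S$- and $\dvac$-calibrations and a Virasoro index $\virm$, I would invoke Theorem~\ref{thm:anc-vira}(3) to get $L^{\anc}_m\cA^{\tau}({\bf s};\hbar)=0$ for all $m\geq-1$, hence in particular for all $m\geq\virm$; here Remark~\ref{rmk:vacuum-vira} ensures that the absence of the vacuum axiom at $n=0$ is harmless, since the relevant total ancestor potentials differ only by an ${\bf s}$-independent factor. Applying Proposition~\ref{prop:anc-des-vira} in the reverse direction then converts each such ancestor vanishing into the generalized descendent constraint $\big(L_m+\frac{\delta_{m,2\virm}}{\hbar^2}\cdot\cdes\big)\cD({\bf t};\hbar)=0$ for $m\geq\virm$, which is exactly Conjecture~\ref{conj:des-vira}. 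Its genus-$0$ part is moreover already covered by Theorem~\ref{thm:vira-des-0} and is recovered uniformly here as the $\hbar^{-2}$-component of the same identity.

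The genuine difficulty has been absorbed into the results I am invoking, so Theorem~\ref{thm:vira-des-g} is a purely formal assembly. The substantive inputs are the identity $\tilde L^{\anc}_m=L^{\anc}_m$ of Proposition~\ref{prop:anc-des-vira}, whose proof demands the delicate term-by-term matching of the quadratic, linear and constant pieces after conjugation by $\widetilde W^{\tau}$ (in particular the cancellation producing $-\tfrac14\sum_{i+j=m}\str(\E^i(\mu+\tfrac12)\E^j(\mu-\tfrac12))$ and the appearance of $\cdes$ only at $m=2\virm$), together with the semisimple ancestor statement of Theorem~\ref{thm:anc-vira}(3). The only point in the assembly itself that demands care is the bookkeeping of index ranges: Proposition~\ref{prop:anc-des-vira} is available exactly on $m\geq\virm$, whereas the genus-$1$ reduction of Theorem~\ref{thm:anc-vira}(2) needs $m\geq0$, which is what forces the cutoff $m\geq\max\{\virm,0\}$ in part~(1) while leaving part~(2) valid on the full range $m\geq\virm$.
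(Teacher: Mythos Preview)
Your proposal is correct and follows exactly the paper's approach: the paper's proof is the single sentence ``This is a direct consequence of Theorem~\ref{thm:anc-vira} and Proposition~\ref{prop:anc-des-vira},'' and you have simply unpacked that sentence, correctly noting that the operator identity $\tilde L^{\anc}_m=L^{\anc}_m$ of Proposition~\ref{prop:anc-des-vira} transfers genus by genus and that the cutoff $m\geq\max\{\virm,0\}$ arises from intersecting the ranges of Proposition~\ref{prop:anc-des-vira} and Theorem~\ref{thm:anc-vira}(2).
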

\begin{proof}
This is a direct consequence of Theorem~\ref{thm:anc-vira} and Proposition~\ref{prop:anc-des-vira}.
\end{proof}

\section{Application I: Virasoro constraints of deformed negative $r$-spin theory}
\label{sec:negative-r-spin}
The deformed negative $r$-spin class $\Theta^{r,\epsilon}$, also called the $\epsilon$-deformed $\Theta$-class, was introduced by Norbury~\cite{Nor23} for $r=2$, $\epsilon=0$ and generalized by Chidambaram, Garcia-Failde and Giacchetto~\cite{CGG22}
for arbitrary $r{\geq 2}$ and $\epsilon$.
In~\cite{GJZ23}, Guo, Ji and Zhang introduced the geometric descendent~\footnote{In~\cite{CGG22}, the generating series of the class $\Theta^{r,\epsilon}$ is called the ``descendant potential", it is called the ancestor potential in this paper. When taking $\epsilon=0$, the descendent potential equals the ancestor potential.} 
invariants of the deformed negative $r$-spin class and proved a Kontsevich--Manin type formula that relates the descendent invariants and ancestor invariants of the deformed negative $r$-spin theory.
In this section, we deduce the Virasoro constraints of the deformed negative $r$-spin theory as an application of Theorem~\ref{thm:vira-des-intro}.

\subsection{Definition and descendent potentials} \label{sec:defrspin}
Let $r\geq 2$ be a fixed positive integer, $s$ be an integer (we will focus on $s=-1$ cases), and $0\leq a_1,\cdots, a_n \leq r-1$ be integers satisfying
$$
\tilde{ D}_{g,n}^{r,s}( { \vec{a}}) := (2g-2+n)\cdot s - \textstyle \sum_{i=1}^n a_i \in   r \mathbb Z.
$$
We introduce the proper moduli space of twisted stable $r$-spin curves  	
$$
\Mbar^{r,s}_{g,\vec{a}} = \{(C_g,p_1,\cdots,p_n,   L) :   \    L^{r} \cong \omega_{\log}^{s}(-\textstyle \sum_{i=1}^n  a_i [p_i]) \},
$$
where $\omega_{\log} = \omega( \sum_{i=1}^n  [p_i])$.  
We see $\deg \omega_{\log}^{s}(-\textstyle \sum_{i=1}^n  a_i [p_i])=\tilde  { D}_{g,n}^{r, s}( { \vec{a}})  $
and  the Riemann--Roch theorem gives
$$
{ D}_{g,n}^{r,s}( { \vec{a}}):=\dim H^1(C_g, { L}) -\dim H^0(C_g,{ L})   =  -\deg { L}+g-1 
=  -\tfrac{1}{r} \tilde  { D}_{g,n}^{r,s}( { \vec{a}}) +g-1
$$

Let $\mathcal C_{g,\vec{a}}^{r,s}$ be the universal curve of the moduli space $\Mbar^{r,s}_{g,\vec{a}}$ and let $\mathcal  L_{g,\vec{a}}^{r,s}$ be the universal bundle on it, we have the morphisms
$$
\mathcal C_{g,\vec{a}}^{r,s} \xlongrightarrow{\pi}  \Mbar_{g,\vec{a}}^{r,s} \xlongrightarrow{ p }  \Mbar_{g,n}.
$$
It is known that for $-(r-1)\leq s\leq -1$, $\mathrm R^0 \pi_* \mathcal L_{g,\vec{a}}^{r,s}$ vanishes.   
Following  \cite{CGG22,Ch08},  we consider the vector bundle of rank $ D_{g,n}^{r,s}( { \vec{a}})$ :
\beq
\mathcal V^{r,s}_{g,\vec{a}} :=  \mathrm R^1 \pi_* \mathcal L_{g,\vec{a}}^{r,s}
\eeq
and introduce the following twisted class
$$
c_{\rm top}( \mathcal V^{r,s}_{g,\vec{a}} )  \in H^{  D_{g,n}^{r,s}( { \vec{a}})}(\M_{g,\vec{a}}^{r,s}) .
$$
{where $c_{\rm top}( \mathcal V^{r,s}_{g,\vec{a}} ) $ means the top Chern class of $\mathcal V^{r,s}_{g,\vec{a}} $.}

Now we focus on $s=-1$ cases.
Let $\Frob=\sspan_{\mathbb Q}(\phi_1\cdots,\phi_{r-1})$ be the state space, where $\{\phi_{a}\}_{a=1}^{r-1}$ are vectors associated to the integers $1\leq a\leq r-1$ (we slightly shift the sub-index compared with the general setting since $\phi_0$ has special meaning here). Let $\{\psi_i\}_{i=1}^n$ be the psi-classes, denoting the first Chern class of the universal cotangent line bundle over $\M_{g,\vec{a}}^{r,-1}$ with respect to the $i$-th marked point.
The descendent invariants / correlators for the $\epsilon$-deformed negative $r$-spin theory are defined by
$$
\<  \phi_{a_1}\psi_1^{k_1},\cdots, \phi_{a_n}\psi_n^{k_n}\>_{g,n}^{r,\epsilon}
:=\sum_{m\geq 0}\frac{\epsilon^m}{m!}\frac{(-1)^{D^{r,-1}_{g,n+m}(\vec{a}+\vec{0}_{m})}}{r^{g-1}}
\int_{\M_{g,\vec{a}+\vec{0}_{m}}^{r,-1}}  c_{\rm top}( \mathcal V^{r,-1}_{g,\vec{a}+\vec{0}_{m}} )\prod_i  \psi_i^{k_i},
$$
where $\vec{a}+\vec{0}_m=(a_1,\cdots,a_n,0,\cdots,0)$.
We define the genus-$g$ descendent potential $\cF_g^{r,\epsilon}(\mathbf t)$ and total descendent potential $\cD^{r,\epsilon}(\mathbf t;\hbar) $ for the deformed negative $r$-spin theory as follows:
$$\textstyle
\cF_g^{r,\epsilon}(\mathbf t) :=  \sum_{n\geq 0} \frac{1}{n!} \<\mathbf t(\psi_1),\cdots,\mathbf t(\psi_n)\>_{g,n}^{r,\epsilon},\qquad
\cD^{r,\epsilon }( \mathbf t;\hbar) = e^{\sum_{g \geq 0} \hbar^{2g-2} \cF^{r,\epsilon}_g(\mathbf t)} ,
$$
where $\mathbf t(\psi) = \textstyle \sum_{k\geq 0,1\leq a \leq r-1}  t_k^a\  \phi_{a} \psi^k \in \Frob[[\psi]]$.

\subsection{CohFT and ancestor potentials}
By considering the push-forward of top Chern class of $ \mathcal V^{r,-1}_{g,\vec{a}}$ along the forgetful maps 
$\Mbar_{g,\vec{a}+\vec{0}_m}^{r,-1}\xlongrightarrow{p}\Mbar_{g,n+m}\xlongrightarrow{\pi^m}\Mbar_{g,n}$,
we get the so called $\epsilon$-deformed $\Theta$-class as follows
\beq\label{def:-rspinclass}
\Theta^{r,\epsilon}_{g,n}(\phi_{a_1},\cdots,\phi_{a_n}) := \frac{1}{r^{g-1}}\sum_{m\geq 0}
\frac{\epsilon^m}{m!}\pi^m_{*}p_*  \big((-1)^{\deg}\cdot c_{\rm top}\big( \mathcal V^{r,-1}_{g,\vec{a}+\vec{0}_{m}} \big)\big).
\eeq
It is proved in~\cite{CGG22} that the collection $\{\Theta^{r,\epsilon}_{g,n}\}$ satisfies the axioms of an $(r-1)$-dimensional CohFT on $\Frob$ with symmetric bilinear form $\eta(\phi_a,\phi_b)=\delta_{a+b,r}$.
Moreover, this CohFT is semisimple (for $\epsilon\ne0$) and homogeneous with respect to the Euler vector field
$$\textstyle
E=(r-1)\phi_{r-1}-\sum_{a=1}^{r-1}\frac{a}{r}\tau^a\phi_a.
$$
The conformal dimension $\delta$ of $\Theta^{r,\epsilon}$ is $3$. 
The Euler vector field defines the operator $\mu$ which has formula $\mu(\phi_a)=(\frac{a}{r}-\frac{1}{2})\phi_a$, $a=1,\cdots,r-1$.

By using the Chiodo formula~\cite{Ch08}, the quantum product $*$ of $\Theta^{r,\epsilon}$ has the following explicit formula (see~\cite{CGG22,GJZ23} for more details):
if $a+b=(r-1)m+c$ where $0\leq c\leq r-2$, then
$$
\phi_a*\phi_b=(\tfrac{\epsilon}{r})^{m}\phi_{c+1}.
$$
Especially, notice that $E|_{\tau=0}=(r-1)\phi_{r-1}$, we have
\beq\label{eqn:qpEuler-rspin}
E|_{\tau=0}*\phi_a=\begin{cases}
(r-1)\cdot \frac{\epsilon}{r}\cdot\phi_{a+1} & a=1,\cdots,r-2\\
(r-1)\cdot \frac{\epsilon^2}{r^2} \cdot \phi_1 & a=r-1
\end{cases}.
\eeq
By using the homogeneity condition~\eqref{eqn:hom-vacuum-2}, one can compute the vacuum vector at $\tau=0$:
$$
\vac^{r,\epsilon}(z)=r\sum_{a=1}^{r-2}\phi_a\sum_{m\geq 1}\frac{(mr-1-a)!}{(m-1)!}\frac{(-z)^{m(r-1)-a-1}}{\epsilon^{mr-a-1}}
+r\phi_{r-1}\sum_{m\geq 1}\frac{(mr)!}{m!}\frac{(-z)^{m(r-1)-1}}{\epsilon^{mr}}.
$$
\begin{remark}
For the CohFT $\Theta^{r,\epsilon}$, the vacuum axiom~\eqref{eqn:vacuum-axiom} fails for $n=0$, however, the axiom always holds for $n\geq 1$ as we have explained in Remark~\ref{rmk:semicohft-vacuum} and this will not affect the validity of Virasoro constraints as we have explained in Remark~\ref{rmk:vacuum-vira}.
\end{remark}

Following the general setting, we define the genus-$g$ ancestor potential $\bar\cF_g^{r,\epsilon}(\mathbf s)$ and total ancestor potential $\cA^{r,\epsilon}(\mathbf s;\hbar) $ for the deformed negative $r$-spin theory as follows:
$$\textstyle
\bar\cF_g^{r,\epsilon}(\mathbf s) :=  \sum_{n\geq 0} \frac{1}{n!} \int_{\Mbar_{g,n}}\Theta^{r,\epsilon}_{g,n}({\bf s}(\psi_1),\cdots,{\bf s}(\psi_n)),\qquad
\cA^{r,\epsilon}(\mathbf s;\hbar) = e^{\sum_{g \geq 0} \hbar^{2g-2} \bar\cF^{r,\epsilon}_g(\mathbf s)} ,
$$
where $\mathbf s(\psi) = \textstyle \sum_{k\geq 0,1\leq a \leq r-1}  s_k^a\  \phi_{a} \psi^k \in \Frob[[\psi]]$. 

\subsection{$S$-matrix, $\dvac$-vector and Kontsevich--Manin type formula}
It was proved in~\cite{GJZ23} that the total descendent potential $\cD^{r,\epsilon}({\bf t};\hbar)$ and total ancestor potential $\cA^{r,\epsilon}({\bf s};\hbar)$ are related by the following Kontsevich--Manin type formula:
$$
\cD^{r,\epsilon}({\bf t};\hbar)=e^{\frac{1}{2\hbar}\widetilde W^{r,\epsilon}({\bf t},{\bf t})}\cA^{r,\epsilon}({\bf s(t)};\hbar),
$$
where $\widetilde W^{r,\epsilon} $ is given by equation~\eqref{def:tildeW} at $\tau=0$ with $\Phi(0)=0$.
Moreover, the $S$-matrix $S^{r,\epsilon}(z)$ and the $J$-function (which has form $J^{r,\epsilon}(-z)=-r\phi_{r-1}+\sum_{k\geq 0}J^{r,\epsilon}_{k,a}\phi^a(-z)^{-k-1}$) are explicitly computed in~\cite{GJZ23}, the formulae are
$$
(S^{r,\epsilon}_k)_{a}^b=\left\{\begin{array}{cc}
	(-1)^{k+m}\frac{\Gamma(\frac{a}{r})}{\Gamma(\frac{a}{r}+k-m)}\frac{\epsilon^{m}}{m!},
	& m=\frac{rk+a-b}{r-1}\in \mathbb Z_{+}\\
	0 , & {\rm otherwise}
\end{array}\right. ,
$$
and
$$
J^{r,\epsilon}_{k,a}
=\left\{\begin{array}{cc}
	(-1)^{k+m}\frac{\Gamma(\frac{a}{r})}{\Gamma(\frac{a}{r}+k-m)}\frac{\epsilon^{m+2}}{(m+2)!} & m=\frac{rk+a}{r-1}-1\in\mathbb Z_{+}\\
	0 & {\text{otherwise}}
\end{array}\right..
$$
As noted in Remark~\ref{rmk:J-nu}, given $S$-matrix and $J$-function, one get  the $\dvac$-vector $\dvac^{r,\epsilon}(z)$ by $\dvac^{r,\epsilon}(z)=-z^{-1}S^{r,\epsilon}(z)J^{r,\epsilon}(-z)$. 
Precisely,
$$
\nu^{r,\epsilon}(z)=\sum_{a=1}^{r-2}\phi_a\sum_{m\geq 0}\frac{m!}{(mr+a)!}\frac{\epsilon^{mr+a+1}}{z^{m(r-1)+a+1}}
+r\phi_{r-1}\sum_{m\geq 0}\frac{m!}{(rm)!}\frac{\epsilon^{rm}}{z^{m(r-1)+1}}.
$$
It is straightforward to check the $S$-matrix (resp. the $\dvac$-vector) satisfies the homogeneity condition~\eqref{eqn:hom-S-2} (resp. ~\eqref{eqn:hom-vacuum-des-2}), where $\rho=0$.

\subsection{Virasoro constraints}
We introduce and prove the Virasoro constraints for the deformed negative $r$-spin theory.
For the total ancestor potential $\cA^{r,\epsilon}({\bf s};\hbar)$, since the quantum product of $E$ at the original point $\tau=0$, the grading operator $\mu$ and the vacuum vector $\vac^{r,\epsilon}(z)$ are all explicitly computed, one gets immediately the ancestor Virasoro constraints of $\cA^{r,\epsilon}({\bf s};\hbar)$ by Theorem~\ref{thm:vira-anc-intro}.
Now we consider the descendent Virasoro constraints, which are notable for their concise formulae, as presented in the following Proposition:

\begin{proposition} 
\label{prop:vira-negative-rspin}
	Let $\cD^{r,\epsilon}(\mathbf t;\hbar)$ be the total descendent potential of the deformed negative $r$-spin theory defined in  {\normalfont\S \ref{sec:defrspin}}, then it satisfies the following Virasoro constraints
	$$
	L^{r,\epsilon}_{m}\cD^{r,\epsilon}(\mathbf t;\hbar)=0, \qquad  m\geq 0,
	$$
	where operators $L^{r,\epsilon}_{m}$, $m\geq 0$, satisfy the commutation relation 
$[L^{r,\epsilon}_{m},L^{r,\epsilon}_{n}]=(m-n)L^{r,\epsilon}_{m+n}$ for $m,n\geq 0$ 
and have the following explicit formulae:
\begin{align*}
L^{r,\epsilon}_m=&\, \delta_{m,0}\frac{(r-1)\epsilon^2}{2\, \hbar^2}+\delta_{m,0}\frac{r^2-1}{24\, r}
+\sum_{k\geq 0}\sum_{a=1}^{r-1}\frac{\Gamma(m+1+k+\frac{a}{r})}{\Gamma(k+\frac{a}{r})}\tilde t_k^{a}\frac{\pd }{\pd t_{k+m}^{a}}\\
&\, +\frac{\hbar^2}{2}\sum_{k=1}^{m}(-1)^k\sum_{a=1}^{r-1}
\frac{\Gamma(m+1-k+\frac{a}{r})}{\Gamma(-k+\frac{a}{r})}
\frac{\pd^2 }{\pd t_{k-1}^{r-a}\pd t_{m-k}^{a}}.
\end{align*}
Here the shift on the time variables is given by  $\tilde t_k^a=t_{k}^{a}-\delta_{k,0}\delta_{a,r-1}\cdot r$.
\end{proposition}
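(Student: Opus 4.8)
The plan is to obtain this statement as a direct application of Theorem~\ref{thm:vira-des-intro}(3) to the CohFT $\Theta^{r,\epsilon}$, followed by an explicit specialization of the abstract Virasoro operators $L_m$ of equation~\eqref{def:Lm} to this theory. First I would record that all hypotheses of Theorem~\ref{thm:vira-des-intro} are already in place from the preceding subsections: $\Theta^{r,\epsilon}$ is a semisimple (for $\epsilon\neq 0$) homogeneous CohFT with conformal dimension $\delta=3$ and grading operator $\mu(\phi_a)=(\tfrac{a}{r}-\tfrac12)\phi_a$, and the explicit $S$-matrix and $\dvac$-vector satisfy the homogeneity conditions with $\rho(z)=0$. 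It therefore remains to (i) pin down the Virasoro-index and Virasoro-constant, (ii) invoke the theorem, and (iii) match the resulting operators to the stated closed formulas.

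For step (i): since $\tfrac{\delta-3}{2}=0\in\mathbb Z_{\geq 0}$ while $\dvac^{r,\epsilon}(z)$ visibly lies in $z^{-1}\Frob[[z^{-1}]]$ rather than in $\Frob[z]$, case (2) of Definition~\ref{defnofm0} applies and gives $\virm=0$. The constant $\cdes$ is then the coefficient of $z^{1-\delta}=z^{-2}$ in $\tfrac12\,\eta(E,\dvac^{r,\epsilon}(z))$ at $\tau=0$; using $E|_{\tau=0}=(r-1)\phi_{r-1}$ together with $\eta(\phi_{r-1},\phi_a)=\delta_{a,1}$, only the $\phi_1$-component of $\dvac^{r,\epsilon}(z)$ contributes, and extracting the $z^{-2}$ coefficient yields $\cdes=\tfrac{(r-1)\epsilon^2}{2}$. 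By Theorem~\ref{thm:vira-des-intro}(3) we then have $\big(L_m+\tfrac{\delta_{m,0}}{\hbar^2}\cdes\big)\cD^{r,\epsilon}(\mathbf t;\hbar)=0$ for all $m\geq 0$, and I would define $L^{r,\epsilon}_m$ to be precisely this operator (note $2\virm=0$, so the constant appears only at $m=0$). The string ($L_{-1}$) constraint is excluded exactly because $\dvac^{r,\epsilon}(z)\notin\Frob[z]$, which is why the range is $m\geq 0$.

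The computational heart, step (iii), is to evaluate $L_m$ when $\rho=0$. Then the quadratic term $\tfrac{1}{2\hbar^2}\eta(\rho^{m+1}\tilde t_0,\tilde t_0)$ vanishes for $m\geq 0$, and $D_{\rho,z}$ reduces to $(\mu+\tfrac32)z+z^2\partial_z$, acting diagonally in the flat basis via $D_{\rho,z}\phi_a z^k=(\tfrac{a}{r}+1+k)\phi_a z^{k+1}$. Iterating telescopes the structure constants into Gamma ratios, $D_{\rho,z}^{m+1}\phi_a z^k=\tfrac{\Gamma(\frac{a}{r}+k+m+2)}{\Gamma(\frac{a}{r}+k+1)}\phi_a z^{k+m+1}$, so that $(C_m)_{k-1,a}^{l,b}$ is supported on $b=a,\ l=k+m$ and $(C_m)_{-k-2,b}^{l,c}$ on $c=b,\ l=m-k-1$. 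Substituting these, using $\eta^{ab}=\delta_{a+b,r}$ in the second-order term and relabeling $a\mapsto r-a$, produces the two displayed sums; the constant $\tfrac{r^2-1}{24r}$ at $m=0$ comes from $-\tfrac14\str(\mu^2-\tfrac14)=-\tfrac14\sum_{a=1}^{r-1}\big[(\tfrac12-\tfrac{a}{r})^2-\tfrac14\big]=\tfrac{r^2-1}{24r}$, and the dilaton shift $\tilde t_k^a=t_k^a-\delta_{k,0}\delta_{a,r-1}\,r$ is read off from the polynomial part $-r\phi_{r-1}$ of $J^{r,\epsilon}(-z)$, i.e.\ $\tau_0=r\phi_{r-1}$ and $\dil(z)=0$. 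The commutation relation is inherited from the general identity $[L_m,L_n]=(m-n)L_{m+n}$, since the extra constant $\cdes/\hbar^2$ sits only in $L^{r,\epsilon}_0$ and is central. I expect the only delicate point to be the index bookkeeping in the second-order term --- tracking the $\eta^{ab}$ shift and the reflection $a\mapsto r-a$ so that the Gamma ratios land in the stated form --- rather than anything conceptual, since Theorem~\ref{thm:vira-des-intro} supplies all the substance.
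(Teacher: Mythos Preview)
Your proposal is correct and follows essentially the same route as the paper's own proof: verify the hypotheses of Theorem~\ref{thm:vira-des-intro}(3), compute $\virm=0$ and $\cdes=\tfrac{(r-1)\epsilon^2}{2}$ from Definition~\ref{defnofm0}, and then specialize the abstract operator $L_m$ using $\rho=0$ so that $D_{\rho,z}$ acts diagonally with the displayed Gamma ratios, the constant $-\tfrac14\str(\mu^2-\tfrac14)=\tfrac{r^2-1}{24r}$, and the dilaton shift read from $J^{r,\epsilon}(-z)$. The only additions beyond the paper are your explicit remarks on the dilaton shift and the commutation relation, which are harmless elaborations.
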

\begin{proof}
We first mention that $L^{r,\epsilon}_{m}$ is just the deformed negative $r$-spin case of $L_m+\frac{\delta_{m,0}}{\hbar^2}\cdes$ in Theorem~\ref{thm:vira-des-intro}.
We have seen that the deformed negative r-spin class is homogeneous and semisimple.
Moreover, the $\dvac$-vector is not a polynomial in $z$ and the conformal dimension of the CohFT is $3$, thus the Virasoro-index $\virm=0$.
By Theorem~\ref{thm:vira-des-intro}, the $L_m$-constraint holds for $m\geq 0$.
The explicit formula for $L^{r,\epsilon}_{m}$ follows from 
the following computations: firstly,
$$\textstyle
-\frac{1}{4}\tr(\mu^2-\frac{1}{4})=\frac{1}{r^2}\sum_{a=1}^{r-1}a(r-a)=\frac{r^2-1}{24\, r},
$$
secondly, for $k\in \mathbb Z$, $a=1,\cdots,r-1$, 
$$
D_{\rho,z}^{m+1}\phi_az^k=\tfrac{\Gamma(m+2+k+\frac{a}{r})}{\Gamma(k+1+\frac{a}{r})}\phi_az^{m+1+k},
$$
(recall $\rho=0$), and lastly, $\cdes=\frac{1}{2}[z^{-2}]\eta(E|_{\tau=0},\dvac^{r,\epsilon}(z))=\frac{(r-1)\epsilon^2}{2}$.
\end{proof}
\begin{remark}
By taking  $r=2$, this result recovers the Virasoro constraints of the generalized Br\'ezin-Gross-Witten partition function with parameter $N$ by taking $\epsilon=N\hbar / \sqrt{-2}$, see, e.g.,~\cite{Ale18}
\end{remark}

\section{Application II: CohFT of extended Grothendieck's dessins d'enfants}
\label{sec:vira-dessin}
A dessin d'enfant refers to a bi-colored ribbon graph embedded on a smooth Riemann surface,  Grothendieck~\cite{Gro97} established a one-to-one correspondence between dessins d’enfants and isomorphism classes of Belyi maps~\cite{Bel80},
and thus the theory is now known as Grothendieck’s dessins d’enfants.
Following~\cite{KZ15}, we denote by $N_{k,l}(\lambda_1,\cdots,\lambda_m)$ the counting invariants of dessins d'enfants, where   $k,l,\lambda_1,\cdots,\lambda_m\in\mathbb Z_{+}$, and denote their generating function~\footnote{Here a parameter $s$ used in~\cite{KZ15} is taken to be ${1}$, it can be recovered by taking $p_k\to s^k p_k$.
Also, we add a parameter $\hbar$ whose power traces the genus of the Riemann surface: $2g-2=\sum_{i}\lambda_i-(k+l+m)$.} by
\begin{equation}  \label{FGDD}
F^{\Gdd}(u,v,{\bf p};\hbar)=\sum_{k,l,m\geq 1}\frac{1}{m!}\sum_{\lambda\in\mathbb Z_{+}^m} \hbar^{\sum_{i}\lambda_i-(k+l+m)}N_{k,l}(\lambda)u^kv^lp_{\lambda_1}\cdots p_{\lambda_m}.
\end{equation}
Let $Z^{\Gdd}(u,v,{\bf p};\hbar)=\exp(F^{\Gdd}(u,v,{\bf p};\hbar))$, then according to Kazarian and Zograf~\cite{KZ15}, $Z^{\Gdd}(u,v,{\bf p};\hbar)$ satisfies the following Virasoro constraints:
$$
L^{\Gdd}_mZ^{\Gdd}(u,v,{\bf p};\hbar)=0,\qquad m\geq 0,
$$
where the operators $L^{\Gdd}_m$, $m\geq 0$, are given by
\beq\label{def:LKZm}
 L^{\Gdd}_m= \delta_{m,0}\cdot\frac{uv}{\hbar^2} 
+\sum_{k\geq 1} (p_{k}-\delta_{k,1})(k+m)\frac{\pd }{\pd p_{k+m}}
+(u+v)m\frac{\pd}{\pd p_{m}}
+\hbar^2\sum_{k=1}^{m-1}k(m-k)\frac{\pd^2}{\pd p_k\pd p_{m-k}}.
\eeq
Conversely, $Z^{\Gdd}(u,v,{\bf p};\hbar)$ is uniquely determined by the Virasoro constraints and the initial condition
$Z^{\Gdd}(u,v,{\bf 0};\hbar)=1$.
We refer readers to~\cite{Gro97, KZ15, Zhou19} for details of definitions and various applications of the theory.

 In this section, we introduce a two-dimensional semisimple homogeneous CohFT $\Omega^{\eGdd}$ calibrated by a certain $S$-matrix and $\dvac$-vector.
We will prove that the descendent correlators of $\Omega^{\eGdd}$ extend the counting invariants of dessins d'enfants and thus we call $\Omega^{\eGdd}$ the CohFT of extended Grothendieck’s dessins d’enfants.

\subsection{Construction of CohFT $\Omega^{\eGdd}$} \label{CohFTeGdd}
Consider a generalized Frobenius structure $(\Frob,\eta,*_{\tau})$, where $\Frob$ is a two-dimensional $\mathbb C$ vector space spanned by $\{\phi_0,\phi_1\}$,
$\eta$ is a symmetric bilinear form on $\Frob$ defined by $\eta(\phi_a,\phi_b)=\delta_{a+b,1}$,
$\tau=\tau^0\phi_0+\tau^1\phi_1$ is a coordinate system of $\Frob$,
and the quantum product $*_{\tau}$ is determined by the potential
$$\textstyle
\Phi(\tau)=(\tau^1+\epsilon_1)(\tau^1+\epsilon_2)\log(\frac{1}{1-\tau^0})
+\sum_{i=1}^{2}\frac{(\tau^1+\epsilon_i)^2}{2}\big(\log(1+\frac{\tau^1}{\epsilon_i})-\frac{3}{2}\big).
$$
Here $\epsilon_{1,2}$ are two parameters. 
It is straightforward to see
\begin{align*}
\phi_0*_{\tau}\phi_0=&\,\textstyle \frac{2\tau^1+\epsilon_1+\epsilon_2}{(1-\tau^0)^2} \phi_0 +\frac{2(\tau^1+\epsilon_1)(\tau^1+\epsilon_2)}{(1-\tau^0)^3}\phi_1, \\
\phi_0*_{\tau}\phi_1=&\,\textstyle  \frac{2}{1-\tau^0}\phi_0+\frac{2\tau^1+\epsilon_1+\epsilon_2}{(1-\tau^0)^2}\phi_1,\\
\phi_1*_{\tau}\phi_1=&\,\textstyle \frac{2\tau^1+\epsilon_1+\epsilon_2}{(\tau^1+\epsilon_1)(\tau^1+\epsilon_2)}\phi_0+\frac{2}{1-\tau^0}\phi_1,
\end{align*}
and it follows that $*_{\tau}$ is commutative and associative.
By these explicit formulae, one can see the vector field of the unities of the quantum product $*_{\tau}$ is given by 
$$\textstyle
{\bf 1}=\frac{2(\tilde\tau^0)^2\tilde \tau^1}{(\epsilon_1-\epsilon_2)^2}\phi_0
+\frac{2\tilde\tau^0(\tau^1+\epsilon_1)(\tau^1+\epsilon_2)}{(\epsilon_1-\epsilon_2)^2}\phi_1,
$$
where $\tilde\tau^0=\tau^0-1$, $\tilde\tau^1=\tau^1+\frac{\epsilon_1+\epsilon_2}{2}$.
Clearly, ${\bf 1}$ is not flat.

By checking equations~\eqref{eqn:euler-qp} and~\eqref{eqn:euler-eta}, we can see this generalized Frobenius manifold is homogeneous with the Euler vector field
$$
E=(1-\tau^0)\phi_0,
$$
and its conformal dimension $\delta=3$.
Given the Euler vector field, one has the grading operator $\mu$ by~\eqref{eqn:mu-E}, precisely,
$\mu(\phi_a)=(\frac{1}{2}-a)\phi_a$, $a=0,1$.

Furthermore, this generalized Frobenius manifold is semisimple and the canonical coordinates are given by
$u^{1,2}=u^{\pm}=\frac{1}{1-\tau^0}\big(2\tau^1+\epsilon_1+\epsilon_2\pm2\sqrt{(\tau^1+\epsilon_1)(\tau^1+\epsilon_2)}\big)$.
Let $\widetilde\Psi^\alpha_i=\frac{\pd u^\alpha}{\pd \tau^i}$, then we have the canonical basis $e_\alpha=(\widetilde\Psi^{-1})^i_\alpha\phi_i$
and the normalized canonical basis $\bar e_\alpha=\Delta_{\alpha}^{\frac{1}{2}}e_\alpha$, where $\Delta_{\alpha}=\eta(e_\alpha,e_\alpha)^{-1}$.
We introduce the $\Psi$-matrix: $\Psi:=\Delta^{-\frac{1}{2}}\widetilde\Psi$.

Given the semisimple homogeneous generalized Frobenius manifold, one constructs a semisimple homogeneous CohFT by the Givental--Teleman reconstruction theorem~\cite{Giv01a,Tel12}.
Precisely, the $R$-matrix $R(z)$ is uniquely determined by the homogeneity equation~\eqref{eqn:hom-R}, we have the following formula:
$$
R^{\tau}(z)=\sum_{m\geq 0}\frac{\prod_{i=0}^{m-1}(4i^2-1)}{m! 4^m (u^2-u^1)^m}\cdot 
\left(\begin{array}{cc}
	1 & 2m\sqrt{-1}\\
	(-1)^{m-1}2m\sqrt{-1} & (-1)^m
\end{array}\right) 
\cdot z^m.
$$
Similarly, the vacuum vector $\vac^{\tau}(z)$ is uniquely determined by the formula~\eqref{eqn:formula-vacuum} or by the homogeneity condition~\eqref{eqn:hom-vacuum}, and we have the following formula: 
$$
\vac^{\tau}(z)=\sum_{k\geq 0}\sum_{a=0}^1\sum_{i=0}^{\frac{k+1+a}{2}}(-1)^i\frac{(2k+1-2i)!!(k+1+a)!}{2^{i+a}i!(k+1+a-2i)!}
\frac{(\tilde\tau^0)^{k+2-a}(2\tilde\tau^1)^{k+1+a-2i}}{(\epsilon_1-\epsilon_2)^{2k+2-2i}}\phi_a z^k.
$$
 By Theorem~\ref{thm:reconstruction}, this defines a shifted CohFT, and we denote it by $\Omega^{\eGdd, \tau}$.

Given the CohFT, one defines the ancestor correlator $\<-\>^{\eGdd,\tau}_{g,n}$, the genus-$g$ ancestor potential $\bar \cF_g^{\eGdd,\tau}({\bf s})$ and the total ancestor potential $\cA^{\eGdd, \tau}({\bf s};\hbar)$ by
equations~\eqref{def:bracket-anc}, \eqref{def:anc-Fg-choft} and~\eqref{def:barF-A} respectively.
One can also get $\cA^{\eGdd, \tau}({\bf s};\hbar)$ directly by formula~\eqref{eqn:A-KW}.

\subsection{$S$- and $\dvac$-calibration} \label{SeGdd}
Now we define the descendent potentials by choosing an $S$-matrix and a $\dvac$-vector.
We first compute the $S$-matrix by solving the QDE~\eqref{eqn:QDE-S} and by choosing the integration constants at each step
(at the first step, we require $(S^{\tau}_1)^b_a=\frac{\pd^2 \Phi(\tau)}{\pd \tau^a\pd\tau^{1-b}}$, $a, b=0,1$).
We choose the $S$-matrix  to be the following one:
$$
S^{\tau}(z)=\sum_{n\geqslant 0}\left(
\begin{array}{cc}
	\sum\limits_{i+2j=n}\frac{a_0(\tau)^ia_1(\tau)^j}{i!j!j!} & 
	\sum\limits_{i+2j=n-1}\frac{a_0(\tau)^ia_1(\tau)^j}{i!j!j!}\big(\log(\frac{a_1(\tau)}{\epsilon_1\epsilon_2})-2H_j\big)\\
	\sum\limits_{i+2j=n+1}\frac{a_0(\tau)^ia_1(\tau)^j}{i!j!(j-1)!}& 
	\sum\limits_{i+2j=n}\frac{a_0(\tau)^ia_1(\tau)^j}{i!j!j!}(1+j\log(\frac{a_1(\tau)}{\epsilon_1\epsilon_2})-2j H_j) 
\end{array}
\right)\cdot z^{-n},
$$
where $a_0(\tau)=\frac{2\tau^1+\epsilon_1+\epsilon_2}{1-\tau^0}$ and $a_1(\tau)=\frac{(\tau^1+\epsilon_1)(\tau^1+\epsilon_2)}{(1-\tau^0)^2}$.
Here the detailed process of the computations is omitted because it is rather complex to display.
However, one can check that this $S^{\tau}(z)$ satisfy the QDE~\eqref{eqn:QDE-S} and the symplectic condition~\eqref{eqn:S-symp} straightforwardly.
Moreover, this $S$-matrix satisfies the homogeneity condition~\eqref{eqn:hom-S}. Precisely, we have for $n\geq 1$,
$$
(n+\mu_a-\mu_b)(S^{\tau}_{n})^b_a=(\E S^{\tau}_{n-1}-S^{\tau}_{n-1}\rho)^b_a,
$$
where $\rho\in\End(\Frob)$ is given by $\rho\phi_0=0$, $\rho\phi_1=2\phi_0$.

To compute the $\dvac$-vector, we assume $\dvac^{\tau}(z)$ has form $\dvac^{\tau}_{1}z^{-1}+\dvac^{\tau}_{2}z^{-2}+\cdots$, then by the QDE~\eqref{eqn:QDE-vacuum-des}, we have firstly $\pd_{\tau^a}\dvac^{\tau}_1 =-\phi_a$.
Take $\dvac^{\tau}_1=-\tilde \tau$, we see the $J$-function defined by~\eqref{def:J-function-small} has form
$J^{\tau}(-z)=\tilde\tau+(\dvac^\tau_2+S^\tau_1\tilde\tau)\cdot(-z)^{-1}+\cdots$.
By requiring $\dvac^\tau_2+S^\tau_1\tilde\tau=\frac{\pd \Phi(\tau)}{\pd \tau^a}\phi^a$, we get 
$\dvac^{\tau}_{2}=\frac{1}{2}(\epsilon_1-\epsilon_2)\cdot \big(\log(1+\frac{\tau^1}{\epsilon_1})-\log(1+\frac{\tau^1}{\epsilon_2})\big)\phi_0-\tfrac{(\epsilon_1-\epsilon_2)^2}{2(1-\tau^0)}\phi_1$.
For $k\geq 3$, $\dvac^{\tau}_{k}$ is determined by the homogeneity condition~\eqref{eqn:hom-vacuum-des-2}
$$
\E\dvac^{\tau}_{k-1}=(k-\mu-\tfrac{\delta}{2})\dvac^{\tau}_{k}.
$$
Notice that $k-\mu_a-\tfrac{\delta}{2}=k-2+a>0$ for $k\geq 3$ and $a=0,1$, we get $\dvac^{\tau}_{k}$ from $\dvac^{\tau}_{k-1}$ for $k\geq 3$.

Given the $S$-matrix and the $\dvac$-vector, we have the $J$-function by equation~\eqref{def:J-function-small}.
The total descendent potential $\cD^{\eGdd}(\epsilon_1,\epsilon_2, {\bf t};\hbar)$ is then defined by~\eqref{eqn:des-anc}.

\subsection{Virasoro constraints and their application}
Similarly as the deformed negative $r$-spin theory, by the properties of $\Omega^{\eGdd,\tau}$ and by directly applying the third part of Theorem~\ref{thm:vira-anc-intro},
one gets the ancestor Virasoro constraints for $\cA^{\eGdd,\tau}$ immediately.
Now we consider the descendent Virasoro constraints and deduce explicit formulae for Virasoro operators.

\begin{proposition}
	\label{prop:vira-dessin}
	The total descendent potential $\cD^{\eGdd}(\epsilon_1,\epsilon_2, {\bf t};\hbar)$ satisfies the following Virasoro constraints:
	$$
	L^{\eGdd}_m\cD^{\eGdd}({\bf t};\hbar)=0,\qquad m\geq 0,
	$$
	where $L^{\eGdd}_{m}$, $m\geq 0$, satisfy the commutation relation 
$[L^{\eGdd}_{m},L^{\eGdd}_{n}]=(m-n)L^{\eGdd}_{m+n}$, 
and have the following explicit formulae:
	\begin{align*}
		L^{\eGdd}_m=&\, \delta_{m,0}\cdot\frac{(t_0^1+\epsilon_1)(t_0^1+\epsilon_2)}{\hbar^2} 
+\sum_{k\geq 0}\frac{(k+m+1)!}{k!}\tilde t_{k}^{0}\frac{\pd }{\pd t_{k+m}^0}
+\sum_{k\geq 1}\frac{(k+m)!}{(k-1)!} t_{k}^{1}\frac{\pd }{\pd t_{k+m}^1}\\
&\, +2m!\tilde t_0^1\frac{\pd}{\pd t_{m-1}^0}
+2\sum_{k\geq 1}\sum_{i=0}^{m}\frac{(k+m)!}{(k-1)!(k+i)}t_k^1\frac{\pd}{\pd t_{k+m-1}^0}\\
&\, +\hbar^2\sum_{k=1}^{m-1}k!(m-k)!\frac{\pd^2}{\pd t_{k-1}^0\pd t_{m-k-1}^0}.
	\end{align*}
Here the shifts on the time variables are given by $\tilde t_k^a=t_k^a-\delta_{k,0}\delta_{a,0}+\delta_{k,0}\delta_{a,1}\frac{\epsilon_1+\epsilon_2}{2}$.
\end{proposition}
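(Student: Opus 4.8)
The plan is to separate the \emph{qualitative} statement (that the Virasoro constraints hold) from the \emph{explicit} identification of the operators $L^{\eGdd}_m$, exactly as in the proof of Proposition~\ref{prop:vira-negative-rspin}. For the qualitative part I would invoke Theorem~\ref{thm:vira-des-intro}(3). In \S\ref{CohFTeGdd} and \S\ref{SeGdd} it has already been verified that $\Omega^{\eGdd}$ is a semisimple homogeneous CohFT of conformal dimension $\delta=3$, calibrated by a homogeneous $S$-matrix (satisfying~\eqref{eqn:hom-S} with the stated $\rho$) and a homogeneous $\dvac$-vector. Since $\dvac^{\tau}(z)$ begins at order $z^{-1}$ and is visibly not a polynomial, Definition~\ref{defnofm0}(2) applies with $\tfrac{\delta-3}{2}=0\in\mathbb{Z}_{\geq 0}$, so the Virasoro-index is $\virm=0$. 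Theorem~\ref{thm:vira-des-intro}(3) then gives $\big(L_m+\tfrac{\delta_{m,0}}{\hbar^2}\cdes\big)\cD^{\eGdd}=0$ for all $m\geq 0$, and the commutation relation is immediate from the general identity $[L_m,L_n]=(m-n)L_{m+n}$. It remains to match $L^{\eGdd}_m$ with $L_m+\tfrac{\delta_{m,0}}{\hbar^2}\cdes$.

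For the explicit form I would use the quantization expression~\eqref{eqn:Lm-des} of $L_m$ and substitute the data of $\Omega^{\eGdd}$: the diagonal grading $\mu(\phi_a)=(\tfrac12-a)\phi_a$, the nilpotent $\rho$ with $\rho\phi_0=0$, $\rho\phi_1=2\phi_0$, the dual basis $\phi^0=\phi_1$, $\phi^1=\phi_0$, and the shift $\tilde t_k^a=t_k^a-\delta_{k,0}\delta_{a,0}+\delta_{k,0}\delta_{a,1}\tfrac{\epsilon_1+\epsilon_2}{2}$. The heart of the computation is evaluating $D_{\rho,z}^{m+1}\phi_a z^k$ (see~\eqref{def:DAz}). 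Writing $D_{\rho,z}=D_0+\rho$ with $D_0=(\mu+\tfrac32)z+z^2\pd_z$ diagonal, one has $D_0(\phi_0 z^j)=(j+2)\phi_0 z^{j+1}$ and $D_0(\phi_1 z^j)=(j+1)\phi_1 z^{j+1}$; because $\rho^2=0$ and $\rho$ lands in the $\phi_0$-line, only the single-$\rho$ terms survive, so $D_{\rho,z}^{m+1}=D_0^{m+1}+\sum_{i=0}^{m}D_0^{i}\rho\,D_0^{m-i}$. The purely diagonal pieces produce the rising-factorial coefficients $\tfrac{(k+m+1)!}{k!}$ and $\tfrac{(k+m)!}{(k-1)!}$ for the $\phi_0\!\to\!\phi_0$ and $\phi_1\!\to\!\phi_1$ linear terms, respectively.

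The main obstacle, and the most delicate bookkeeping, is the off-diagonal $\phi_1\!\to\!\phi_0$ transition created by the single $\rho$. Collecting $\sum_{i=0}^{m}D_0^{i}\rho\,D_0^{m-i}\phi_1 z^{k-1}$ yields the harmonic-type coefficient $2\tfrac{(k+m)!}{(k-1)!}\sum_{i=0}^{m}\tfrac{1}{k+i}$ responsible for the double sum in the statement, and the edge case $k=0$ (where the diagonal $\phi_1$-evolution degenerates and only $i=m$ survives) must be handled separately to produce the term $2m!\,\tilde t_0^1\,\pd/\pd t^0_{m-1}$. For the quadratic ($\hbar^2$) term I would feed the negative-power monomials $\phi^a z^{-k-1}$ through the same expansion: the $a=1$ contribution vanishes identically because the diagonal factor $\prod_{p=0}^{m}(1-k+p)$ contains a zero for $1\leq k\leq m$, while for $a=0$ only the index $i=m-k$ survives, contributing $2(-1)^k k!(m-k)!$; the sign $(-1)^k$ then cancels against the $(-1)^k$ in~\eqref{eqn:Lm-des}, yielding $\hbar^2\sum_{k=1}^{m-1}k!(m-k)!\,\pd^2/\pd t^0_{k-1}\pd t^0_{m-k-1}$.

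Finally I would pin down the constant. Here $\str(\mu^2-\tfrac14)=0$ since $(\tfrac12-a)^2=\tfrac14$ for $a=0,1$, and from Definition~\ref{defnofm0}(2) (using $E|_{\tau=0}=\phi_0$ and $\dvac_2^{\tau}|_{\tau=0}=-\tfrac{(\epsilon_1-\epsilon_2)^2}{2}\phi_1$) one gets $\cdes=-\tfrac{(\epsilon_1-\epsilon_2)^2}{4}$, which is $\tau$-independent by Lemma~\ref{lem:cm}. Combining this with $\tfrac{1}{2\hbar^2}\eta(\rho\tilde t_0,\tilde t_0)=\tfrac{1}{\hbar^2}\big(t_0^1+\tfrac{\epsilon_1+\epsilon_2}{2}\big)^2$ completes the square to give the claimed constant $\tfrac{1}{\hbar^2}(t_0^1+\epsilon_1)(t_0^1+\epsilon_2)$, finishing the identification $L^{\eGdd}_m=L_m+\tfrac{\delta_{m,0}}{\hbar^2}\cdes$.
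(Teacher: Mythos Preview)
Your proposal is correct and follows essentially the same approach as the paper: invoke Theorem~\ref{thm:vira-des-intro}(3) for the existence of the constraints (using semisimplicity, $\delta=3$, $\virm=0$), then identify $L^{\eGdd}_m=L_m+\tfrac{\delta_{m,0}}{\hbar^2}\cdes$ by computing $D_{\rho,z}^{m+1}\phi_a z^k$, $\str(\mu^2-\tfrac14)=0$, and $\cdes=-\tfrac{(\epsilon_1-\epsilon_2)^2}{4}$. Your treatment is in fact more detailed than the paper's (you spell out the single-$\rho$ expansion, the $k=0$ edge case, and the survival of $i=m-k$ in the quadratic term), but the logic is the same.
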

\begin{proof}
The proof is similar as that for the Virasoro constraints of the negative $r$-spin theory
and we have $L^{\eGdd}_{m}=L_m+\frac{\delta_{m,0}}{\hbar^2}\cdes$.
To see the explicit formula of $L^{\eGdd}_m$, we note firstly $\eta(\rho^{m+1}\tilde t_0,\tilde t_0)=2\delta_{m,0}\cdot (\tilde t_0^1)^2$ and 
$\cdes=\frac{1}{2}\eta(E,\dvac^{\tau}_2)=-\frac{(\epsilon_1-\epsilon_2)^2}{4}$.
Then the explicit formula of  $L^{\Gdd}_m$ follows from 
$$\textstyle
D_{\rho,z}^{m+1}\phi_az^{k}=(k+2-a)_{m+1}\cdot \phi_az^{k+m+1}
+2\delta_{a,1}\cdot \sum_{i=1}^{m+1}\frac{(k+1)_{m+1}}{(k+i)}\phi_0z^{k+m},
$$
where $(x)_{m+1}=(x)(x+1)\cdots(x+m)$ and for $-m-1\leq k\leq -1$, the summation in the formula should be understood as $(-1)^{-k-1}(k+1+m)!(-k-1)!$.
\end{proof}
The following Proposition is a direct consequence of the Virasoro constraints of $\cD^{\eGdd}$:
\begin{proposition} \label{prop:Gdd-eGdd}
Let $Z^{\Gdd}$ be the exponential of the generating function for the Grothendieck's dessins d'enfants \eqref{FGDD}, and let $\cD^{\eGdd}$ be the total descendent potential of the calibrated CohFT introduced in \S \ref{CohFTeGdd} and \S \ref{SeGdd}. 
Then $\cD^{\eGdd}$ extends the function  $Z^{\Gdd}$. Namely, when we take $t^1_k=0$ and $t_k^0=k!\, p_{k+1}$ for $k\geq 0$ in $\cD^{\eGdd}$,
we have
$$
\cD^{\eGdd}(u,v,{\bf t};\hbar)/\cD^{\eGdd}(u,v, {\bf 0};\hbar)=Z^{\Gdd}(u,v,{\bf p};\hbar).
$$
\end{proposition}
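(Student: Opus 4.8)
The plan is to show that the generalized Virasoro constraints of Proposition~\ref{prop:vira-dessin}, once restricted to the slice $\{t^1_k=0\}$ and rewritten in the variables $p_{k+1}=t^0_k/k!$, collapse \emph{exactly} onto the dessins d'enfants constraints~\eqref{def:LKZm}, and then to invoke the uniqueness of the solution recorded by Kazarian--Zograf.

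First I would start from $L^{\eGdd}_m\cD^{\eGdd}=0$ for $m\geq 0$ and restrict this identity to the slice $\{t^1_k=0:k\geq 0\}$. The key observation is that every term of $L^{\eGdd}_m$ which differentiates in a $t^1$-direction carries a coefficient proportional to $t^1_k$ with $k\geq 1$ (namely $\sum_{k\geq 1}\frac{(k+m)!}{(k-1)!}t^1_k\,\pd_{t^1_{k+m}}$), so it evaluates to zero on the slice irrespective of the value of the transverse derivatives $\pd_{t^1_l}\cD^{\eGdd}$. Likewise the term $2\sum_{k\geq 1}\sum_{i}\frac{(k+m)!}{(k-1)!(k+i)}t^1_k\,\pd_{t^0_{k+m-1}}$ vanishes, while in the surviving terms the shifted variable $\tilde t^1_0$ becomes the constant $\tfrac{\epsilon_1+\epsilon_2}{2}$ and $(t^1_0+\epsilon_1)(t^1_0+\epsilon_2)$ becomes $\epsilon_1\epsilon_2$. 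Thus the restricted equation is a genuine differential equation for $\cD^{\eGdd}|_{t^1=0}$ in the $t^0$-variables alone, with no transverse derivative surviving.

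Next I would perform the factorial change of variables $t^0_k=k!\,p_{k+1}$, so that $\pd_{t^0_k}=\tfrac{1}{k!}\pd_{p_{k+1}}$, and compute each surviving coefficient. A reindexing $j=k+1$ turns $\sum_{k\geq0}\frac{(k+m+1)!}{k!}\tilde t^0_k\,\pd_{t^0_{k+m}}$ into $\sum_{j\geq1}(j+m)(p_j-\delta_{j,1})\pd_{p_{j+m}}$ (the shift $\tilde t^0_0=t^0_0-1$ producing precisely the $-\delta_{j,1}$), the term $2m!\,\tilde t^1_0\,\pd_{t^0_{m-1}}$ into $m(\epsilon_1+\epsilon_2)\pd_{p_m}$, and the double-derivative term into $\hbar^2\sum_{k=1}^{m-1}k(m-k)\pd_{p_k}\pd_{p_{m-k}}$. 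Identifying $u=\epsilon_1$ and $v=\epsilon_2$ (so that $\epsilon_1\epsilon_2=uv$ and $\epsilon_1+\epsilon_2=u+v$), the reduced operator is exactly $L^{\Gdd}_m$ of~\eqref{def:LKZm}. This matching of combinatorial coefficients under the factorial rescaling is the bulk of the computation, though each step is elementary.

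Finally, writing $c=\cD^{\eGdd}(u,v,{\bf 0};\hbar)$, which is a constant with respect to the $p$-variables, the normalized series $Z:=\cD^{\eGdd}|_{t^1=0,\,t^0_k=k!p_{k+1}}/c$ satisfies $L^{\Gdd}_mZ=0$ for all $m\geq0$ (since $L^{\Gdd}_m$ contains no $u,v,\hbar$-derivatives, the constant $c$ passes freely through) together with $Z|_{p=0}=1$. Since Kazarian--Zograf's result asserts that $Z^{\Gdd}$ is the \emph{unique} solution of this system of constraints with that initial condition, I conclude $Z=Z^{\Gdd}$, which is the claim. The one point requiring genuine care --- the main obstacle --- is the restriction step: one must verify that no transverse $t^1$-derivative survives, i.e. that every $\pd_{t^1}$-term of $L^{\eGdd}_m$ is proportional to some $t^1_k$; everything after that is coefficient bookkeeping and an appeal to uniqueness.
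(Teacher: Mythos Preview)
Your proposal is correct and follows essentially the same approach as the paper's proof: reduce the operators $L^{\eGdd}_m$ on the slice $t^1_k=0$ under the substitution $t^0_k=k!\,p_{k+1}$ to the Kazarian--Zograf operators $L^{\Gdd}_m$, then invoke uniqueness of the solution with initial value $1$. The paper's proof is a two-sentence sketch, while you have supplied the explicit coefficient checks and, importantly, the justification that the restriction is legitimate because the only $\partial/\partial t^1$-terms in $L^{\eGdd}_m$ carry prefactors $t^1_k$ with $k\geq 1$; this is exactly the point the paper leaves implicit.
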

\begin{proof}
Under the condition required in the Proposition, the Virasoro operator $L^{\eGdd}_m$ becomes
$L^{\Gdd}_{m}$ defined by equation~\eqref{def:LKZm}.
The conclusion follows from the uniqueness of the solution $Z^{\Gdd}(u,v,{\bf p};\hbar)$ of the Virasoro constraints $ L^{\Gdd}_m Z^{\Gdd}(u,v,{\bf p};\hbar)=0$, $m\geq 0$, with initial condition $Z^{\Gdd}(u,v,{\bf 0};\hbar)=1$.
\end{proof}

\medskip

\addtocontents{toc}{\protect\setcounter{tocdepth}{0}}

\begin{appendices}
\section{Homogeneity conditions on the descendent side}
\subsection{Existence of the homogeneous $S$-matrix}
\label{sec:hom-S}
In this subsection, we prove the existence of the homogeneous $S$-matrix.
To achieve this, we first do some preparations.
Fix a homogeneous basis $\{\phi_a\}_{a=0}^{N-1}$ of the state space $\Frob$, and denote $\mu(\phi_a)=\mu_a\phi_a$, we introduce two sets 
$$
I_1:=\{\mu_a|a=0,\cdots,N-1\},\qquad
I_2:=\{\mu_a-\mu_b| a,b =0,\cdots,N-1\}.
$$
For $\alpha, \beta\in \mathbb R$, we define 
$$
V_{\alpha}:=\{v\in \Frob |\mu(v)=\alpha\cdot v\},
$$
and
$$
M_\beta:=\{A:\Frob\to \Frob| [\mu,A]=\beta\cdot A\}.
$$
It is clear that $V_\alpha\leqslant \Frob$ (resp. $M_{\beta}\leqslant \End(\Frob)$)
and $V_{\alpha}= \{0\}$ (resp. $M_\beta= \{0\}$) unless $\alpha\in I_1$ (resp. $\beta\in I_2$).
Furthermore, we have decomposition:
$$
\Frob=\oplus_{\alpha\in I_1}V_\alpha,\qquad
\End(\Frob)=\oplus_{\beta\in I_2}M_{\beta}.
$$
We introduce the notation
$$
M^{c}_{\beta}:=\oplus_{\beta'\in I_2\setminus \{\beta\}} M_{\beta'}.
$$
\begin{lemma}\label{lem:decom}
(1). If $A\in M_{\beta}$, then $A^{*}\in M_{\beta}$, where $A^*$ is the  adjoint of $A$ with respect to $\eta$.
(2). The endomorphisim $(\ad_{\mu}-\beta)|_{M^{c}_{\beta}}:M^{c}_{\beta}\to M^{c}_{\beta}$ is an automorphism of $M^{c}_{\beta}$.
\end{lemma}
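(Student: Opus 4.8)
The plan is to derive both statements directly from the two structural facts already in hand: the relation $\mu^{*}=-\mu$ (equivalent to equation~\eqref{eqn:euler-eta}) and the $\ad_\mu$-eigenspace decomposition $\End(\Frob)=\oplus_{\beta\in I_2}M_\beta$ recorded just before the lemma. Neither part requires any new input beyond reorganizing these observations into the form needed to construct the homogeneous $S$-matrix in Appendix~\ref{sec:hom-S}.

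For part (1), I would start from the defining relation $[\mu,A]=\beta A$ and take the $\eta$-adjoint of both sides. Using the anti-multiplicativity $(PQ)^{*}=Q^{*}P^{*}$ together with $\mu^{*}=-\mu$, the left-hand side becomes
$$
(\mu A-A\mu)^{*}=A^{*}\mu^{*}-\mu^{*}A^{*}=-A^{*}\mu+\mu A^{*}=[\mu,A^{*}],
$$
while the right-hand side is simply $\beta A^{*}$ since $\beta$ is a scalar. Hence $[\mu,A^{*}]=\beta A^{*}$, i.e.\ $A^{*}\in M_\beta$. The only point needing care is the super-symmetric setting: since $\mu$ is parity-preserving (even), the super-adjoint rule $(PQ)^{*}=(-1)^{|P||Q|}Q^{*}P^{*}$ introduces no extra sign when $\mu$ is commuted past $A$, so the computation above is unchanged.

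For part (2), I would first check that $\ad_\mu-\beta$ preserves $M^{c}_\beta$ and then exhibit its inverse blockwise. By definition every $A\in M_{\beta'}$ satisfies $\ad_\mu(A)=\beta'A$, so $(\ad_\mu-\beta)$ acts on the summand $M_{\beta'}$ as multiplication by the scalar $\beta'-\beta$. For $\beta'\in I_2\setminus\{\beta\}$ this scalar is nonzero, so $(\ad_\mu-\beta)|_{M_{\beta'}}$ is invertible with inverse $(\beta'-\beta)^{-1}\cdot\mathrm{id}$. Taking the direct sum over all $\beta'\neq\beta$ shows that $(\ad_\mu-\beta)|_{M^{c}_\beta}$ is an automorphism of $M^{c}_\beta$, with inverse acting as $(\beta'-\beta)^{-1}$ on each block $M_{\beta'}$.

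There is no serious obstacle here: both parts are immediate consequences of the eigenspace decomposition, and the mildest subtlety is the sign bookkeeping in the super case for part (1), which is dispatched by noting that $\mu$ is even. The role of the lemma is purely preparatory, so once these two elementary facts are verified the statement is complete.
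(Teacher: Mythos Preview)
Your proof is correct and follows essentially the same approach as the paper: part (1) via $[\mu,A^*]=([\mu,A])^*$ using $\mu^*=-\mu$, and part (2) via the blockwise action of $\ad_\mu-\beta$ on the eigenspace decomposition (the paper simply says ``obvious since the endomorphism is injective''). Your version is more detailed, including the explicit inverse and the super-sign remark, but the underlying argument is identical.
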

\begin{proof}
The first part follows from $[\mu, A^*]=([\mu,A])^{*}$ (because $\mu^*=-\mu$) and the second part is obvious since the endomorphisim is injective. 
\end{proof}
Now we are ready to prove the existence of the homogeneous $S$-matrix:
\begin{proposition}
For a homogeneous CohFT, there always exits an homogeneous $S$-matrix.
\end{proposition}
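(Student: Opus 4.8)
The plan is to exploit the non-uniqueness of the $S$-matrix and gauge $\rho(z)$ into the graded shape dictated by the homogeneity condition~\eqref{eqn:hom-rho}. First I would pin down the gauge freedom: if $S^{\tau}(z)$ and $\tilde S^{\tau}(z)$ both solve the QDE~\eqref{eqn:QDE-S} and the symplectic condition~\eqref{eqn:S-symp} with leading term $\id$, then $\tilde S^{\tau}(z)=S^{\tau}(z)C(z)$ for a $\tau$-independent series $C(z)=\id+O(z^{-1})$ with $C^{*}(-z)C(z)=\id$; indeed the QDE forces $\partial_{\tau^a}C=0$, and symplecticity of both factors forces that of $C$. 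It is convenient to parametrize $C(z)=\exp(a(z))$ with $a(z)=\sum_{k\geq 1}a_kz^{-k}$ \emph{infinitesimally symplectic}, meaning $a_k^{*}=(-1)^{k+1}a_k$; a short computation shows this is exactly equivalent to $C^{*}(-z)C(z)=\id$.

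Next I would record how $\rho$ transforms. Since $C$ is $\tau$-independent, the Euler field annihilates it, and substituting $\tilde S=SC$ into~\eqref{eqn:hom-S} yields
\begin{equation*}
C\,\tilde\rho(z)=\rho(z)C+z^{2}\partial_z C+z[\mu,C].
\end{equation*}
Writing $\rho=\sum_{i\geq 0}\rho_iz^{-i}$, $\tilde\rho=\sum_{i\geq 0}\tilde\rho_iz^{-i}$ and extracting the coefficient of $z^{-l}$ gives a recursion of the shape
\begin{equation*}
\tilde\rho_l=(\ad_\mu-(l+1))\,a_{l+1}+B_l,
\end{equation*}
where $B_l$ depends only on $a_1,\dots,a_l$ and on $\tilde\rho_0,\dots,\tilde\rho_{l-1}$. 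By the decomposition $\End(\Frob)=\oplus_{\beta\in I_2}M_\beta$, the homogeneity condition~\eqref{eqn:hom-rho} is equivalent to requiring $\tilde\rho_l\in M_{l+1}$ for every $l$; since $I_2$ is finite this automatically makes $\rho$ a polynomial in $z^{-1}$ and nilpotent.

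With this reformulation I would construct $a(z)$ order by order. Assume inductively that $\tilde\rho_0,\dots,\tilde\rho_{l-1}$ already lie in $M_1,\dots,M_l$. Decomposing the recursion over the eigenspaces $M_\beta$, the operator $\ad_\mu-(l+1)$ acts as the scalar $\beta-(l+1)$, hence is invertible on $M^{c}_{l+1}$ by Lemma~\ref{lem:decom}(2) and vanishes on $M_{l+1}$. Thus for each $\beta\neq l+1$ I set $a_{l+1}^{(\beta)}=-(\beta-(l+1))^{-1}B_l^{(\beta)}$ to annihilate the $M_\beta$-component of $\tilde\rho_l$, leaving $\tilde\rho_l=B_l^{(l+1)}\in M_{l+1}$, as required; the $M_{l+1}$-component of $a_{l+1}$ is irrelevant to the normalization and may be taken to be zero.

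The main point to verify is that these choices are compatible with the constraint $a_{l+1}^{*}=(-1)^{l}a_{l+1}$ coming from infinitesimal symplecticity. Here Lemma~\ref{lem:decom}(1) is decisive: since $\mu^{*}=-\mu$, the adjoint preserves each $M_\beta$, so $*$ restricts to an involution of $M_\beta$ splitting it into $\pm1$-eigenspaces. Because $\tilde S$ is symplectic, its $\rho$-matrix satisfies $\tilde\rho^{*}(z)=\tilde\rho(-z)$, i.e. $\tilde\rho_l^{*}=(-1)^{l}\tilde\rho_l$; combined with $\rho^{*}(z)=\rho(-z)$ and the induction hypothesis this forces $B_l^{(\beta)}$ into the $(-1)^{l}$-eigenspace of $*|_{M_\beta}$. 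Since the prescribed $a_{l+1}^{(\beta)}$ differs from $B_l^{(\beta)}$ only by the nonzero scalar $-(\beta-(l+1))^{-1}$, it lands in that same $(-1)^{l}$-eigenspace, which is precisely the subspace permitted by infinitesimal symplecticity. Hence $a_{l+1}$ can be chosen symplectically, the induction closes, and $C=\exp(a)$ produces the desired homogeneous $S$-matrix $SC$. The hard part is exactly this matching of the adjoint involution's eigenspaces with the homogeneity normalization; once Lemmas~\ref{lem:decom}(1)--(2) and the symmetry $\rho^{*}(z)=\rho(-z)$ are in place, the remaining work is the routine order-by-order solve sketched above.
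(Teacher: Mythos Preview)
Your approach is essentially the same as the paper's: gauge an arbitrary $S$-matrix by a constant symplectic factor and solve a recursion on the $\ad_\mu$-eigenspaces $M_\beta$ to force $\rho_l\in M_{l+1}$, using Lemma~\ref{lem:decom}(2) for the invertibility of $\ad_\mu-(l+1)$ on $M^{c}_{l+1}$. The paper parametrizes the gauge directly as $A(z)=\id+\sum_k A_k z^{-k}$ and handles symplecticity as a separate verification at the end (using the residual freedom in the $M_k$-components $A'_k$), whereas you encode it from the start via $C=\exp(a)$ with $a$ infinitesimally symplectic; both are standard variants of the same idea.

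Your symplecticity step, however, is circular as written: you invoke ``$\tilde S$ is symplectic, hence $\tilde\rho_l^{*}=(-1)^l\tilde\rho_l$'' to deduce the parity of $B_l$, but symplecticity of $\tilde S$ requires all of $a$ (including $a_{l+1}$) to already have the correct parity, which is precisely what you are trying to establish at this step. The easy fix is to note that $B_l$ is independent of $a_{l+1}$: at stage $l+1$, provisionally set $a_{l+1}=a_{l+2}=\cdots=0$; then $a$ is infinitesimally symplectic by the induction hypothesis on $a_1,\dots,a_l$, so $C$ and $\tilde S=SC$ are genuinely symplectic and $\tilde\rho_l^{*}=(-1)^l\tilde\rho_l$ holds; but with $a_{l+1}=0$ the recursion gives $\tilde\rho_l=B_l$, whence $B_l^{*}=(-1)^lB_l$ unconditionally. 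With this patch your prescribed $a_{l+1}^{(\beta)}=-(\beta-(l+1))^{-1}B_l^{(\beta)}$ does land in the correct $*$-eigenspace and the induction closes.
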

\begin{proof}
We first note that given an $S$-matrix $\tilde S^{\tau}(z)$ (i.e., a solution to the QDE~\eqref{eqn:QDE-S} satisfying the symplectic condition),
there is always an operator $\tilde\rho(z)\in\End(\Frob)[[z^{-1}]]$ such that equation~\eqref{eqn:hom-S} holds for $\tilde S^{\tau}(z)$ ($\tilde \rho(z)$ may not satisfy equation~\eqref{eqn:hom-rho}).
In fact, we can view equation~\eqref{eqn:hom-S} as the definition of the operator $\tilde\rho(z)$, more precisely,
\beq\label{def:rho}
\tilde\rho(z):=-z\mu+z \tilde S^{\tau,*}(-z)\mu \tilde S^{\tau}(z)+z\tilde S^{\tau,*}(-z)\cdot (z\pd_z+E)\tilde S^{\tau}(z).
\eeq
Then by using equations~\eqref{eqn:S-symp}, \eqref{eqn:QDE-S} and \eqref{eqn:euler-qp},
it is straightforward to check that $\tilde\rho(z)$ does not depend on $\tau$ and satisfies $\tilde\rho^*(z)=\tilde\rho(-z)$.

Let $S^{\tau}(z)=\tilde S^{\tau}(z)A(z)$, where $A(z)$ is a constant matrix taking form $A(z)=\id+A_1z^{-1}+\cdots$ and satisfies $A^{*}(-z)A(z)=\id$,
then $S^{\tau}(z)$ gives another $S$-matrix and we define ${\rho}(z)$ by $S^{\tau}(z)$ via equation~\eqref{eqn:hom-S}.
Our goal is to find a matrix $A(z)$ such that $\rho(z)$ satisfies equation~\eqref{eqn:hom-rho}.

By substituting $\tilde S^{\tau}(z)=S^{\tau}(z)A(z)^{-1}$ into equation~\eqref{def:rho},  we have
$$
z\pd_zA(z)=A(z)(\mu+\rho(z)/z)-(\mu+\tilde\rho(z)/z)A(z).
$$
By taking expansion of $A(z)$, $\rho(z)$ and $\tilde\rho(z)$, this equation gives
\beq\label{eqn:select-rho}\textstyle
-kA_k=A_k\mu-\mu A_k+\sum_{i=0}^{k-1}(A_{k-1-i} \rho_{i}-\tilde \rho_i A_{k-1-i}).
\eeq
For $k=0$, the equation is trivial. For $k=1$, the equation reads
$$
\rho_0=\tilde\rho_0+([\mu, A_1]-A_1).
$$
By decomposition $\End(\Frob)=M_{1}\oplus M_{1}^{c}$,
we have $\tilde\rho_0=\tilde\rho'_0+\tilde\rho''_0$ where $\tilde\rho'_0\in M_{1}$ and $\tilde\rho''_0\in M_{1}^{c}$.
Similarly, given an operator $A_1$ we can write $A_1=A'_1+A''_1$ where $A'_1\in M_{1}$ and $A''_1\in M_{1}^{c}$.
By Lemma~\eqref{lem:decom}, there exist $A''_1\in M_{1}^{c}$ ($A'_1$ can be arbitrary) such that $\tilde\rho''_0+([\mu, A''_1]-A''_1)=0$, and we can take $\rho_0=\tilde\rho'_0\in M_1$.
Now we consider $k=2$ case of equation~\eqref{eqn:select-rho}:
$$
\rho_1=\tilde\rho_1-(A_1\rho_0-\tilde\rho_0 A_1)+([\mu, A_2]-2A_2).
$$
Similar as the $k=1$ case, we can take a fixed $A''_2\in M^{c}_2$ and an arbitrary $A'_{2}\in M_2$ such that $\rho_1\in M_2$.
Going on the procedure, on each step we can choose a certain $A''_{k}\in M_{k}^{c}$ and an arbitrary $A'_k\in M_k$ such that $\rho_{k-1}\in M_{k}$. Finally we get an matrix $A(z)$ such that $\rho_i\in M_{i+1}$, $i\geq 0$.
Equivalently, this means $\rho(z)$ satisfies equation~\eqref{eqn:hom-rho}.

To see we can require $A^{*}(-z)A(z)=\id$, we note firstly the selected $A(z)$ satisfies
$$
z\pd_zA^{*}(-z)=A^{*}(-z)(\mu+\tilde\rho(z)/z)-(\mu+\rho(z)/z)A^{*}(-z).
$$
where we have used $\rho^{*}(-z)=\rho(z)$ and $\tilde\rho^{*}(-z)=\tilde\rho(z)$.
Then we have
$$
z\pd_z(A^{*}(-z)A(z))=A^{*}(-z)A(z)(\mu+\rho(z)/z)-(\mu+\rho(z)/z)A^{*}(-z)A(z).
$$
Denote $A^{*}(-z)A(z)=\sum_{k\geq 0}B_kz^{-k}=\sum_{k\geq 0}(B'_k+B''_k)z^{-k}$, where $B'_k\in M_{k}$ and $B''_k\in M_{k}^{c}$, then
$$\textstyle
[\mu,B''_k]-kB''_k=\sum_{i=0}^{k-1}[B''_{k-1-i},\rho_i]
+\sum_{i=0}^{k-1}[B'_{k-1-i},\rho_i].
$$
From the recursion procedure, we can take $A(z)$ such that $B'_k=0$ (since we have freedom to choose each $A'_k$), then we have
$$\textstyle
[\mu,B''_k]-kB''_k=\sum_{i=0}^{k-1}[B''_{k-1-i},\rho_i].
$$
Start from $B''_0=0$ (because $B_0=\id$) and by the second part of Lemma~\ref{lem:decom}, this recursion tells us $B''_k=0$ for all $k\geq 0$.
\end{proof}

We introduce a sufficient condition that one can require $\rho(z)=\rho_0$.
\begin{lemma}
If for each $k\in \mathbb Z_{\geq 1}$ and for each $A\in M_{k+1}$, there is an operator $A'\in M_k$ such that $A=[\rho_0,A']$, then we can take $\rho(z)=\rho_0$.
\end{lemma}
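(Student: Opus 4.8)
The plan is to obtain the reduced $S$-matrix by a further gauge transformation of the homogeneous $S$-matrix constructed above. So I start from a homogeneous $S^{\tau}(z)$ with $\rho(z)=\sum_{i\geq 0}\rho_iz^{-i}$, for which $\rho_i\in M_{i+1}$ and $\rho_i^{*}=(-1)^i\rho_i$; in particular $\rho_0\in M_1$ and $\rho_0^{*}=\rho_0$, and $\ad_{\rho_0}$ does map $M_k$ into $M_{k+1}$ by the Jacobi identity. I look for a constant matrix $A(z)=\id+\sum_{k\geq 1}A_kz^{-k}$ with $A^{*}(-z)A(z)=\id$ such that $\hat S^{\tau}(z):=S^{\tau}(z)A(z)$ has $\hat\rho(z)=\rho_0$. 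Feeding the target $\hat\rho(z)=\rho_0$ and the original $\tilde\rho(z)=\rho(z)$ into the transformation relation \eqref{eqn:select-rho} gives, for each $k\geq 1$, the recursion $(\ad_{\mu}-k)A_k=R_k$ with $R_k=-[\rho_0,A_{k-1}]-\sum_{i=1}^{k-1}\rho_iA_{k-1-i}$, which is solvable for $A_k$ exactly when the $M_k$-component of $R_k$ vanishes.

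First I would carry out the reduction, disregarding the symplectic condition. Since $\rho_0\in M_1$ and $\rho_i\in M_{i+1}$, an easy induction shows $R_k\in M_k$ and $A_k\in M_k$ for every $k$; thus the $M_k^{c}$-part of the $k$-th equation is vacuous and its entire content is the solvability constraint $[\rho_0,A_{k-1}]=-\sum_{i=1}^{k-1}\rho_iA_{k-1-i}\in M_k$. This is the one place the hypothesis is used: surjectivity of $\ad_{\rho_0}\colon M_{k-1}\to M_k$ lets me solve this for $A_{k-1}\in M_{k-1}$ (determined up to $\ker(\ad_{\rho_0}|_{M_{k-1}})$) when processing order $k$. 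Running this for all $k$ produces a gauge $A(z)$ realizing $\hat\rho(z)=\rho_0$.

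The main obstacle is that this $A(z)$ need not be symplectic, and the hypothesis supplies only the freedom used up in the reduction. I would fix this by symplectifying afterwards. Put $B(z):=A^{*}(-z)A(z)=\id+O(z^{-1})$. Using $\mu^{*}=-\mu$, $\rho_0^{*}=\rho_0$ and the $A$-equation, a direct computation yields $z\partial_zB=[B,\mu+\rho_0/z]$ and $B^{*}(-z)=B(z)$; the first identity is exactly the equation defining the stabilizer group of $X_0:=\mu+\rho_0/z$, i.e.\ the matrices $C$ for which $\hat S^{\tau}C$ still has $\rho$-matrix $\rho_0$. Since $B$ is unipotent, $b:=\log B\in\End(\Frob)[[z^{-1}]]$ is well defined of order $O(z^{-1})$; as $D:=z\partial_z+\ad_{X_0}$ is a derivation of the associative algebra with $D(B)=0$, it annihilates every power series in $B$, so $D(b)=0$, i.e.\ $z\partial_zb=[b,X_0]$, and likewise $b^{*}(-z)=b(z)$. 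Then $C(z):=\exp(-\tfrac12 b(z))$ lies in the stabilizer group (hence preserves $\hat\rho=\rho_0$), satisfies $C^{*}(-z)=C(z)$, and gives $C^{*}(-z)B(z)C(z)=\exp(-\tfrac12 b)\exp(b)\exp(-\tfrac12 b)=\id$. Therefore $A(z)C(z)$ is a symplectic gauge with $\hat\rho=\rho_0$, and replacing $A$ by $AC$ finishes the construction.

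A closing remark in the write-up would record the routine facts that right multiplication by a constant matrix preserves the QDE \eqref{eqn:QDE-S}, the symplectic gauge preserves \eqref{eqn:S-symp}, and $\hat\rho(z)=\rho_0$ trivially satisfies the homogeneity equation \eqref{eqn:hom-rho}; so $S^{\tau}AC$ is a genuine homogeneous $S$-matrix with $\rho(z)=\rho_0$.
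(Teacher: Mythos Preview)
Your proof is correct and takes a genuinely different route from the paper's.

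The paper proceeds \emph{iteratively}: starting from a homogeneous $\tilde S^{\tau}(z)$, at the $k$-th step it applies a gauge $A(z)=\id+A_{k-1}z^{-(k-1)}+\cdots$ with $A_1=\cdots=A_{k-2}=0$ and $A_{k-1}\in M_{k-1}$, so that $\rho_i=\tilde\rho_i$ for $i\leq k-2$ and the sole nontrivial equation is $\rho_{k-1}=\tilde\rho_{k-1}+[\rho_0,A_{k-1}]$. Surjectivity of $\ad_{\rho_0}\colon M_{k-1}\to M_k$ then lets one pick $A_{k-1}$ with $\rho_{k-1}=0$. The symplectic condition is enforced at each step by the explicit symmetrization $A_{k-1}=\tfrac12(\tilde A_{k-1}+(-1)^k\tilde A_{k-1}^{*})$, using $\rho_0^{*}=\rho_0$ and $\tilde\rho_{k-1}^{*}=(-1)^{k-1}\tilde\rho_{k-1}$, and then completing the higher $A_l$ to satisfy $A^{*}(-z)A(z)=\id$. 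Repeating kills all $\rho_i$ with $i\geq 1$.

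You instead solve the entire recursion in one pass, observing that $R_k\in M_k$ forces $(\ad_\mu-k)A_k=R_k$ to collapse to $R_k=0$ with $A_k\in M_k$ free, so the $k$-th equation becomes a constraint on $A_{k-1}$ solved by the same surjectivity hypothesis. Your symplectification is then global: from $z\partial_zB=[B,\mu+\rho_0/z]$ and $B^{*}(-z)=B(z)$ you take $C=\exp(-\tfrac12\log B)$ in the stabilizer of $\mu+\rho_0/z$, which is symplectic and preserves $\hat\rho=\rho_0$. This stabilizer/logarithm argument is more conceptual than the paper's step-by-step symmetrization; conversely, the paper's method avoids appealing to well-definedness of $\log$ and the derivation identity $D(\log B)=0$, trading conceptual cleanliness for a more elementary bookkeeping.
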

\begin{proof}
Given a homogeneous $S$-matrix $\tilde S^{\tau}(z)$, let $S^{\tau}(z)=\tilde S^{\tau}(z)A(z)$, where $A(z)$ is a constant matrix taking form $A(z)=\id+A_1z^{-1}+\cdots$ and satisfying $A^{*}(-z)A(z)=\id$
and $z\pd_zA(z)=[A(z),\mu]$ (i.e., $A_k\in M_k$),
then $S^{\tau}(z)$ gives another homogeneous $S$-matrix.
We define $\tilde{\rho}(z)$ (resp. $\rho$) by $\tilde S^{\tau}(z)$ (resp. $S^{\tau}(z)$) via equation~\eqref{def:rho}, then we have
$$
z\pd_zA(z)=A(z)(\mu+\rho(z)/z)-(\mu+\tilde\rho(z)/z)A(z).
$$
By taking expansion of $A(z)$, $\rho(z)$ and $\tilde\rho(z)$, we get for $k\geq 1$,
$$\textstyle
\sum_{i=0}^{k-1}(A_{k-1-i}  \rho_{i}- \tilde \rho_i A_{k-1-i})=0.
$$

For $k=1$, the equation solves $\rho_0=\tilde\rho_0$. In fact, the homogeneity condition~\eqref{eqn:hom-S} gives us
$\E=S_1+[S_1,\mu]+\rho_0$, restricted on the subspace $M_1$, this determines $\rho_0$ by $\E$.
For $k\geq 2$, the equation can be rewritten as follows:
$$\textstyle
 \rho_{k-1}=\tilde\rho_{k-1}+A_1(\tilde\rho_{k-2}-\rho_{k-2})+\cdots +A_{k-1}(\tilde\rho_0-\rho_0) 
+[\tilde\rho_{k-1}, A_1]+\cdots+[\tilde\rho_0, A_{k-1}].
$$
Suppose we have $\rho_i=\tilde \rho_i$ for $i=0,\cdots,k-2$, then we take $A_1=\cdots=A_{k-2}=0$, the equation becomes
$$\textstyle
 \rho_{k-1}=\tilde\rho_{k-1}+[\rho_0, A_{k-1}].
$$
If $[\rho_0, M_{k-1}]=M_{k}$, then there is an operator $\tilde A_{k-1}\in M_{k-1}$ such that $\tilde\rho_{k-1}+[\rho_0,\tilde A_{k-1}]=0$.
Take $A_{k-1}=\frac{1}{2}(\tilde A_{k-1}+(-1)^k\tilde A^{*}_{k-1})$ and take $A_{l}$ for $l\geq k$ such that $A^{*}(-z)A(z)=\id$, one get $\rho_{k-1}=0$.
Repeat the procedure, if $[\rho_0, M_{k-1}]=M_{k}$ for each $k\geq 2$, then we can take $\rho_{k-1}=0$ for $k\geq 2$, i.e., $\rho(z)=\rho_0$.
\end{proof}

\subsection{Some consequences of homogeneity conditions for $J$-function}
\label{sec:cons-hom-J}
In this subsection, we deduce some consequences of homogeneity conditions for $J$-function.
Before showing explicit formulae, we note firstly the operator $\rho(z)$ is nilpotent.
This can be proved as follows: equation~\eqref{eqn:hom-rho} gives us
\beq\label{eqn:com-mu-rho}
\mu\rho_i=\rho_i(\mu+i+1), \qquad i\geq 0.
\eeq
This can be rewritten as $(\rho_i)_a^b(i+1+\mu_a-\mu_b)=0$ for $a,b=0,\cdots,N-1$, 
where $(\rho_i)_a^b=\eta(\phi^b,\rho_i\phi_a)$. Since $|\mu_a-\mu_b|<\infty$, $\rho_i=0$ for $i$ big enough. 
Similarly, by using the commutation relation~\eqref{eqn:com-mu-rho} repeatedly, one can prove $\rho_{i_1}\cdots\rho_{i_m}=0$ for $m$ big enough, especially, this proves $\rho(z)$ is nilpotent.

Now we prove equations~\eqref{eqn:hom-J-positive}--\eqref{eqn:hom-Phi} one by one.
Firstly, denote $\dil(z)=\sum_{j\geq 0}^{n}\dil_j z^j$,
the coefficient of $z^{k+1}$, $k\geq 0$, in equation~\eqref{eqn:hom-J} gives 
$$\textstyle
(k+\mu+\frac{\delta}{2})\dil_{k}=-\sum_{i=0}^{n-k-1}\rho_i \dil_{k+i+1}.
$$
In particular, for $k=n$, we see $(n+\mu+\frac{\delta}{2})\dil_{n}=0$.
For $k=n-1$, we have
$$\textstyle
(n-1+\mu+\frac{\delta}{2})^2\dil_{n-1}=-(n-1+\mu+\frac{\delta}{2})\rho_0 \dil_{n}=-\rho_0(n+\mu+\frac{\delta}{2}) \dil_{n}=0.
$$
By induction on $k$ from a bigger one to a smaller one and by using the commutation relation~\eqref{eqn:com-mu-rho} , one can prove
$(k+\mu+\frac{\delta}{2})^{n+1-k}\dil_{k}=0$, $k=n,n-1,\cdots,0$.
Notice the operator $\mu$ is diagonal under flat basis, we get
$\big(z\pd_z+\mu+\tfrac{\delta}{2}\big)\dil(z)=0$ and $\rho(z)\dil(z)\in \Frob[z^{-1}]$.
Secondly, by considering the coefficient of $z^{0}$ in equation~\eqref{eqn:hom-J}, one has immediately equation~\eqref{eqn:hom-J-0}.
Lastly, consider the coefficient of $z^{-1}$ in equation~\eqref{eqn:hom-J} and by the relation of $\Phi(\tau)$ with $J$-function:
$J^{\tau}_{0,a}=\pd_{\tau^a}\Phi(\tau)$, we have
$$\textstyle
E J^{\tau}_{0,a}=(2-\tfrac{\delta}{2}+\mu_a)J^{\tau}_{0,a}+\eta(\phi_a,\rho_0\tilde\tau)-\sum_{i\geq 0}\eta(\phi_a,\rho_{i+1}\dil_i),
$$
By taking integration, we get equation~\eqref{eqn:hom-Phi}.

\section{Genus-0 and genus-1 ancestor Virasoro constraints}\label{sec:app-anc-vira}
\subsection{Genus-0 ancestor Virasoro constraints}
In this subsection, we study the genus-0 ancestor Virasoro constraints $\mathscr L^{\anc}_{0,m}({\bf s})=0$, $m\geq -1$, where we have precisely
\begin{align*}
	\mathscr L^{\anc}_{0,m}({\bf s})=&\,\frac{1}{2}\eta(\E^{m+1}s_0,s_0)+\corr{[D^{m+1}_{\E,\bar\psi}\bar\psi^{-1}\tilde{\bf s}(\bar\psi)]_{+}}^{\tau}_{0,1}\\
	&\,  +\frac{1}{2}\sum_{k=1}^{m}(-1)^k\corr{\phi_a\bar\psi^{k-1}}^{\tau}_{0,1}
	\corr{[D^{m+1}_{\E,\bar\psi}\phi^a\bar{\psi}^{-k-1}]_{+}}^{\tau}_{0,1}.
\end{align*}

The following Lemma will be useful for us.
\begin{lemma}
	We have the following formula:
	\beq\label{eqn:anc-Doper-dtau}
	\pd_{\tau^i} D^{m+1}_{\E,z}=[z^{-1}\cdot\phi_i*, D^{m+1}_{\E,z}],\qquad m\geq -1.
	\eeq
\end{lemma}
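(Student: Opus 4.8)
The plan is to reduce the identity for general $m$ to the single base case $m=0$ via a telescoping argument, so that the real content is the first-order relation
$$\pd_{\tau^i}D_{\E,z}=[z^{-1}\phi_i*_{\tau},\,D_{\E,z}].$$
Granting this, since $\pd_{\tau^i}$ is a derivation and only $\E$ among the summands of $D_{\E,z}=\E+(\mu+\tfrac32)z+z^2\pd_z$ depends on $\tau$, the Leibniz rule gives $\pd_{\tau^i}(D^{m+1}_{\E,z})=\sum_{k=0}^{m}D^{k}_{\E,z}(\pd_{\tau^i}D_{\E,z})D^{m-k}_{\E,z}$. Substituting $\pd_{\tau^i}D_{\E,z}=[X,D_{\E,z}]$ with $X:=z^{-1}\phi_i*_{\tau}$ turns the $k$-th summand into $D^{k}_{\E,z}XD^{m+1-k}_{\E,z}-D^{k+1}_{\E,z}XD^{m-k}_{\E,z}$, a telescoping difference whose total collapses to $XD^{m+1}_{\E,z}-D^{m+1}_{\E,z}X=[z^{-1}\phi_i*_{\tau},D^{m+1}_{\E,z}]$. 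The case $m=-1$ is trivial, since $D^0_{\E,z}=\id$ and both sides vanish.

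It then remains to establish the base case, which I would prove by computing $[X,D_{\E,z}]$ termwise, treating $\phi_i*_{\tau}$ and $\E=E*_{\tau}$ as ($\tau$-dependent, $z$-independent) endomorphisms of $\Frob$ and $z^{\pm1},z^2\pd_z$ as the evident operators on $\Frob$-valued Laurent series. Since the quantum product is commutative and associative, $\phi_i*_{\tau}$ commutes with $\E$, and $z^{-1}$ commutes with $\E$, so $[X,\E]=0$. A direct check gives $[X,(\mu+\tfrac32)z]=[\phi_i*_{\tau},\mu]$ (the scalar $\tfrac32$ and the powers of $z$ cancel) and $[X,z^2\pd_z]=\phi_i*_{\tau}$ (the $z$-derivative hitting the factor $z^{-1}$ produces the surviving term). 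Adding these yields $[X,D_{\E,z}]=\phi_i*_{\tau}+[\phi_i*_{\tau},\mu]$, which is exactly $\pd_{\tau^i}\E=\pd_{\tau^i}D_{\E,z}$ by equation~\eqref{eqn:diff-E}.

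I do not expect a genuine obstacle here: the only point requiring care is the bookkeeping in the base-case commutators, in particular the commutator with $z^2\pd_z$, where one must remember that $X=z^{-1}\phi_i*_{\tau}$ carries the factor $z^{-1}$ on which $\pd_z$ acts nontrivially. The conceptual input is entirely contained in the Frobenius-manifold identity~\eqref{eqn:diff-E}; once the base case is matched with it, the telescoping does the rest uniformly in $m$.
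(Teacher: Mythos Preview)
Your proposal is correct and follows essentially the same approach as the paper: the paper's proof simply notes that $m=-1$ is trivial, that $m=0$ is equivalent to equation~\eqref{eqn:diff-E}, and that the general case follows immediately from $m=0$. You have spelled out precisely these steps, making the base-case commutator computation and the telescoping argument explicit where the paper leaves them implicit.
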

\begin{proof}
	For $m=-1$, the equation is trivial;
	for $m=0$, the equation is equivalent to equation~\eqref{eqn:diff-E};
	for $m\geq 0$, the equation follows immediately from the case for $m=0$.
\end{proof}

We would like to mention that the genus-0 ancestor Virasoro constraints are nothing but the Frobenius structure and the genus-0 tautological relations. 
Still we give a proof to display the equivalence.
\begin{proof}[Proof of the first part of Theorem~\ref{thm:anc-vira}]
	By considering the coefficients of $\prod_{i=1}^{n}s_{k_i}^{b_i}$, the genus-0 ancestor Virasoro equations are equivalent to the following equations:
	\begin{align}
		&\<D^{m+1}_{\E,\bar\psi}\vac^{\tau}(\bar\psi),\phi_{b_{[n]}}\bar\psi^{k_{[n]}}\>^{\tau}_{0,n+1}\nonumber\\
		=&\,  \delta_{n,2}\delta_{k_1,0}\delta_{k_2,0}\eta(\E^{m+1}\phi_{b_1},\phi_{b_2})
		+\sum_{i=1}^{n}\< D^{m+1}_{\E,\bar\psi}\bar\psi^{k_i-1}\phi_{b_i}, \phi_{b_{[n]-\{i\}}}\bar\psi^{k_{[n]-\{i\}}}\>^{\tau}_{0,n}\nonumber\\
		&\,  +\frac{1}{2}\sum_{k=1}^{m}(-1)^k\sum_{I\sqcup J=[n]}\<\phi_a\bar\psi^{k-1},\phi_{b_I}\bar\psi^{k_I}\>^{\tau}_{0,|I|+1}
		\<D^{m+1}_{\E,\bar\psi}\phi^a\bar{\psi}^{-k-1},\phi_{b_J}\bar\psi^{k_J}\>^{\tau}_{0,|J|+1},\label{eqn:genus-0-anc-vira}
	\end{align}
	where the correlators with insertion $\phi_a\psi^{-k}$, $k>0$, are set to be $0$.
	We call equation~\eqref{eqn:genus-0-anc-vira} the $(n; k_1,\cdots,k_n)$-case of the genus zero $L_{m}^{\tau}$-constraint.
	By dimensional reason, these equations are trivial for $n\leq 1$ or $\sum_{i=1}^{n}k_i\geq n-1$.
	For $n=2$, the remaining equation to be proved is the one with $k_1=k_2=0$, which is
	\begin{align*}
		\<D^{m+1}_{\E,\bar\psi}\vac^{\tau}(\bar\psi),\phi_{b_{1}},\phi_{b_{2}}\>^{\tau}_{0,3}
		=\eta(\E^{m+1}\phi_{b_1},\phi_{b_2}).
	\end{align*}
	One can prove it by noticing $D^{m+1}_{\E,\bar\psi}\vac^{\tau}(\bar\psi)=E^{m+1}+O(\bar\psi)$.
	
	We apply induction on $n$.
	Suppose the $(n;k_1,\cdots,k_n)$-case genus zero $L_m^{\anc}$-constraint holds for a fixed integer $n\geq 2$ and arbitrary integers $k_1,\cdots,k_n$.
	We take derivative with respect to $\tau^{b_{n+1}}$ on such equation.
	By using equations~\eqref{eqn:anc-Doper-dtau}, \eqref{eqn:diff-anc-corr} and \eqref{eqn:QDE-vacuum}, the derivative of the left-hand side of equation~\eqref{eqn:genus-0-anc-vira} gives
	\begin{align}
		&\<D^{m+1}_{\E,\bar\psi}\vac^{\tau}(\bar\psi),\phi_{b_{[n]}}\bar\psi^{k_{[n]}},\phi_{b_{n+1}}\>^{\tau}_{0,n+2}
		-\<D^{m+1}_{\E,\bar\psi}\phi_{b_{n+1}}\bar\psi^{-1},\phi_{b_{[n]}}\bar\psi^{k_{[n]}}\>^{\tau}_{0,n+1}\nonumber\\
		&\,\textstyle  -\sum_{i=1}^{n}\<D^{m+1}_{\E,\bar\psi}\vac^{\tau}(\bar\psi),\phi_{b_{1}}\bar\psi^{k_{1}},\cdots,\phi_{b_{n+1}}*\phi_{b_i}\bar\psi^{k_i-1} ,\cdots,\phi_{b_{n}}\bar\psi^{k_{n}}\>^{\tau}_{0,n+1}.\label{eqn:lhsofdgenus-0-anc-vira}
	\end{align}
	and the derivative of the right-hand side of equation~\eqref{eqn:genus-0-anc-vira} is
	$$\textstyle
	({\rm I})+\sum_{i=1}^{n}({\rm II})_i+\frac{1}{2}\sum_{k=1}^{m}(-1)^k\sum_{I\sqcup J=[n]}\big(({\rm III})^{l}_{I,J}+({\rm III})_{I,J}^{r}\big),
	$$ where 
	\begin{align*}
		({\rm I}) =&\,\delta_{n,2}\delta_{k_1,0}\delta_{k_2,0}\big(\<\phi_{b_3},\phi_{b_2},D_{\E,\bar\psi}^{m+1}\phi_{b_1}\bar\psi^{-1}\>^{\tau}_{0,3}
		+\<\phi_{b_3},\phi_{b_1},D_{\E,\bar\psi}^{m+1}\phi_{b_2}\bar\psi^{-1}\>^{\tau}_{0,3}\big),
	\end{align*}
\begin{align*}
		({\rm II})_{i}=&\,\< D^{m+1}_{\E,\bar\psi}\bar\psi^{k_i-1}\phi_{b_i}, \phi_{b_{[n]-\{i\}}}\bar\psi^{k_{[n]-\{i\}}},\phi_{b_{n+1}}\>^{\tau}_{0,n+1}
		\,-\< D^{m+1}_{\E,\bar\psi}\bar\psi^{k_i-2}\phi_{b_{n+1}}\phi_{b_i}, \phi_{b_{[n]-\{i\}}}\bar\psi^{k_{[n]-\{i\}}}\>^{\tau}_{0,n}\\
		&\,\textstyle -\sum_{1\leq j\leq n; j\ne i}\< D^{m+1}_{\E,\bar\psi}\bar\psi^{k_i-1}\phi_{b_i}, \phi_{b_{[n]-\{i,j\}}}\bar\psi^{k_{[n]-\{i,j\}}},\phi_{b_{n+1}}*\phi_{b_j}\bar\psi^{k_j-1}\>^{\tau}_{0,n},
	\end{align*}
	\begin{align*}
		({\rm III})^{l}_{I,J}=&\,\<\phi_{b_{n+1}},\phi_a\bar\psi^{k-1},\phi_{b_I}\bar\psi^{k_I}\>^{\tau}_{0,|I|+2}
		\<D^{m+1}_{\E,\bar\psi}\phi^a\bar{\psi}^{-k-1},\phi_{b_J}\bar\psi^{k_J}\>^{\tau}_{0,|J|+1}\\
		&\,-\<\phi_{b_{n+1}}*\phi_a\bar\psi^{k-2},\phi_{b_I}\bar\psi^{k_I}\>^{\tau}_{0,|I|+1}
		\<D^{m+1}_{\E,\bar\psi}\phi^a\bar{\psi}^{-k-1},\phi_{b_J}\bar\psi^{k_J}\>^{\tau}_{0,|J|+1}\\
		&\,\textstyle -\sum_{i\in I}\<\phi_a\bar\psi^{k-1},\phi_{b_{I-\{i\}}}\bar\psi^{k_{I-\{i\}}},\phi_{b_{n+1}}*\phi_{b_{k_i}}\bar\psi^{k_i-1}\>^{\tau}_{0,|I|+1}
		\<D^{m+1}_{\E,\bar\psi}\phi^a\bar{\psi}^{-k-1},\phi_{b_J}\bar\psi^{k_J}\>^{\tau}_{0,|J|+1},
	\end{align*}
	\begin{align*}
		({\rm III})^{r}_{I,J}=&\,\<\phi_a\bar\psi^{k-1},\phi_{b_I}\bar\psi^{k_I}\>^{\tau}_{0,|I|+2}
		\<D^{m+1}_{\E,\bar\psi}\phi^a\bar{\psi}^{-k-1},\phi_{b_J}\bar\psi^{k_J},\phi_{b_{n+1}}\>^{\tau}_{0,|J|+2}\\
		&\,-\<\phi_a\bar\psi^{k-1},\phi_{b_I}\bar\psi^{k_I}\>^{\tau}_{0,|I|+1}
		\<D^{m+1}_{\E,\bar\psi}(\phi_{b_{n+1}}*\phi^a)\bar{\psi}^{-k-2},\phi_{b_J}\bar\psi^{k_J}\>^{\tau}_{0,|J|+1}\\
		&\,\textstyle -\sum_{j\in J}\<\phi_a\bar\psi^{k-1},\phi_{b_{I}}\bar\psi^{k_{I}}\>^{\tau}_{0,|I|+1}
		\<D^{m+1}_{\E,\bar\psi}\phi^a\bar{\psi}^{-k-1},\phi_{b_{J-\{j\}}}\bar\psi^{k_{J-\{j\}}},\phi_{b_{n+1}}*\phi_{b_{k_j}}\bar\psi^{k_j-1}\>^{\tau}_{0,|J|+1}.
	\end{align*}
	By applying the $(n;k_1,\cdots,k_i-1,\cdots,k_n)$-case genus zero $L_m^{\anc}$-constraint on the second line of equation~\eqref{eqn:lhsofdgenus-0-anc-vira}, 
	and by some direct computations (note that the second lines of the expression for $({\rm III})^{l}_{I,J}$ and $({\rm III})^{r}_{I,J}$ cancel each other when taking summation over $k$ with a factor $(-1)^k$), one gets the $(n+1;k_1,\cdots,k_n,0)$-case genus zero $L_m^{\anc}$-constraint.
	By the symmetry of insertions, the $(n+1;k_1,\cdots,k_n,k_{n+1})$-case genus zero $L_m^{\anc}$-constraint holds if there is at least one of the $k_i$ equals to $0$. 
	If all $k_i\geq 1$ for all $i=1,\cdots,n+1$, then $\sum_{i=1}^{n+1}k_i\geq n+1\geq n$.
	The equations also trivially hold due to the dimensional reason.
	Thus, by induction, all the $(n; k_1,\cdots,k_n)$-cases of the genus zero $L_{m}^{\tau}$-constraints are proved.
\end{proof}
\subsection{Genus-1 ancestor Virasoro constraints}
We have precisely
\begin{align}
	\mathscr L^{\anc}_{1,m}({\bf s})=&\, -\frac{1}{4} \sum_{a+b=m} \tr(\E^a(\mu+\tfrac{1}{2})\E^b(\mu-\tfrac{1}{2}))
	+\corr{D_{\E,\bar\psi}^{m+1}\bar\psi^{-1}\tilde\bs(\bar\psi)}^{\tau}_{1,1}\nonumber\\
	&\, +\frac{1}{2}\sum_{k=1}^{m}(-1)^k\corr{\phi_a\bar\psi^{k-1}, [D^{m+1}_{\E,\bar\psi}\phi^a\bar{\psi}^{-k-1}]_{+}}^{\tau}_{0,2}\nonumber\\
	&\, +\sum_{k=1}^{m}(-1)^k\corr{\phi_a\bar\psi^{k-1}}^{\tau}_{0,1}\corr{[D_{\E,\bar\psi}^{m+1}\phi^a\bar\psi^{-k-1}]_{+}}^{\tau}_{1,1}. \label{eqn:genus-one-virasoro-1}
\end{align}
Similarly as the genus-0 case, we would like to mention that the genus-1 ancestor Virasoro constraints 
can be deduced from the constraints at ${\bf s}=0$ and the genus-1 tautological relations.
Still we give a proof here.
\begin{proof}[Proof of the second part of Theorem~\ref{thm:anc-vira}]
	Notice that at ${\bf s=0}$, equation~\eqref{eqn:genus-one-virasoro-1} reads
	$$\textstyle
		\<D_{\E,\bar\psi}^{m+1}\vac^{\tau}(\bar\psi)\>^{\tau}_{1,1}=-\frac{1}{4} \sum_{a+b=m} \tr(\E^a(\mu+\tfrac{1}{2})\E^b(\mu-\tfrac{1}{2})).
	$$
	For $m\geq 0$, notice that $D_{\E,z}^{m+1}\vac^{\tau}(z)=[\E^{m+1}\vac^{\tau}(z)]_{\leq 1}+\sum_{a+b=m}\E^a(\mu+\frac{3}{2})\E^b{\bf 1}z+O(z^2)$,
	and by QDE and the homogeneous condition of the vacuum vector, $$\E^{m+1}\vac^{\tau}(z)=E^{m+1}-E^{m}*((\mu+\tfrac{\delta}{2}){\bf 1})z+O(z^2).$$
	By topological recursion relation, and by noticing $(\mu+\frac{\delta}{2}){\bf 1}={\bf 1}-\nabla_{\bf 1}E$, we get equation~\eqref{eqn:vira-genusone}.
	
	Now we consider the coefficient of $\prod_{i=1}^n s_{k_i}^{b_i}$, $n\geq 1$, in equation~\eqref{eqn:genus-one-virasoro-1}, the equation reads
	\begin{align}
		&\,\<\phi_{b_{[n]}}\bar\psi^{k_{[n]}},D_{\E,\bar\psi}^{m+1}\vac^{\tau}(\bar\psi)\>^{\tau}_{1,n+1}\nonumber\\
		=&\, \sum_{i=1}^{n}\<\phi_{b_{[n]-\{i\}}}\bar\psi^{k_{[n]-\{i\}}},[D_{\E,\bar\psi}^{m+1}\phi_{b_i}\bar\psi^{k_i-1}]_{+}\>^{\tau}_{1,n}\nonumber\\
		&\, +\frac{1}{2}\sum_{k=1}^{m}(-1)^k\<\phi_a\bar\psi^{k-1}, [D^{m+1}_{\E,\bar\psi}\phi^a\bar{\psi}^{-k-1}]_{+},\phi_{b_{[n]}}\bar\psi^{k_{[n]}}\>^{\tau}_{0,2+n} \nonumber\\
		&\,  +\sum_{k=1}^{m}(-1)^k\sum_{I\sqcup J=[n]}\<\phi_a\bar\psi^{k-1},\phi_{b_I}\bar\psi^{k_I}\>^{\tau}_{0,|I|+1} \<[D_{\E,\bar\psi}\phi^a\bar\psi^{-k-1}]_{+},\phi_{b_J}\bar\psi^{k_J}\>^{\tau}_{1,|J|+1}.\label{eqn:vira-genus1-trr}
	\end{align}
	For the case with $k_i\geq 1$, $i=1,\dots,n$, by dimensional reason, the equation becomes
	$$\textstyle
	\<\phi_{b_{[n]}}\bar\psi^{k_{[n]}},D_{\E,\bar\psi}^{m+1}\vac^{\tau}(\bar\psi)\>^{\tau}_{1,n+1}
	=\sum_{i=1}^{n}\<\phi_{b_{[n]-\{i\}}}\bar\psi^{k_{[n]-\{i\}}},[D_{\E,\bar\psi}^{m+1}\phi_{b_i}\bar\psi^{k_i-1}]_{+}\>^{\tau}_{1,n}.
	$$
	The only non-trivial cases are $k_i=1$, $i=1,\cdots,n$ or there is a $k_j=2$ and $k_i=1$ for $i\ne j$.
	These equations can be proved by the following genus one topological recursion relations:
	$$\textstyle
	\<\phi_{a_1}\bar\psi,\cdots,\phi_{a_n}\bar\psi\>^{\tau}_{1,n}=\frac{(n-1)!}{24}\sum_{\sigma_1,\cdots,\sigma_n}\<\phi_{\sigma_1},\phi_{a_1},\phi^{\sigma_2}\>^{\tau}_{0,3}\cdots \<\phi_{\sigma_{n}},\phi_{a_n},\phi^{\sigma_1}\>^{\tau}_{0,3},
	$$
	and 
	$$\textstyle
	\<\phi_{a_1}\bar\psi,\cdots,\phi_{a_{n-1}}\bar\psi^2, \phi_{a_n}\>^{\tau}_{1,n}=\frac{(n-1)!}{24}\sum_{\sigma_1,\cdots,\sigma_n}\<\phi_{\sigma_1},\phi_{a_1},\phi^{\sigma_2}\>^{\tau}_{0,3}\cdots \<\phi_{\sigma_{n}},\phi_{a_n},\phi^{\sigma_1}\>^{\tau}_{0,3}.
	$$
	If there is some $k_i=0$, then by similar method as the proof of the first part of Theorem~\ref{thm:anc-vira}, equation~\eqref{eqn:vira-genus1-trr} can be deduced from the ones with less insertions.
\end{proof} \end{appendices}

\end{document}